\newtheorem{theorem}{Theorem}
\newtheorem{condition}{Condition}
\newtheorem{lemma}{Lemma}
\newenvironment{proof}[1][Proof]{\textbf{#1.} }{\ \rule{0.5em}{0.5em}}
\renewcommand{\cite}{\citeasnoun}
\renewcommand{\thepage}{}
\renewcommand{\thefootnote}{\fnsymbol{footnote}}
\renewcommand{\baselinestretch}{1.3}
\begin{document}

\title{A More Robust t-Test\thanks{%
I am thankful for very helpful comments and advice from Angus Deaton, Hank
Farber, Bo Honor\'{e}, Karsten M\"{u}ller and participants at various
workshops. Financial support from the National Science Foundation through
grant SES-1919336 is gratefully acknowledged.}}
\author{Ulrich K. M\"{u}ller \\
Princeton University\\
Department of Economics\\
Princeton, NJ, 08544\vspace{-0.4cm}\vspace*{0.4cm}}
\date{July 2020}
\maketitle

\begin{abstract}
Standard inference about a scalar parameter estimated via GMM amounts to
applying a t-test to a particular set of observations. If the number of
observations is not very large, then moderately heavy tails can lead to poor
behavior of the t-test. This is a particular problem under clustering, since
the number of observations then corresponds to the number of clusters, and
heterogeneity in cluster sizes induces a form of heavy tails. This paper
combines extreme value theory for the smallest and largest observations with
a normal approximation for the average of the remaining observations to
construct a more robust alternative to the t-test. The new test is found to
control size much more successfully in small samples compared to existing
methods. Analytical results in the canonical inference for the mean problem
demonstrate that the new test provides a refinement over the full sample
t-test under more than two but less than three moments, while the
bootstrapped t-test does not.

\textbf{Keywords: }t-statistic, extreme value distribution, refinement
\end{abstract}

\newpage \setcounter{page}{1} \renewcommand{\thepage}{\arabic{page}} %
\renewcommand{\thefootnote}{\arabic{footnote}} 
\renewcommand{\baselinestretch}{1.25} \small \normalsize%

\section{Introduction}

The usual t-test for inference about the mean of a population from an
i.i.d.~sample is a key building block of statistics and econometrics. Not
only does it have many direct applications, but also many other standard
forms of inference reduce to the application of a t-test applied to a
suitably defined population. For example, consider a linear regression with
scalar regressor, $Y_{i}=X_{i}\beta +\varepsilon _{i}$, $\mathbb{E}%
[X_{i}\varepsilon _{i}]=0$. A test of the null hypothesis $H_{0}:\beta
=\beta _{0}$ reduces to a test of $\mathbb{E}[W_{i}]=0$ for $%
W_{i}=(Y_{i}-X_{i}\beta _{0})X_{i}$, and the usual t-statistic computed from
the i.i.d.~sample $W_{i}$ amounts to a specific version of the usual
heteroskedasticity robust test suggested by \cite{White80}. Under clustering
that allows for arbitrary correlations between $\varepsilon _{j}$ for all $%
j\in \mathcal{C}_{i}$, $i=1,\ldots ,n$, the effective observations become $%
W_{i}=\sum_{j\in \mathcal{C}_{i}}(Y_{j}-X_{j}\beta _{0})X_{j}$. In the
presence of additional controls $Y_{i}=X_{i}\beta +Z_{i}^{\prime }\gamma
+\varepsilon _{i}$, the equivalence to the inference for the mean problem
holds approximately after projecting $Y_{i}$ and $X_{i}$ off $Z_{i}$. This
further extends to instrumental variable regression and parameters estimated
by GMM.

The asymptotic validity of standard t-statistic based inference relies on
two arguments. First, the law of large numbers implies that the variance
estimator in the denominator has negligible estimation error. Second, a
central limit theorem applied to the numerator yields approximate normality.
Underlying populations with heavy tails are a threat to both. Even if the
second moment exists, so that t-statistic based inference is asymptotically
justified, large samples might be required before these approximations
become accurate.

The effective sample size in empirical work is often considerably smaller
than the raw number of observations, and not all that large. This can arise
because researchers are interested in inference for smaller subgroups, or
because nonparametric kernel estimators are employed that effectively depend
only on relatively few observations, or because the relevant variation only
stems from a small fraction of observations, such as in studies about rare
events. It is also very common for standard errors to be clustered, reducing
the effective number of independent observations to the number of clusters,
which tends to be only moderately large. What is more, in many applications
clusters are of fairly heterogeneous size (think of the 50 states of the
U.S., say). Even if none of the variables under study are heavy-tailed, a
substantial portion of the parameter sampling variation will then stem from
the randomness of the large clusters, inducing a form of heavy-tailedness in
the resulting $W_{i}$ variables. See Section \ref{sec:clust} for an
illustration.

This paper develops an alternative to the t-test that performs more reliably
when the underlying population has potentially heavy tails. The focus is
exclusively on the case of moderately heavy tails, that is, the first two
moments of $W_{i}$ exist, so that asymptotically, standard t-statistic based
inference is valid. The aim is to devise an inference method that does not
overreject if the underlying population has moderately heavy tails, without
losing much in terms of efficiency if the underlying population has light
tails. The theoretical development only concerns the canonical inference for
the mean problem, but we show how to adapt the procedure to also obtain more
reliable inference about scalar parameters estimated by GMM, including the
case with clustering. In such more general contexts, the quality of standard
inference is poor if the induced $W_{i}$ has heavy tails, for which it is
neither necessary nor sufficient that the variables under study are
individually heavy-tailed.

To describe the key idea, consider the hypothesis test of $H_{0}:\mathbb{E}%
[W_{i}]=0$ against $H_{a}:\mathbb{E}[W_{i}]\neq 0$ based on an i.i.d.~sample 
$W_{i}$, $i=1,\ldots ,n$, from a population $W$ with cumulative distribution
function $F$. For expositional ease, suppose that $F$ has a thin left tail,
but a potentially heavy right tail. For some given $k$, let $\mathbf{W}%
^{R}=(W_{1}^{R},W_{2}^{R},\cdots ,W_{k}^{R})^{\prime }$ be the $k$ largest
order statistics, with $W_{1}^{R}$ the sample maximum. Conditional on $%
\mathbf{W}^{R}$, the remaining \textquotedblleft small\textquotedblright\
observations $W_{i}^{s}$, $i=1,\ldots ,n-k$ are i.i.d.~draws from the
truncated distribution with c.d.f. $F(w)/F(W_{k}^{R})$ for $w\leq W_{k}^{R}$%
. The mean of this truncated distribution under $H_{0}$ is no longer zero,
however, but is given by $-m(W_{k}^{R})<0$, where%
\begin{equation*}
m(w)=-\mathbb{E}[W|W\leq w]=\frac{\mathbb{P}(W>w)\mathbb{E}[W|W>w]}{1-%
\mathbb{P}(W>w)}.
\end{equation*}%
Note that $m(w)$ for $w$ large is determined by the properties of $F$ in its
right tail.

The idea now is to apply three asymptotic approximations. First, invoke
standard extreme value theory to obtain an approximation for the
distribution of $\mathbf{W}^{R}$ in terms of a (joint)\ extreme value
distribution governed by three parameters describing location, scale and
shape. Second, apply the central limit theorem to the conditional
i.i.d.~sample of remaining observations $W_{i}^{s}$ from the truncated (and
hence no longer heavy-tailed) distribution to argue that $%
(n-k)^{-1}\sum_{i=1}^{n-k}W_{i}^{s}$ is approximately normal with mean $%
-m(W_{k}^{R})$ under $H_{0}$ (and arbitrarily different mean under the
alternative $H_{a}$). Third, by the same arguments that justify extreme
value theory, obtain an approximation to $m(w)$ in terms of the three
parameters that govern the distributional approximation of $\mathbf{W}^{R}$.

These approximations lead to a \emph{parametric} approximate joint model of $%
k+1$ statistics: $\mathbf{W}^{R}$ is jointly extreme value, and $%
(n-k)^{-1}\sum_{i=1}^{n-k}W_{i}^{s}$ is normally distributed with a mean
that, under $H_{0}$, depends on $W_{k}^{R}$, and the parameters of the
extreme value distribution. For given $k$, this is a small sample
nonstandard parametric testing problem, and one can construct tests that are
exactly valid under the approximate parametric model. Specifically, we apply
computational techniques similar to those developed in \cite{Elliott15} to
determine a powerful that is of level $\alpha $ in this parametric model.
Once the test is applied to the original mean testing problem, it is no
longer of level $\alpha $ by construction. But the explicit modelling of the
potentially moderately heavy tail via extreme value theory might improve
performance over the usual t-test.

The main theoretical result of this paper corroborates this conjecture by
considering higher order improvements for populations with finite variance,
but that do not possess a third moment, and for which extreme value theory
applies. We consider asymptotics in which $k$ is a fixed number that does
not vary as a function of $n$. In this way, the asymptotics reflect that
moderately large samples only contain limited information about the tail
properties of the underlying population. We show that the approximation
error of the parametric model for $k$ fixed induces an error in the
rejection probability in the mean testing problem that converges to zero
faster than the error in rejection probability of the usual t-test. In that
sense, the new approach yields a refinement over the usual t-test and
provides theoretical support for the usefulness of the new perspective.

A natural alternative to obtain more accurate approximations is to consider
the bootstrap. \cite{Bloznelis03} show that the percentile-t bootstrap
provides a refinement whenever the underlying population has at least three
moments. A second, apparently new theoretical result shows that the
bootstrap \emph{does not} provide a refinement when the underlying
population has between two and three moments.

The new method readily generalizes to the case where both tails are
potentially heavy. The approximate parametric model then consists of $2k+1$
statistics, with $k$ joint extreme value observations from the left tail
governed by three parameters, $k$ extreme value observations from the right
tail governed by their own three parameters, and the conditionally normal
average of the middle observations. Since in most applications, there are no
compelling reasons to assume any constraints between the properties of the
left and right tail, the approximate parametric problem is thus indexed by a
six dimensional nuisance parameter. We use a version of the the algorithm of 
\cite{Elliott15} to numerically determine a powerful test in this parametric
problem for selected values of $k$. The large nuisance parameter space turns
this into a major computational challenge. Once a powerful valid test has
been determined, however, applying it in practice is entirely
straightforward and does not pose any significant computational burden. This
includes its use to obtain more reliable inference about scalar parameters
estimated by GMM with potentially clustered errors; see Section \ref{sec:GMM}
for details.

Our preferred default method uses $k=8$ and is appropriate when the sample
consists of at least 50 independent clusters or observations.\footnote{%
We also provide an alternative, even more robust test for $k=4$ that is
applicable to samples with as few as $25$ indepedent clusters or observation.%
} The tests were determined for various significance levels, enabling the
construction of confidence intervals at the 90\%, 95\% and 99\% level via
test inversion, and the computation of p-values. Corresponding tables and
STATA code is provided in the replication files.

Monte Carlo simulations show that the new approach leads to much better size
control in moderately large samples compared to existing methods, at fairly
small cost in terms of average confidence interval length for thin-tailed
populations. This is true in the canonical inference for the mean case, as
predicted by the theory, but also when comparing two means, and for
inference about regression coefficients under clustering. In one design, the
clusters are Metropolitan Statistical Areas, which are fairly heterogeneous
in size. As discussed above, this heterogeneity induces the resulting $W_{i}$
to be quite heavy-tailed, which leads to poor performance of standard
cluster robust inference. A moderately large number of heterogenous clusters
(say, no more than 100 or 200) is quite common in empirical work, making the
new approach a potentially attractive alternative in such settings.

The remainder of the paper is organized as follows. The next section
discusses the related literature and provides a review of known results from
extreme value theory and the approximate distribution of t-statistics.
Section 3 contains the new theoretical results in the inference for the mean
problem. Section 4 provides details on the construction of the new test.
Small sample simulation results are reported in Section 5. Section 6
concludes.

\section{\label{sec:bckgrnd}Background}

\subsection{Relationship to Literature}

The new method \textquotedblleft robustifies\textquotedblright\ the usual
t-test in the sense of providing more reliable inference under moderately
heavy tails. The classic robustness literature (see, for instance, \cite%
{Huber96} for an overview) is based on a very different notion; in this
literature, it is assumed that the observations are contaminated, with a
small fraction not stemming from the population of interest. This in turn
raises the question what type of estimands can still be reliably learned
about, and how to do so efficiently. In contrast, the estimands considered
here are defined relative to the distribution that generated the data. Which
of these views is appropriate depends on the application, and in particular
whether relatively extreme observations are part of the population of
interest, or rather are induced by measurement errors. It is also possible
to combine the approaches, such as applying the new test in a regression of
winsorized variables (with the estimand then defined relative to a
winsorized population).

Even under the assumption that the data is entirely uncontaminated,
informative inference requires assumptions beyond the existence of moments:
The classic impossibility result of \cite{Bahadur56} shows that one cannot
learn about the population mean from i.i.d.~samples of any size, even if all
moments are assumed to exist. The substantial assumption pursued here is
that the population tails are such that extreme value theory provides
reasonable approximations. This effectively amounts to an assumption that
the tails of the underlying distribution are approximately (generalized)
Pareto. Given the theoretical prevalence and empirical success of extreme
value theory for learning about the tail of distributions (for overviews and
references, see, for instance, \cite{Embrechts97} or \cite{DeHaan07}), this
seems a reasonably general starting point, especially given that some
assumption must be made. What is more, the approximate Pareto tail is only
imposed in the extreme tail with approximate mass of $k/n$ for $k$ fixed,
which is enough to ensure that the largest (and smallest) $k$ observations
are governed by extreme value theory.

\cite{Muller17} pursue this \textquotedblleft fixed-$k$\textquotedblright\
approach for the purpose of inference about tail properties, such as extreme
quantiles. In contrast, the remaining literature on the modelling of tails
considers asymptotics where $k=k_{n}$ diverges with the sample size. In
large samples, $k_{n}$ diverging asymptotics allow for consistent estimation
of tail properties, at least pointwise for a fixed population. In practice,
though, the approximations generated from $k_{n}$ diverging asymptotics are
not very useful for, say, samples of size $n=50$ or $n=100$, as there are
only a handful of observations that can usefully be thought of as stemming
from the tail, so that any approximation that invokes \textquotedblleft
consistency\textquotedblright\ of tail property estimators becomes
misleading.

The separate analysis of the largest and remaining terms of a sum of
independent random variables goes back to at least \cite{Csorgo88}; also see 
\cite{Zaliapin05}, \cite{Kratz2014} and \cite{Mueller19}. The relatively
closest precursors to this work are Peng (2001, 2004)\nocite{Peng01}\nocite%
{Peng04}\ and \cite{Johansson03}. These authors are concerned with inference
about the mean from an i.i.d.~sample under very heavy tails, that is, the
underlying population has less than two moments. For such populations, the
usual t-statistic does not converge to a normal distribution. Peng (2001,
2004) and \cite{Johansson03} suggest estimating the contribution of the two
tails to the overall mean by consistently estimating the tail Pareto
parameters using the smallest and largest $k_{n}$ observations, with $k_{n}$
diverging, and combining those estimates with the estimate of the mean of
the remaining middle observations.

Another approach to overcome the \cite{Bahadur56} impossibility result is to
assume bounded support, with known bounds. \cite{Romano00}, \cite{Schlag07}
and \cite{Gossner13} derive corresponding methods.

\subsection{\label{sec:evt}Extreme Value Theory}

Let $W_{1}^{R}\geq W_{2}^{R}\geq ...\geq W_{k}^{R}$ denote the largest $k$
order statistics from an i.i.d.~sample from a population with distribution $%
F $. Suppose the right tail of $F$ is approximately Pareto in the sense that
for some scale parameter $\sigma >0$ and tail index $\xi >0$%
\begin{equation}
\lim_{w\rightarrow \infty }\frac{1-F(w)}{(w/\sigma )^{-1/\xi }}=1
\label{evt1}
\end{equation}%
so that the second moment of $W$ exists if and only if $\xi <1/2$. Then $W$
is in the maximum domain of attraction of the Fr\'{e}chet limit law 
\begin{equation}
n^{-\xi }W_{1}^{R}\Rightarrow \sigma X_{1}  \label{eq:evtmax}
\end{equation}%
where $X_{1}^{-1/\xi }\sim E_{1}$ with $E_{1}$ an exponentially distributed
random variable.

As is well known (see, for instance, Theorem 2.8.2 of \cite{Galambos78}), (%
\ref{eq:evtmax}) implies that extreme value theory also holds jointly for
the first $k\,$order statistics%
\begin{equation}
n^{-\xi }\mathbf{W}^{R}=n^{-\xi }\left( 
\begin{array}{c}
W_{1}^{R} \\ 
\vdots \\ 
W_{k}^{R}%
\end{array}%
\right) \Rightarrow \sigma \mathbf{X}=\sigma \left( 
\begin{array}{c}
X_{1} \\ 
\vdots \\ 
X_{k}%
\end{array}%
\right) .  \label{evt_con}
\end{equation}%
The distribution of $\mathbf{X}$ satisfies $\{X_{j}^{-1/\xi
}\}_{j=1}^{k}\sim \{\sum_{l=1}^{j}E_{l}\}_{j=1}^{k}$, where $E_{l}$ are
i.i.d.~exponential random variables.

Since the new theoretical results of this paper concern rates of
convergence, a suitable strengthening of the approximate Pareto tail
assumption (\ref{evt1})\ is needed. \cite{Falk04} define the $\delta $%
-neighborhood of the Pareto distribution with index $\xi $ as follows.

\begin{condition}
\label{cnd:pareto_dens}For some $\delta ,w_{0}>0$, $F$ admits a density for $%
w>w_{0}$ of the form%
\begin{equation}
f(w)=(\xi \sigma )^{-1}(\frac{w}{\sigma })^{-1/\xi -1}(1+h(w))
\label{eq:cond1}
\end{equation}%
with $|h(w)|$ uniformly bounded by $Cw^{-\delta /\xi }$ for some finite $C.$
\end{condition}

Theorem 5.5.5 of \cite{Reiss89} shows that under Condition 1, (\ref{evt_con}%
) provides accurate approximations in the sense that%
\begin{equation}
\sup_{B}|\mathbb{P}(n^{-\xi }\mathbf{W}^{R}\in B)-\mathbb{P}(\sigma \mathbf{X%
}\in B)|=O(n^{-\delta })  \label{tvd_conv}
\end{equation}%
for $\delta \leq 1$, where the supremum is taken over all Borel sets $%
B\subseteq 
\mathbb{R}
^{k}$.

Many heavy-tailed distributions satisfy Condition 1: for the right tail of a
student-t distribution with $\nu $ degrees of freedom, $\xi =1/\nu $ and $%
\delta =2\xi $, for the tail of a Fr\'{e}chet or generalized extreme value
distribution with parameter $\alpha $, $\xi =1/\alpha $ and $\delta =1,$ and
for an exact Pareto tail, $\delta $ may be chosen arbitrarily large. But
there also exist heavy-tailed distributions in the domain of attraction of a
Fr\'{e}chet limit law that do not satisfy Condition 1, such as density of
the form (\ref{eq:cond1}) with $h(x)=1/\log (1+x)$, for example. Under some
additional regularity conditions, Theorem 3.2 of \cite{Falk93} shows
Condition 1 to be necessary to obtain an error rate of extreme value
approximations of order $n^{-\delta }$ for $\delta >0$. Roughly speaking,
Condition 1 thus formalizes the assumption that extreme value theory
provides accurate approximations.

\subsection{\label{sec:Treview}Approximations to the t-Statistic}

Let $T_{n}$ be the usual t-statistic computed from an i.i.d.~sample $%
W_{1},\ldots ,W_{n}$, where $W_{i}\sim W$. If $\mathbb{E}[W]=0$ and $\mathbb{%
E}[W^{2}]<\infty $, then $T_{n}\Rightarrow \mathcal{N}(0,1).$ A seminal
paper by \cite{Bentkus93} establishes a bound on the rate of this
convergence which does not require the third moment of $W$ to exist. In
particular, \cite{Bentkus93} show that for some $C>0$ that does not depend
on $F$, and $\mathbb{E}[W^{2}]=1$,%
\begin{equation}
\sup_{t}|\mathbb{P}(T_{n}<t)-\Phi (t)|\leq C\mathbb{E}[W^{2}\mathbf{1}%
[|W|>n^{1/2}]]+Cn^{-1/2}\mathbb{E}[|W|^{3}\mathbf{1}[|W|\leq n^{1/2}]]
\label{BG_bound}
\end{equation}%
where $\Phi (t)=\mathbb{P}(Z<t)$, $Z\sim \mathcal{N}(0,1)$. (The explicit
claim of uniformity with respect to $F$ is made in \cite{Bentkus96}.) This
result is key for the new theoretical results of this paper.

To put (\ref{BG_bound}) into perspective, recall that the classic
Berry-Esseen bound shows a $n^{-1/2}$ rate of the approximation $%
Z_{n}=n^{-1/2}\sum_{i=1}^{n}W_{i}/\sqrt{\mathbb{E}[W^{2}]}\Rightarrow 
\mathcal{N}(0,1)$, with a constant that involves the third moment of $W$. If 
$\mathbb{E}[|W|^{3}]$ does not exist, then the classic Berry-Esseen bound is
inapplicable. Theorem 5 of \cite{Petrov75} provides the analogue of (\ref%
{BG_bound}) with $T_{n}$ replaced by $Z_{n}$. The contribution of \cite%
{Bentkus93} is thus to show that the random norming inherent in the
t-statistic does not affect Petrov's (1975) result.

Subsequent research by \cite{Hall04} provides a sharp bound on the rate of
convergence: If $\mathbb{E}[W^{2}]<\infty $, their results imply that%
\begin{equation}
\frac{\sup_{t}|\mathbb{P}(T_{n}<t)-\Phi (t)|}{n\mathbb{P}%
(|W|>n^{-1/2})+n^{1/2}|\mathbb{E}[WA_{n}]|+n^{-1/2}\mathbb{E}%
[|W|^{3}A_{n}]+n^{-1}\mathbb{E}[|W|^{4}A_{n}]}  \label{HW_bound}
\end{equation}%
with $A_{n}=\mathbf{1}[|W|\leq n^{1/2}]$ is bounded away from zero and
infinity uniformly in $n$.

A final relevant result from the literature concerns the bootstrap
approximation to the distribution of the t-statistic. Let $\mathbf{W}%
=(W_{1},\ldots ,W_{n})$, and let $T_{n}^{\ast }$ be a bootstrap draw of $%
T_{n}$ from the demeaned empirical distribution of $W_{i}$, conditional on $%
\mathbf{W}$. \cite{Bloznelis03} show that if $F$ is non-lattice and $\mathbb{%
E}[|W|^{3}]<\infty $, then%
\begin{equation}
\sup_{t}|\mathbb{P}(T_{n}^{\ast }<t|\mathbf{W})-\mathbb{P}%
(T_{n}<t)|=o(n^{-1/2})\text{\ a.s.}  \label{boot3}
\end{equation}%
while, for $\mathbb{E}[W^{3}]\neq 0$, $\lim \inf_{n\rightarrow \infty
}n^{1/2}\sup_{t}|\mathbb{P}(T_{n}<t)-\Phi (t)|>0$. In other words, as long
as $W$ has finite non-zero third moment, the error in the bootstrap
approximation to the distribution of the t-statistic is of smaller order
than the normal approximation, and the bootstrap provides a refinement over
the usual t-test.

\section{New Theoretical Results}

To ease exposition, we focus in this section on the case where the left tail
of $W$ is light, as in the introduction. The analogous results also hold
when both tails are moderately heavy with tail index smaller than $1/2$; we
provide an analogue of Theorem \ref{thm_main} in Appendix \ref{sec:gen2tails}%
.

\subsection{\label{sec:bootThm}Properties of Bootstrapped t-Statistic under $%
1/3<\protect\xi <1/2$}

\begin{theorem}
\label{thm:boot}Suppose (\ref{evt1})\ holds for $1/3<\xi <1/2$, and $%
\int_{-\infty }^{0}|w|^{3}dF(w)<\infty $. Then under $\mathbb{E}[W]=0$

(a) $\lim \inf_{n\rightarrow \infty }n^{1/(2\xi )-1}\sup_{t}|\mathbb{P}%
(T_{n}<t)-\Phi (t)|>0$ and

(b) $n^{3(1/2-\xi )}\sup_{t}|\mathbb{P}(T_{n}^{\ast }<t|\mathbf{W})-\Phi
(t)|=O_{p}(1).$
\end{theorem}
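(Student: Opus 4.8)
The plan is to treat parts (a) and (b) with the two sharp t-statistic bounds quoted in Section~\ref{sec:Treview}, namely the Bentkus--G\"otze/Petrov bound (\ref{BG_bound}) and the sharp Hall--Wang expression (\ref{HW_bound}), applied both to the population $F$ (for part (a)) and to the demeaned empirical distribution $\hat F$ conditional on $\mathbf{W}$ (for part (b)).

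For part (a), the idea is that under (\ref{evt1}) with $1/3<\xi<1/2$ the dominant term in the Hall--Wang denominator (\ref{HW_bound}) is $n\,\mathbb{P}(|W|>n^{1/2})$. Since the left tail has a finite third moment, $\mathbb{P}(W<-n^{1/2})=o(n^{-3/2})$, so only the right tail matters, and (\ref{evt1}) gives $\mathbb{P}(W>n^{1/2})\sim (n^{1/2}/\sigma)^{-1/\xi}=\sigma^{1/\xi}n^{-1/(2\xi)}$, whence $n\,\mathbb{P}(|W|>n^{1/2})\asymp n^{1-1/(2\xi)}$. I would then check that the remaining three terms in the denominator are of strictly smaller order: $n^{1/2}|\mathbb{E}[WA_n]|=n^{1/2}|\mathbb{E}[W\mathbf{1}[|W|>n^{1/2}]]|$ is, up to constants, $n^{1/2}\cdot n^{1/2}\mathbb{P}(W>n^{1/2})\asymp n^{1-1/(2\xi)}$ as well (same order, not smaller — so this is fine, it does not dominate), while $n^{-1/2}\mathbb{E}[|W|^3 A_n]\asymp n^{-1/2}\cdot n^{(3-1/\xi)/2}=n^{1-1/(2\xi)}$ and $n^{-1}\mathbb{E}[|W|^4 A_n]\asymp n^{-1}\cdot n^{(4-1/\xi)/2}=n^{1-1/(2\xi)}$, using that for a Pareto-$\xi$ tail with $\xi>1/3$ the truncated moments $\mathbb{E}[|W|^p\mathbf{1}[|W|\le n^{1/2}]]$ for $p>1/\xi$ grow like $n^{(p-1/\xi)/2}$. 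So in fact \emph{every} term in the denominator is $\asymp n^{1-1/(2\xi)}$, and Hall--Wang gives $\sup_t|\mathbb{P}(T_n<t)-\Phi(t)|\asymp n^{1-1/(2\xi)}=n^{-(1/(2\xi)-1)}$, which is exactly (a). (Alternatively one can get the lower bound directly from (\ref{BG_bound}): the first term $\mathbb{E}[W^2\mathbf{1}[|W|>n^{1/2}]]\asymp n^{1-1/(2\xi)}$ already shows the rate cannot be faster, but one needs Hall--Wang — or a direct Edgeworth/characteristic-function argument — for the matching lower bound, i.e.\ for the $\liminf>0$ direction.)

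For part (b), the strategy is to apply (\ref{BG_bound}) \emph{conditionally} to $\hat F$, the demeaned empirical c.d.f., so that the bootstrapped statistic $T_n^*$ plays the role of $T_n$ with ``$n$'' equal to the bootstrap sample size $n$ and ``$W$'' equal to a draw $W^*$ from $\hat F$. Writing $\bar W=n^{-1}\sum W_i$ and $\hat\sigma^2=n^{-1}\sum(W_i-\bar W)^2$, the bound (\ref{BG_bound}) applied to the standardized bootstrap population gives
\begin{equation*}
\sup_t|\mathbb{P}(T_n^*<t\mid\mathbf{W})-\Phi(t)|\le C\,\hat\sigma^{-2}n^{-1}\!\!\sum_{i:\,|W_i-\bar W|>n^{1/2}}\!\!(W_i-\bar W)^2+C\,\hat\sigma^{-3}n^{-1/2}\cdot n^{-1}\!\!\sum_{i:\,|W_i-\bar W|\le n^{1/2}}\!\!|W_i-\bar W|^3 .
\end{equation*}
Now $\hat\sigma^2\to\mathbb{E}[W^2]\in(0,\infty)$ a.s.\ and $\bar W\to 0$ a.s.\ (finite variance), so the normalizing factors are $O_p(1)$ and can be ignored; it remains to bound the two empirical sums. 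The first sum involves only those $W_i$ exceeding $\approx n^{1/2}$ in absolute value; since the left tail is bounded in third moment (so essentially no left-tail exceedances of $n^{1/2}$), this is governed by the right-tail order statistics, and its expectation is $n^{-1}\cdot n\,\mathbb{E}[W^2\mathbf{1}[|W|>n^{1/2}]]=\mathbb{E}[W^2\mathbf{1}[|W|>n^{1/2}]]\asymp n^{1-1/(2\xi)}=n^{-(1/(2\xi)-1)}$, and $1/(2\xi)-1>3(1/2-\xi)$ is equivalent to $(1-2\xi)/(2\xi)>3(1-2\xi)/2$, i.e.\ $1/(2\xi)>3/2$, i.e.\ $\xi<1/3$ — which is \emph{false} here. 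So on the exponent scale one has $1/(2\xi)-1 \le 3(1/2-\xi)$ exactly when $\xi\ge 1/3$, meaning the first sum is $O_p(n^{-3(1/2-\xi)})$ in expectation; one then upgrades expectation control to an $O_p$ statement by Markov together with, if needed, a truncation of the single largest order statistic (which by (\ref{eq:evtmax}) is $O_p(n^\xi)$, contributing $n^{-1}\cdot n^{2\xi}=n^{2\xi-1}$, and $2\xi-1\le 3(1/2-\xi)=3/2-3\xi$ iff $5\xi\le 5/2$, always true). For the second (truncated third-moment) sum, $n^{-3/2}\mathbb{E}[\sum_{|W_i|\le n^{1/2}}|W_i|^3]=n^{-3/2}\cdot n\,\mathbb{E}[|W|^3\mathbf{1}[|W|\le n^{1/2}]]\asymp n^{-1/2}\cdot n^{(3-1/\xi)/2}=n^{1-1/(2\xi)}=n^{-(1/(2\xi)-1)}$ again, so the same exponent comparison applies. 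Hence both terms are $O_p(n^{-3(1/2-\xi)})$ and (b) follows.

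The main obstacle is the lower bound in part (a): the $\liminf>0$ statement is not delivered by the \emph{upper} bound (\ref{BG_bound}) and genuinely requires the sharpness of the Hall--Wang characterization (\ref{HW_bound}) — i.e.\ that their ratio is bounded away from \emph{zero}, not just from infinity — together with the verification that the $n\,\mathbb{P}(|W|>n^{1/2})$ term (or, equivalently here, the whole denominator) is nonvanishing of the asserted order; one must be careful that no cancellation among the four denominator terms occurs, which is where the precise asymptotics of the truncated moments under Condition~1 / (\ref{evt1}) and the finite-third-moment left-tail assumption do the work. A secondary technical point in part (b) is making the passage from ``expectation of the empirical tail sums is $O(n^{-3(1/2-\xi)})$'' to ``the sums are $O_p(n^{-3(1/2-\xi)})$'' fully rigorous, since $W^2$ and $|W|^3$ truncated at $n^{1/2}$ have heavy-tailed summands; a Markov-inequality argument applied after peeling off the finitely many ($k$, say) largest order statistics — whose joint behavior is controlled by (\ref{evt_con}) — and bounding the remainder by its mean, handles this.
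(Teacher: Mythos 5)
Your part (a) is essentially the paper's argument: all four terms in the Hall--Wang denominator are nonnegative and the first one alone is $\gtrsim n\,\mathbb{P}(W>n^{1/2})\asymp n^{1-1/(2\xi)}$ under (\ref{evt1}), so the two-sided sharpness of (\ref{HW_bound}) delivers the $\liminf$ directly (your worry about ``cancellation'' is moot since every denominator term is $\geq 0$).

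Part (b), however, has a genuine gap, and it sits exactly at the conceptual heart of the theorem. You bound both empirical sums in the conditional Bentkus--G\"{o}tze bound by their \emph{expectations}, which are each $\asymp n^{1-1/(2\xi)}$; you then correctly observe that $1/(2\xi)-1\leq 3(1/2-\xi)$ for $\xi\geq 1/3$, but draw the wrong conclusion from it. That inequality means $n^{1-1/(2\xi)}\gg n^{-3(1/2-\xi)}$, i.e.\ the expectation is of \emph{larger} order than the target rate, so Markov's inequality only yields $O_{p}(n^{1-1/(2\xi)})$ --- which is the same order as the error in part (a) and therefore insufficient to establish (b) or the no-refinement conclusion (the triangle-inequality argument would then permit cancellation). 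The point the theorem is making is precisely that these heavy-tailed empirical sums are \emph{typically much smaller than their means}, because the means are driven by rare samples containing an observation of order $n^{1/2}$, whereas in nearly all samples the largest observation is only of order $n^{\xi}$. The paper's proof exploits this: (i) since $\max_{i}|\bar{W}_{i}|=O_{p}(n^{\xi})=o_{p}(n^{1/2})$ by (\ref{eq:evtmax}), the censored-second-moment term $n^{-1}\sum_{i}\bar{W}_{i}^{2}\mathbf{1}[|\bar{W}_{i}|>\sqrt{n}]$ is identically zero with probability tending to one, so it contributes nothing to the rate; and (ii) for the truncated-third-moment term one uses the stable limit law for $n^{-3\xi}\sum_{i}|\bar{W}_{i}|^{3}$ (tail index $3\xi>1$, as in \cite{LePage81}) to get the \emph{distributional} order $n^{-3/2}\sum_{i}|\bar{W}_{i}|^{3}=O_{p}(n^{3\xi-3/2})$, which is strictly smaller than the expectation-based $n^{1-1/(2\xi)}$ for all $\xi\in(1/3,1/2)$. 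Your ``peel off the top $k$ order statistics and bound the remainder by its mean'' remark gestures in the right direction for (i), but as written your main chain of inequalities does not deliver the claimed exponent $3(1/2-\xi)$ for either term.
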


Since $3(1/2-\xi )>1/(2\xi )-1$ for $1/3<\xi <1/2$, the triangle inequality
implies that $\sup_{t}|\mathbb{P}(T_{n}^{\ast }<t|\mathbf{W})-\mathbb{P}%
(T_{n}<t)|=O_{p}(n^{1-1/(2\xi )})$, so Theorem \ref{thm:boot} shows that the
bootstrap does not provide a refinement if the underlying population has
between two and three moments, at least as long as the population has an
approximate Pareto tail. This result is apparently new, but it is not
difficult to prove. From Markov's inequality, $\int_{-\infty
}^{0}|w|^{3}dF(w)<\infty $ implies that also $|W|$ has a Pareto tail with
index $1/3<\xi <1/2$ in the sense of (\ref{evt1}). Part (a) now simply
follows from evaluating the sharp bound on the rate of convergence in (\ref%
{HW_bound}). Part (b) follows from applying the \cite{Bentkus93} bound (\ref%
{BG_bound}) to the empirical distribution of $\bar{W}_{i}=W_{i}-n^{-1}%
\sum_{j=1}^{n}W_{j}$: By (\ref{evt_con}), $n^{-\xi }\max_{i}|W_{i}|$
converges in distribution, so $\max_{i}|\bar{W}_{i}|=O_{p}(n^{\xi })$. Since 
$\xi <1/2$, this implies $n^{-1}\sum_{i=1}^{n}\bar{W}_{i}^{2}\mathbf{1}[|%
\bar{W}_{i}|>\sqrt{n}]\overset{p}{\rightarrow }0$. Furthermore, $|W_{i}|^{3}$
has a Pareto tail of index $3\xi >1$. Thus $n^{-3\xi }\sum_{i=1}^{n}|\bar{W}%
_{i}|^{3}$ converges in distribution to a stable distribution (see, for
instance, \cite{LePage81}, who elucidate the connection between extreme
value theory and stable limit laws), so that $n^{-3/2}\sum_{i=1}^{n}|\bar{W}%
_{i}|^{3}=O_{p}(n^{3\xi -3/2})$, and the result follows.

The existence of three moments, corresponding to a tail index of $\xi <1/3$,
is necessary to obtain the first term of an Edgeworth expansion that
underlies the proof of \cite{Bloznelis03}. More intuitively, recall that
under $\xi <1/3$, the Berry-Esseen bound shows that the central limit
theorem has an approximation quality of order $n^{-1/2}$. Now under (\ref%
{evt1}), $\mathbb{P}(W_{1}^{R}>\sigma \sqrt{n})$ is of order $(1-n^{-1/(2\xi
)})^{n}\approx n^{1-1/(2\xi )}$. Thus, for $\xi >1/3$, the largest
observation is of order $\sqrt{n}$ with a probability that is an order of
magnitude larger than $n^{-1/2}$. Non-normal observations of order $\sqrt{n}$
are not negligible in the central limit theorem, so the rare large values of 
$W_{1}^{R}$ under $\xi >1/3$ are responsible for a deterioration of the
central limit theorem approximation compared to the $\xi <1/3$ case (cf. 
\cite{Hall04}). But from (\ref{eq:evtmax}) $W_{1}^{R}$ is of order $n^{\xi }$
in nearly all samples, so the bootstrap approximation misses this effect,
and systematically underestimates the heaviness of the tail.

\subsection{\label{sec:mainThm}New Asymptotic Approximation}

We first discuss the approximate parametric problem in more detail. Under
the Pareto tail assumption (\ref{evt1}), we find from a straightforward
calculation that for large $w$, $m(w)=-\mathbb{E}[W|W\leq w]\approx \sigma
^{1/\xi }w^{1-1/\xi }/(1-\xi ).$ Let $s_{n}^{2}=(n-k)^{-1}%
\sum_{i=1}^{n-k}(W_{i}^{s}-\bar{W}^{s})^{2}$ be the usual variance estimator
from the $n-k$ smallest observations. With $k$ fixed, $s_{n}^{2}$ still
converges in probability to the unconditional variance of $W$, $s_{n}^{2}%
\overset{p}{\rightarrow }\limfunc{Var}[W]$. Since the ultimate test we
derive is scale invariant, it is without loss of generality to normalize $%
\limfunc{Var}[W]=1$. From the convergence to the joint extreme value
distribution in (\ref{evt_con}), $n^{-\xi }\mathbf{W}^{R}\overset{a}{\sim }%
\sigma \mathbf{X}$, where we write $\overset{a}{\sim }$ for
\textquotedblleft is approximately distributed as.\textquotedblright\
Furthermore, under local alternatives $\mathbb{E}[W]=n^{-1/2}\mu ,$ the
t-statistic 
\begin{equation*}
T_{n}^{s}=\frac{\sum_{i=1}^{n-k}W_{i}^{s}}{\sqrt{(n-k)s_{n}^{2}}}
\end{equation*}%
computed from $\{W_{i}^{s}\}_{i=1}^{n-k}$ is approximately normal with mean $%
\mu -n^{-1/2}m(W_{k}^{R})\approx \mu -n^{-1/2}\sigma ^{1/\xi
}(W_{k}^{R})^{1-1/\xi }/(1-\xi )$. Combining these two approximations yields%
\begin{equation}
\mathbf{Y}_{n}=\left( 
\begin{array}{c}
\mathbf{W}^{R}/\sqrt{(n-k)s_{n}^{2}} \\ 
T_{n}^{s}%
\end{array}%
\right) \overset{a}{\sim }\left( 
\begin{array}{c}
\eta _{n}\mathbf{X} \\ 
Z+\mu -\eta _{n}\frac{1}{1-\xi }X_{k}^{1-1/\xi }%
\end{array}%
\right) =\mathbf{Y}_{n}^{\ast }  \label{Y0_Yn}
\end{equation}%
with $\eta _{n}=\sigma n^{-(1/2-\xi )}$ and $Z\sim \mathcal{N}(0,1)$
independent of $\mathbf{X}$.\ The first $k$ elements of $\mathbf{Y}_{n}$ are
the largest $k$ order statistics divided by the denominator of the $k+1$
element $T_{n}^{s}$, so that $\mathbf{Y}_{n}$ is invariant to changes in
scale $\{W_{i}\}_{i=1}^{n}\rightarrow \{cW_{i}\}_{i=1}^{n}$ for $c>0$. The
approximate parametric model on the right-hand side of (\ref{Y0_Yn}) treats
these as jointly extreme value with scale $\eta _{n}$ and tail index $\xi $,
and conditionally normally distributed with some (negative) mean that is a
function of $X_{k}$ and the parameters $\eta _{n}$ and tail index $\xi $
under $\mu =0$.

As discussed in the introduction, the central idea of this paper is to use
the parametric model $\mathbf{Y}_{n}^{\ast }$ to determine a level $\alpha $
test $\varphi :%
\mathbb{R}
^{k+1}\mapsto \{0,1\}$ of $H_{0}:\mu =0$ that satisfies $\mathbb{E}[\varphi (%
\mathbf{Y}_{n}^{\ast })]\leq \alpha $ by construction for all $\xi <1/2$, at
least for all $n\geq n_{0}$ and some appropriate upper bounds on $\sigma $.
We discuss the construction of such tests in the next section. Any such test 
$\varphi $ may then be applied to the left-hand side of (\ref{Y0_Yn}), $%
\varphi (\mathbf{Y}_{n})$, to test $H_{0}:\mathbb{E}[W]=0$ from the
observations $W_{1},\ldots ,W_{n}$.

Our main theoretical result is the following.

\begin{theorem}
\label{thm_main}For $k>1$, let $r_{k}(\xi )=\frac{3(1+k)(1-2\xi )}{%
2(1+k+2\xi )}$. Suppose Condition 1 holds with $\delta \geq r_{k}(\xi )$, $%
\int_{-\infty }^{0}|w|^{p}dF(w)<\infty $ for all $p>0$ and that $\varphi :%
\mathbb{R}
^{k+1}\mapsto \{0,1\}$ is such that for some finite $m_{\varphi }$, $\varphi
:%
\mathbb{R}
^{k+1}\mapsto \{0,1\}$ can be written as an affine function of $\{\varphi
_{j}\}_{j=1}^{m_{\varphi }}$, where each $\varphi _{j}$ is of the form 
\begin{equation*}
\varphi _{j}(\mathbf{y},y_{0})=\mathbf{1}[\mathbf{y}\in \mathcal{H}_{j}]%
\mathbf{1}[y_{0}\leq b_{j}(\mathbf{y})]
\end{equation*}%
with $b_{j}:%
\mathbb{R}
^{k}\mapsto 
\mathbb{R}
$ a Lipschitz continuous function and $\mathcal{H}_{j}$ a Borel measurable
subset of $%
\mathbb{R}
^{k}$ with boundary $\partial \mathcal{H}_{j}$. For $\mathbf{u}%
=(1,u_{2},\ldots ,u_{k})^{\prime }\in 
\mathbb{R}
^{k}$ with $1\geq u_{2}\geq u_{3}\geq \ldots \geq u_{k}$, let $\mathcal{I}%
_{j}(\mathbf{u})=\{s>0:s\mathbf{u}\in \partial \mathcal{H}_{j}\}$. Assume
further that for some $L>0$, and Lebesgue almost all $\mathbf{u},$ $\mathcal{%
I}_{j}(\mathbf{u})$ contains at most $L$ elements in the interval $%
[L^{-1},\infty )$.

Then under $H_{0}:\mu =0$, for $\frac{1+k}{1+3k}<\xi <1/2$ and any $\epsilon
>0$ 
\begin{equation*}
|\mathbb{E}[\varphi (\mathbf{Y}_{n})]-\mathbb{E}[\varphi (\mathbf{Y}%
_{n}^{\ast })]|\leq Cn^{-r_{k}(\xi )+\epsilon }.
\end{equation*}
\end{theorem}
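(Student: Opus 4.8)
The plan is to bound the total-variation-type distance between the law of $\mathbf{Y}_n$ and that of $\mathbf{Y}_n^{\ast}$ by decomposing the error into three pieces corresponding to the three approximations described in the introduction, and to show that each piece is $O(n^{-r_k(\xi)+\epsilon})$. Concretely, introduce an intermediate object $\tilde{\mathbf{Y}}_n$ in which the top-$k$ block is replaced by the exact joint extreme value vector $\eta_n\mathbf{X}$ but the last coordinate is still the true conditional t-statistic $T_n^s$ given $\mathbf{W}^R$ (so conditionally on $W_k^R$ it is an i.i.d.\ average from the truncated c.d.f.\ $F(\cdot)/F(W_k^R)$, centered at $-m(W_k^R)$ and normed by $s_n$). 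First I would handle the extreme-value block: by Theorem 5.5.5 of \cite{Reiss89} as quoted in (\ref{tvd_conv}), under Condition 1 with parameter $\delta$ the law of $n^{-\xi}\mathbf{W}^R$ is within $O(n^{-\delta})$ in total variation of the law of $\sigma\mathbf{X}$, and since $\delta\ge r_k(\xi)$ this contributes at most $O(n^{-r_k(\xi)})$. The subtlety here is that $\mathbf{Y}_n$ normalizes $\mathbf{W}^R$ by $\sqrt{(n-k)s_n^2}$ rather than by $n^{\xi}$; but $s_n^2\overset{p}{\to}1=\mathrm{Var}[W]$ at a rate controlled by the light left tail and the $\xi<1/2$ right tail (so that $n^{-1}\sum (W_i^s)^2$ has a central-limit-type fluctuation of order $n^{-1/2}$ plus a contribution of order $n^{2\xi-1}$ from the truncated-tail bias of $\mathbb{E}[W^2\mathbf{1}[W\le W_k^R]]$), and one checks that replacing the random norming introduces only a lower-order error after conditioning.

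The heart of the argument is controlling the conditional law of the last coordinate. Conditional on $W_k^R=w$, the observations $W_i^s$ are i.i.d.\ from $F(\cdot)/F(w)$; this distribution has all moments bounded uniformly for $w$ in the relevant range (its right endpoint is $w=O(n^{\xi})$, and $\mathbb{E}[|W|^3\mathbf{1}[W\le w]]=O(w^{3-1/\xi})=O(n^{3\xi-1})$ since $3\xi>1$ is equivalent to... wait, rather $3-1/\xi<0$ fails; in fact the third truncated moment is of order $w^{3-1/\xi}$ which for $\xi>1/3$ blows up, so one must use the two-term \cite{Bentkus93} bound (\ref{BG_bound}) directly). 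Thus I would apply (\ref{BG_bound}) to the centered truncated sample: the first term $\mathbb{E}[(W^s)^2\mathbf{1}[|W^s|>\sqrt n]]$ vanishes once $w=O(n^{\xi})<\sqrt n$, and the second term is $n^{-1/2}\mathbb{E}[|W^s|^3\mathbf{1}[|W^s|\le\sqrt n]]=O(n^{-1/2}\cdot n^{\xi(3-1/\xi)})=O(n^{3\xi-3/2})$ — this is exactly the rate appearing in Theorem \ref{thm:boot}(b), and it is dominated by $n^{-r_k(\xi)}$ precisely when $\xi<\tfrac{1+k}{1+3k}$... no: one checks $3(1/2-\xi)\ge r_k(\xi)$ iff $(1+k+2\xi)\ge (1+k)$, which is always true, so the Berry–Esseen error is never the binding term. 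So conditionally on $\mathbf{W}^R$ the last coordinate is within $O(n^{3\xi-3/2})$ in Kolmogorov distance of a normal with the right variance and mean $\mu-n^{-1/2}m(W_k^R)$. It remains to replace $n^{-1/2}m(W_k^R)=n^{-1/2}m(n^{\xi}X_k\sigma(1+o(1)))$ by its Pareto approximation $\eta_n X_k^{1-1/\xi}/(1-\xi)$: from Condition 1, $m(w)=\sigma^{1/\xi}w^{1-1/\xi}/(1-\xi)\cdot(1+O(w^{-\delta/\xi}))$, and since $w=\Theta(n^{\xi})$ this relative error is $O(n^{-\delta})$, giving an absolute error in the mean of order $n^{-1/2}\cdot n^{\xi(1-1/\xi)}\cdot n^{-\delta}=n^{-1/2+\xi-1+\delta}$, which after dividing by... this is the term that forces $\delta\ge r_k(\xi)$; I would verify the bookkeeping shows $\tfrac12+\delta-(1/2-\xi)\cdot\text{(stuff)}$ matches, i.e.\ that $r_k(\xi)$ is exactly the exponent making this piece and the EVT piece balance against the "mass" $\mathbb{P}(X_k\text{ small or }\mathbf{W}^R\text{ near }\partial\mathcal H_j)$ discussed next.

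To convert these conditional Kolmogorov/total-variation bounds into a bound on $|\mathbb{E}[\varphi(\mathbf{Y}_n)]-\mathbb{E}[\varphi(\mathbf{Y}_n^{\ast})]|$ I would use the structural assumptions on $\varphi$. Writing $\varphi$ as an affine combination of the $\varphi_j=\mathbf{1}[\mathbf{y}\in\mathcal H_j]\mathbf{1}[y_0\le b_j(\mathbf{y})]$, it suffices to bound each $|\mathbb{E}[\varphi_j(\mathbf{Y}_n)]-\mathbb{E}[\varphi_j(\mathbf{Y}_n^{\ast})]|$. The indicator $\mathbf{1}[y_0\le b_j(\mathbf{y})]$ is exactly the kind of functional controlled by a Kolmogorov bound on the conditional law of $y_0$ given $\mathbf{y}$, with $b_j$ Lipschitz ensuring the bound survives the (small, $O(n^{-\delta})$) perturbation of $\mathbf{y}$ coming from the EVT replacement — here the Lipschitz modulus times the TV perturbation of $\mathbf{y}$ is again $O(n^{-\delta})$. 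The factor $\mathbf{1}[\mathbf{y}\in\mathcal H_j]$ is handled by the total-variation bound (\ref{tvd_conv}) on the law of $\mathbf{y}$ together with the ray-intersection hypothesis: scale-invariance lets me parametrize $\mathbf{y}=s\mathbf{u}$ with $\mathbf{u}$ on the simplex of ordered unit vectors and $s=W_1^R/\sqrt{(n-k)s_n^2}=\Theta(n^{\xi-1/2})$; the assumption that $\mathcal I_j(\mathbf u)\cap[L^{-1},\infty)$ has at most $L$ points means, after rescaling by $\eta_n=\Theta(n^{\xi-1/2})$, that $\partial\mathcal H_j$ is crossed boundedly often along each ray, so the probability that $\mathbf{W}^R/\sqrt{(n-k)s_n^2}$ lies within an $O(n^{-r_k(\xi)})$-neighborhood of $\partial\mathcal H_j$ is itself $O(n^{-r_k(\xi)})$ (using that $\mathbf{X}$ has a bounded density away from the origin and that the last-coordinate perturbation does not move $\mathbf{y}$ at all). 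The main obstacle I anticipate is precisely this last geometric step: carefully combining the $s$-direction density bound for $\mathbf{X}$ with the bounded-crossings hypothesis to show the "$\varepsilon$-neighborhood of $\partial\mathcal H_j$" event has probability $O(\varepsilon)$ uniformly, and tracking how the three error exponents ($n^{-\delta}$ from EVT, $n^{3\xi-3/2}$ from Berry–Esseen, $n^{-1/2+\xi-1+\delta+\cdots}$ from the $m(\cdot)$ approximation, all weighted by the probability of being in the relevant region) combine to yield exactly $n^{-r_k(\xi)+\epsilon}$ — the $\epsilon$ presumably absorbing a logarithmic factor from a union bound over a fine grid of rays or from the $s_n^2\to1$ fluctuation. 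Everything else is a routine, if lengthy, assembly of the quoted results.
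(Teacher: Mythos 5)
Your skeleton matches the paper's: condition on $\mathbf{W}^R$, apply the Bentkus--G\"otze bound to the conditional t-statistic of the truncated sample, use the total-variation bound (\ref{tvd_conv}) for the extreme block, control $|m(w)-m^{\ast}(w)|$ via Condition 1, and handle $\mathbf{1}[\mathbf{y}\in\mathcal{H}_j]$ by integrating along rays using the bounded-crossings hypothesis. Your checks that the Berry--Esseen term $n^{3\xi-3/2}$ is never binding and that the $m$-approximation error is absorbed by $\delta\geq r_k(\xi)$ are both correct.

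The genuine gap is your treatment of the random norming $s_n^2$, which you dispose of with ``one checks that replacing the random norming introduces only a lower-order error after conditioning.'' This is precisely the step the paper identifies as the difficulty, and it is where the $k$-dependence of $r_k(\xi)$ originates --- a fact your proposal cannot explain, since none of your three error sources involves $k$. What is needed is a conditional deviation bound of the form $\mathbb{P}\bigl(|s_n^2-V(W_k^R)|>n^{-r_k(\xi)+1/2-\xi}\,\big|\,\mathbf{W}^R\bigr)\leq Cn^{-r_k(\xi)+\epsilon}\bigl(1+(n^{-\xi}W_k^R)^{k/\xi-\epsilon}\bigr)$, because the exceptional event must itself have probability of order $n^{-r_k(\xi)}$ after integrating against the law of $W_k^R$. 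Your heuristic (a ``central-limit-type fluctuation of order $n^{-1/2}$'') does not deliver this: $\tilde{W}^2$ has unconditional tail index $1/(2\xi)<3/2$, so its variance is controlled only through the truncation at $w=W_k^R\sim n^{\xi}$, and a second-moment (Chebyshev) bound gives an exceedance probability of order $n^{6\xi-3+2r_k(\xi)}$, which is larger than $n^{-r_k(\xi)}$ exactly when $k>1$. The paper's Lemma 4 instead applies Rosenthal's inequality with a moment of order $p\approx(k+1)/(2\xi)$, chosen to exploit that the density of $n^{-\xi}W_k^R$ decays like $s^{-k/\xi-1}$ so that moments up to order just below $k/\xi$ are integrable (Lemma 2(a)); balancing the resulting bound against the deviation threshold is what produces the formula $r_k(\xi)=\tfrac{3(1+k)(1-2\xi)}{2(1+k+2\xi)}$. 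Without this ingredient your argument cannot recover the stated rate, only a weaker, $k$-independent one.
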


Recall from Theorem \ref{thm:boot} (a) above that the rate of convergence of
the normal approximation to the distribution of the t-statistic is $%
n^{1/(2\xi )-1}$. Since for $\frac{1+k}{1+3k}<\xi <1/2$, $r_{k}(\xi
)>1/(2\xi )-1$, the theorem shows that the difference in the rejection rates
of $\varphi $ in the parametric model $\mathbb{E}[\varphi (\mathbf{Y}%
_{n}^{\ast })]$ and in the original inference for the mean problem $\mathbb{E%
}[\varphi (\mathbf{Y}_{n})]$ is of smaller order. In this sense, the new
approximation provides a refinement for underlying populations that have
between two and three moments.

The \cite{Bentkus93} bound (\ref{BG_bound}) implies that conditional on $%
\mathbf{W}^{R}$, $T_{n}^{s}$ is well approximated by a standard normal
distribution, since the $W_{i}^{s}$ form an i.i.d.~sample from a truncated
distribution with less heavy tails compared to the original population.
Furthermore, under Condition 1, it follows from (\ref{tvd_conv}) that the
distribution of $\mathbf{W}^{R}/\sqrt{(n-k)}$ is well approximated by the
distribution of $\eta _{n}\mathbf{X}$. The difficulty in the proof of
Theorem \ref{thm_main} arises from the presence of $s_{n}^{2}$ in the scale
normalization of $\mathbf{W}^{R}$ in $\mathbf{Y}_{n}$. While it is easy to
show that $s_{n}^{2}\overset{p}{\rightarrow }\limfunc{Var}[W]=1$, the proof
of Theorem \ref{thm_main} requires this convergence to be sufficiently fast,
and this complication leads to the presence of $k$ in the rate $r_{k}$
(intuitively, larger $k$ lead to more truncation, so $s_{n}^{2}$ is
estimated from a distribution with a lighter tail), and the technical
requirements on the form of $\varphi $.

Note that the usual full sample t-statistic of $H_{0}:\mathbb{E}[W]=0$, $%
T_{n}$, is approximated in terms of $\mathbf{Y}_{n}=(Y_{1n},\ldots
,Y_{kn},T_{n}^{s})^{\prime }$ by 
\begin{equation}
\tilde{T}_{n}=\frac{T_{n}^{s}+\sum_{i=1}^{k}Y_{in}}{\sqrt{%
1+\sum_{i=1}^{k}Y_{in}^{2}}}  \label{Ttilda}
\end{equation}%
up to a $O_{p}(n^{-1/2})$ term. Under (\ref{Y0_Yn}), the distribution of $%
\tilde{T}_{n}$ is approximated by 
\begin{equation}
T(\mathbf{Y}_{n}^{\ast })=\frac{Z+\mu -\eta _{n}\frac{1}{1-\xi }%
X_{k}^{1-1/\xi }+\eta _{n}\sum_{i=1}^{k}X_{i}}{\sqrt{1+\eta
_{n}^{2}\sum_{i=1}^{k}X_{i}^{2}}}.  \label{Tsingle}
\end{equation}%
Application of Theorem \ref{thm_main} with $\varphi (\mathbf{Y}_{n}^{\ast })=%
\mathbf{1}[T(\mathbf{Y}_{n}^{\ast })<t]$ shows that this approximation has a
faster rate of convergence compared to the usual standard normal
approximation. \cite{Mueller19} shows that one can combine extreme value
theory to improve the rates of approximation to sums of i.i.d.~random
variables compared to the central limit theorem under $\xi >1/3$. On
implication of Theorem \ref{thm_main} is thus a corresponding result for the
case of self-normalized sums (\ref{Ttilda})\ and (\ref{Tsingle}).

In principle, one could use this implication also to construct an
alternative test $\varphi $ that simply amounts to a t-test with
appropriately increased critical value to ensure size control in the
approximate model, $\mathbb{E}[\varphi (\mathbf{Y}_{n}^{\ast })]\leq \alpha $%
. This is woefully inefficient, however, since the much larger critical
value is only needed for samples where $\xi $ and $\eta _{n}$ are large,
which would defeat the objective of obtaining a test that remains close to
efficient for populations with thin tails.

\section{Construction of a New Test}

\subsection{Generalized Parametric Model}

To obtain accurate approximations in small samples also for potentially
thin-tailed distributions, it makes sense to extend the parametric
approximation to populations with an approximate generalized Pareto tail.
The c.d.f. $F$ of such populations satisfies%
\begin{equation}
F(w)\approx 1-(1+\xi (w/\sigma -\nu ))^{-1/\xi }\text{, \ \ \ }\xi \in
(-\infty ,1/2]  \label{gp_ass}
\end{equation}%
for all $w$ close to the upper bound of the support of $F$, and here and in
the following, expressions of the form $(1+\xi x)^{-1/\xi }$ are understood
to equal $e^{-x}$ for $\xi =0$. The Pareto tail assumption (\ref{evt1}) of
Section \ref{sec:evt} is recovered as a special case for $\xi >0$ with $\nu
=1/\xi $ and $\sigma $ rescaled by $\xi $.\footnote{%
To avoid notational clutter, this section redefines some of the notation
previously introduced in Sections 2 and 3 as appropriate for the more
general model.}

Assumption (\ref{gp_ass})\ accommodates infinite support thin-tailed
distributions, such as the exponential distribution, with $\xi =0$, as well
as distributions with finite upper bound on their support, such as the
uniform distribution with $\xi =-1$. From the seminal work of \cite%
{Balkema74} and \cite{Pickands75} (also see Theorem 5.1.1 of \cite{Reiss89}%
), it follows that under an appropriate formalization of (\ref{gp_ass}),
there exist real sequences $a_{n}$ and $\kappa _{n}$ such that 
\begin{equation}
\frac{\mathbf{W}^{R}}{a_{n}}-\kappa _{n}\Rightarrow \mathbf{X}=(X_{1},\ldots
,X_{k})^{\prime }  \label{gp_conv}
\end{equation}%
is (jointly) generalized extreme value distributed, so that $\{(\xi
X_{j}+1)^{-1/\xi }\}_{j=1}^{k}\sim \{\sum_{l=1}^{j}E_{l}\}_{j=1}^{k}$ with $%
E_{l}$ i.i.d.~exponential random variables. If $F$ is exactly generalized
Pareto in the sense of (\ref{gp_ass}), then Corollary 1.6.9 of \cite{Reiss89}%
\ implies%
\begin{equation}
\left\{ \left( \xi \left( \frac{W_{j}^{R}}{a_{n}}-\kappa _{n}\right)
+1\right) ^{-1/\xi }\right\} _{j=1}^{k}\sim \left\{ \left( \frac{n}{%
\sum_{l=1}^{n+1}E_{l}}\right) \sum_{l=1}^{j}E_{l}\right\} _{j=1}^{k}
\label{exact_exponential}
\end{equation}%
with $a_{n}=\sigma n^{\xi }$ and $\xi \kappa _{n}=1+n^{-\xi }(\xi \nu -1)$,
so that $\sum_{l=1}^{n+1}E_{l}/n\approx 1$ is the only approximation
involved in (\ref{gp_conv}). If only the right tail of $F$ of mass $p_{R}>0$
is exactly generalized Pareto, then (\ref{exact_exponential}) holds
conditionally on the event $\sum_{l=1}^{j}E_{l}/\sum_{l=1}^{n+1}E_{l}\leq
np_{R}$, whose probability is larger than 99\% for $k=8$ and all $p_{R}\geq
16/n$, $n\geq 50$.

Under (\ref{gp_ass}) and (\ref{gp_conv}), from the same logic that led to (%
\ref{Y0_Yn}), we obtain the approximate model%
\begin{equation}
\mathbf{Y}_{n}=\left( 
\begin{array}{c}
\mathbf{W}^{R}/\sqrt{(n-k)s_{n}^{2}} \\ 
T_{n}^{s}%
\end{array}%
\right) \overset{a}{\sim }\left( 
\begin{array}{c}
\eta _{n}(\mathbf{X}+\kappa _{n}\mathbf{e}) \\ 
Z+\mu -\eta _{n}m^{\ast }(\mathbf{X},\kappa _{n},\xi )%
\end{array}%
\right) =\mathbf{Y}_{n}^{\ast }  \label{Y0_Yn_GEV}
\end{equation}%
where $\mathbf{e}$ is a $k\times 1$ vector of ones, $\eta _{n}=n^{-1/2}a_{n}$
and%
\begin{equation*}
m^{\ast }(\mathbf{X},\kappa _{n},\xi )=(1+\xi X_{k})^{-1/\xi }\left( \kappa
_{n}+\frac{1+\xi X_{k}}{\xi (1-\xi )}-\frac{1}{\xi }\right) .
\end{equation*}

With $\kappa _{n}\rightarrow 1/\xi $ for $\xi >0$, it is tempting to employ
the additional approximation $\kappa _{n}=1/\xi $ to eliminate the location
parameter in (\ref{Y0_Yn_GEV}), and this is implicitly applied in standard
extreme value theory as reviewed in Section \ref{sec:evt}. However, unless $%
n $ is very large, this leads to a considerably deterioration of the
approximation in (\ref{gp_conv}), and hence (\ref{Y0_Yn_GEV}), so we do not
do so in the following.

For practical implementations it is important to allow for the possibility
that both tails are potentially moderately heavy. This is straightforward
under an assumption that also the left-tail of $F$ is approximately
generalized Pareto in the sense of (\ref{gp_ass}): Let $\mathbf{W}^{L}$ be
the set of smallest $k$ order statistics. Further let $W_{i}^{m}$ be the $%
n-2k$ \textquotedblleft middle\textquotedblright\ order statistics $%
k+1,\ldots ,n-k-1$, and let $s_{n}^{2}$ be the sample variance of $W_{i}^{m}$%
. Then in analogy to (\ref{Y0_Yn_GEV}),%
\begin{equation}
((n-2k)s_{n}^{2})^{-1/2}\left( 
\begin{array}{c}
\mathbf{W}^{R} \\ 
-\mathbf{W}^{L} \\ 
\sum_{i=1}^{n-2k}W_{i}^{m}%
\end{array}%
\right) \overset{a}{\sim }\left( 
\begin{array}{c}
\eta _{n}^{R}(\mathbf{X}^{R}+\kappa _{n}^{R}\mathbf{e}) \\ 
\eta _{n}^{L}(\mathbf{X}^{L}+\kappa _{n}^{L}\mathbf{e}) \\ 
Z-\eta _{n}^{R}m^{\ast }(\mathbf{X}^{R},\kappa _{n}^{R},\xi ^{R})+\eta
_{n}^{L}m^{\ast }(\mathbf{X}^{L},\kappa _{n}^{L},\xi ^{L})%
\end{array}%
\right) =\mathbf{Y}_{n}^{\ast }.  \label{Y_twosided}
\end{equation}%
where $\mathbf{X}^{L}$ and $\mathbf{X}^{R}$ are independent and generalized
extreme value distributed with tail index $\xi ^{L}$ and $\xi ^{R}$,
respectively, and independent of $Z\sim \mathcal{N}(0,1)$.

The scale and location parameters $\eta _{n}$ and $\kappa _{n}$ in this
generalized model depend on the known sample size $n$. But they also depend
on the tail parameters of the underlying population: Recall that $\eta
_{n}=n^{-1/2}a_{n}=\sigma n^{\xi -1/2}$ in (\ref{exact_exponential}). With $%
\sigma $ unknown, this product can in principle take on any positive value,
even with $(n,\xi )$ known, and the same holds for the parameter $\kappa
_{n} $. For this reason, we will now drop the index $n$ in the nuisance
parameter $\theta =(\kappa ^{L},\eta ^{L},\xi ^{L},\kappa ^{R},\eta ^{R},\xi
^{R})\in \Theta _{0}$ and in the $2k+1$ dimensional observation $\mathbf{Y}%
^{\ast }\mathbf{=Y}_{n}^{\ast }$ from the approximate parametric model in (%
\ref{Y_twosided}). In this notation, the problem becomes the construction of
a powerful test $\varphi (\mathbf{Y}^{\ast })$ of $H_{0}:\mu =0$ against $%
H_{a}:\mu \neq 0$ that satisfies 
\begin{equation}
\sup_{\theta \in \Theta _{0}}\mathbb{E}_{\theta }[\varphi (\mathbf{Y}^{\ast
})]\leq \alpha ,  \label{gen_tst_prob}
\end{equation}%
where 
\begin{equation}
\mathbf{Y}^{\ast }=\left( 
\begin{array}{c}
\eta ^{R}(\mathbf{X}^{R}+\kappa ^{R}\mathbf{e}) \\ 
\eta ^{L}(\mathbf{X}^{L}+\kappa ^{L}\mathbf{e}) \\ 
Z+\mu -\eta ^{R}m^{\ast }(\mathbf{X}^{R},\kappa ^{R},\xi ^{R})+\eta
^{L}m^{\ast }(\mathbf{X}^{L},\kappa ^{L},\xi ^{L})%
\end{array}%
\right) =\left( 
\begin{array}{c}
\mathbf{Y}^{R} \\ 
\mathbf{Y}^{L} \\ 
Y_{0}%
\end{array}%
\right)  \label{Z0}
\end{equation}%
and $\mathbf{Y}^{J}=(Y_{1}^{J},\ldots ,Y_{k}^{J})^{\prime }$ for $J\in
\{L,R\}$.

From the representation of the joint generalized extreme value distribution
in terms of i.i.d.~exponentially distributed random variables, it follows
that the density of $\mathbf{Y}^{\ast }$ is given by%
\begin{equation}
f(\mathbf{y}^{\ast }|\theta ,\mu )=f_{T}(\mathbf{y}^{R}|\theta ^{R})f_{T}(%
\mathbf{y}^{L}|\theta ^{L})\phi (y_{0}-\mu +M^{\ast }(\mathbf{y}^{R},\theta
^{R})-M^{\ast }(\mathbf{y}^{L},\theta ^{L}))  \label{dens_joint}
\end{equation}%
where $\theta ^{J}=(\kappa ^{J},\eta ^{J},\xi ^{J})$, $\phi $ is the density
of a standard normal, $M^{\ast }(\mathbf{y},\theta ^{S})=\eta m^{\ast }(%
\mathbf{y/}\eta -\mathbf{e}\kappa ,\kappa ,\xi )$, and $f_{T}$ is the
\textquotedblleft tail\textquotedblright\ density 
\begin{equation*}
f_{T}(\mathbf{y}|\theta ^{S})=\mathbf{1}[1+\xi x_{k}>0]\mathbf{1}[1+\xi
x_{1}>0]\eta ^{-k}\exp \left[ -(1+\xi x_{k})^{-1/\xi }-(1+\xi
^{-1})\sum_{i=1}^{k}\log (1+\xi x_{i}))\right]
\end{equation*}%
with $\theta ^{S}=(\kappa ,\eta ,\xi )$ the parameter of a \textquotedblleft
single tail\textquotedblright\ and $x_{i}=y_{i}/\eta -\kappa $ in obvious
notation.

\subsection{\label{sec:paraspace}Nuisance Parameter Space}

Allowing for arbitrary values of the location and scale parameters in the
testing problem (\ref{gen_tst_prob}) is not fruitful: An unreasonably large
nuisance parameter space $\Theta _{0}$ leads to excessively conservative
inference, and it renders the computational determination of powerful tests
prohibitively difficult. With that in mind, in the default construction, we
consider a nuisance parameter space $\Theta _{0}$ that is partially
motivated by a desire to obtain good size control in samples from a demeaned
Pareto population when $n\geq n_{0}=50$. In the description of $\Theta _{0}$%
, we refer to the extreme value approximation extended to the most extreme $%
n_{0}$ observations, $\{W_{i}^{J}\}_{i=1}^{n_{0}}\overset{a}{\sim }%
\{Y_{i}^{J}\}_{i=1}^{n_{0}}$ for $J\in \{L,R\}$. As noted in (\ref%
{exact_exponential}), this remains an good approximation for an exact
generalized Pareto population even if $n=n_{0}$.

\renewcommand{\theenumi}{(\alph{enumi})}In particular, for $J\in \{L,R\}$,
we impose

\begin{enumerate}
\item $\xi ^{J}<1/2$

\item $\kappa ^{J}\leq 1/\xi ^{J}$ for $\xi ^{J}>0$

\item $\sum_{i=1}^{n_{0}}\mathbb{E}[Y_{i}^{J}]\geq 0$

\item $\sum_{i=k+1}^{n_{0}-k}\mathbb{E}[(Y_{i}^{J})^{2}]\leq 2.$
\end{enumerate}

Restriction (a) imposes that the tails are such that at least two moments
exists. Restriction (b) says that any potential tail shift is only inward
relative to the non-demeaned Pareto default. Note that arbitrarily large
inward shifts are incompatible with the population having mean zero.
Restriction (c) puts a corresponding lower bound on the inward shift: For
the right tail, it requires that the sum of the largest $n_{0}$ observations
still has positive mean. To motivate restriction (d), note that the
normalization by $s_{n}$ implies that the sum of squared demeaned middle
observations cannot be larger than unity. Ignoring the demeaning, taking
expectations and approximating the distribution of these observations by
again extending the extreme value distribution yields restriction (d) with a
right-hand side of unity. We relax the upper bound to equal 2 to accommodate
approximating errors in this argument.

We further impose cross restrictions between the two tails:

\begin{enumerate} \setcounter{enumi}{4}%

\item%
$\mathbb{E}[Y_{k}^{R}]\geq -\mathbb{E}[Y_{k}^{L}]$

\item%
$\sum_{i=1}^{n_{0}/2}\mathbb{E}[Y_{i}^{L}]>\sum_{i=1}^{n_{0}/2}\mathbb{E}%
[Y_{i}^{R}]$ implies $\mathbb{E}[Y_{n_{0}/2}^{R}]>0$

\item%
$\sum_{i=1}^{n_{0}/2}\mathbb{E}[Y_{i}^{R}]>\sum_{i=1}^{n_{0}/2}\mathbb{E}%
[Y_{i}^{L}]$ implies $\mathbb{E}[Y_{n_{0}/2}^{L}]>0$

\item%
$\sum_{i=k+1}^{n_{0}/2}\mathbb{E}[(Y_{i}^{L})^{2}]+\sum_{i=k+1}^{n_{0}/2}%
\mathbb{E}[(Y_{i}^{R})^{2}]\leq 2$%
\end{enumerate}%
\renewcommand{\theenumi}{\arabic{enumi}}

Restriction (e) amounts to an assumption that the two tails don't overlap.
Under an extended tail assumption up to the most extreme $n_{0}/2$
observations, the middle observations take on values between $%
-Y_{n_{0}/2}^{L}$ and $Y_{n_{0}/2}^{R},$ leading to restrictions (f)-(g)
under the null hypothesis of the overall mean being zero. Finally,
restriction (h) is the analogous version of restriction (d) for each tail.

While restriction (c) involves the extreme value approximation for the most
extreme $n_{0}$ observations, note that this approximation is only used to
motivate a lower bound on $\kappa ^{J}$, and for no other purpose. Consider,
for instance, a sample of size $n=n_{0}=50$ from a mean-zero population with
a Pareto right tail and a uniform left tail, with overall continuous
density. Since the uniform distribution is relatively more spread out
compared to the left-tail of a demeaned Pareto distribution, the right tail
is shifted outward compared to a demeaned Pareto distribution. Thus,
restriction (c) is satisfied for this population, and as long as $k$ is
smaller than $n_{0}/2=25$, the approximate parametric model (\ref{Y_twosided}%
) can still be a good approximation

At the same time, one might argue that if the sample size $n$ is much larger
than $n_{0}$, this default parameter space $\Theta _{0}$ is artificially
large, and more powerful inference could be obtained by suitably reducing
it. Note, however, that for any sample size $n$, the tails could be as large
as they are in a sample of size $n_{0}=50$. For instance, consider a sample
of size $n=5000$ from a population that is a mixture between a point mass at
zero and a demeaned Pareto distribution, with 99\% mass on the point mass at
zero. Then only approximately 50 observations in the sample will be
non-zero, and those follow the demeaned Pareto distribution, so $\Theta _{0}$
is again appropriate, and mechanical reduction of $\Theta _{0}$ as a
function of $n$ leads to a poorly performing test in this problem.

Ultimately, inference about the mean requires a substantial assumption about
the tails, and the stronger the assumptions, the more powerful the potential
inference. The restriction $\Theta _{0}$ as described here is one such
choice, and as will be shown below, it yields informative inference while
maintaining a high degree of robustness under moderately heavy tails.

\subsection{Numerical Determination of Powerful Tests}

Our approach is a variant of the algorithms in \cite{Elliott15}, denoted by
EMW in the following, and \cite{MuellerWatsonLRCov}; see \cite{Mueller20}
for a detailed survey. This approach yields a likelihood ratio-type test of
the form%
\begin{equation}
\varphi ^{\ast }(\mathbf{y}^{\ast })=\mathbf{1}\left[ \frac{f_{a}(\mathbf{y}%
^{\ast })}{\sum_{i=1}^{M}\lambda _{i}f(\mathbf{y}^{\ast }|\theta _{i},0)}>1%
\right]  \label{opt_test}
\end{equation}%
where the $M$ values of $\theta _{i}\in \Theta _{0}$ and associated positive
weights $\lambda _{i}$ are iteratively determined so that the discrete
mixture of $\theta $ taking on the values $\theta _{i}$ with probability $%
\lambda _{i}/\sum_{j=1}^{M}\lambda _{j}$ forms an approximate least
favorable distribution for testing $H_{0}$ against the alternative $H_{a}:$%
\textquotedblleft the density of $\mathbf{Y}^{\ast }$ is $f_{a}$%
\textquotedblright . Here $f_{a}$ is chosen to equal $f_{a}(\mathbf{y}^{\ast
})=\int f(\mathbf{y}^{\ast }|\theta ,\mu )dF_{a}(\theta ,\mu )$ for some
weighting function $F_{a}$ that determines against what kind of alternatives
the resulting test is designed to be particularly powerful.

\subsubsection{Specification of Weighting Function}

We choose $F_{a}(\theta ,\mu )$ to be an improper weighting function with
density that is proportional to 
\begin{equation}
\mathbf{1}[-1/2\leq \xi ^{L}\leq 1/2]\mathbf{1}[-1/2\leq \xi ^{R}\leq
1/2]/(\eta ^{R}\eta ^{L})  \label{F1}
\end{equation}%
so that the implied density on $\mu $, $\kappa ^{L}$ and $\kappa ^{R}$ is
flat. This choice is numerically convenient, as it leads to the product form 
$f_{a}(\mathbf{y}^{\ast })=f_{a}^{S}(\mathbf{y}^{L})f_{a}^{S}(\mathbf{y}%
^{R}) $ with $f_{a}^{S}(\mathbf{y)=}\int_{-1/2}^{1/2}f_{a|\xi }^{S}(\mathbf{y%
}|\xi )d\xi $ and $f_{a|\xi }^{S}$ proportional to the density of the scale
and location maximal invariant considered in \cite{Muller17}. By the same
arguments as employed there, $f_{a|\xi }^{S}$ can be obtained by one
dimensional Gaussian quadrature, and we approximate $f_{a}^{S}$ by an
average of those over a grid of values for $\xi \in \lbrack -1/2,1/2]$.

The lower bound of $-1/2$ on $(\xi ^{L},\xi ^{R})$ in (\ref{F1}) plays no
important role, since for values of $\xi ^{J}$ that imply a thin tail, with
very high probability the test is constrained to be of a form that does not
involve $f_{a}$, as discussed next.

\subsubsection{\label{sec:switching}Switching}

A key ingredient in the algorithm of EMW is an importance sampling estimate
of the null rejection probability $\limfunc{RP}(\theta )=\mathbb{E}_{\theta
}[\varphi ^{c}(\mathbf{Y}^{\ast })]=\int \varphi ^{c}(\mathbf{y}^{\ast })f(%
\mathbf{y}^{\ast }|\theta ,0)d\mathbf{y}^{\ast }$ of a candidate test $%
\varphi ^{c}:%
\mathbb{R}
^{2k+1}\mapsto \{0,1\}$ under $\theta ,$ 
\begin{equation}
\widehat{\limfunc{RP}}(\theta )=N^{-1}\sum_{l=1}^{N}\varphi ^{c}(\mathbf{Y}%
_{(l)}^{\ast })\frac{f(\mathbf{Y}_{(l)}^{\ast }|\theta ,0)}{\bar{f}(\mathbf{Y%
}_{(l)}^{\ast })}  \label{RP_hat}
\end{equation}%
where $\mathbf{Y}_{(l)}^{\ast }$, $l=1,\ldots ,N$ are i.i.d.~draws from the
proposal density $\bar{f}$ (so that by the LLN, $\widehat{\limfunc{RP}}%
(\theta )\rightarrow \mathbb{E}_{\bar{f}}[\varphi ^{c}(\mathbf{Y}^{\ast })f(%
\mathbf{Y}^{\ast }|\theta ,0)/\bar{f}(\mathbf{Y}^{\ast })]=\limfunc{RP}%
(\theta )$ in obvious notation), where an appropriate $\bar{f}$ may be
obtained by the algorithm in \cite{MuellerWatsonLRCov}. Clearly, the larger $%
\Theta _{0}$, the larger the number of importance sampling draws $N$ needs
to be for $\widehat{\limfunc{RP}}$ to be of satisfactory accuracy uniformly
in $\theta \in \Theta _{0}$.

The nuisance parameter space $\Theta _{0}$ of the last section is unbounded:
the restrictions there did not put any lower bound on the scale parameters $%
\eta ^{J}$ or the shape parameters $\xi ^{J}$, $J\in \{L,R\}$. Since the
distribution of $\mathbf{Y}^{\ast }$ is highly informative about the scale
of the tails, it is not possible to obtain uniformly accurate approximations
via $\widehat{\limfunc{RP}}$ over $\Theta _{0}$, even with arbitrary
computational resources. It is therefore necessary to choose the test $%
\varphi $ in a way that does not require a computational check of $\mathbb{E}%
_{\theta }[\varphi (\mathbf{Y}^{\ast })]\leq \alpha $ over the entirety of $%
\Theta _{0}$.

The solution to this challenge suggested in \cite{Elliott15} is to \emph{%
switch} to a default test with known size control under $\Theta _{00}\subset
\Theta _{0}$, where the switching rule is such that the default test is
employed with probability very close to one whenever $\mathbf{Y}^{\ast }$ is
generated from $\Theta _{00}$. If for $J\in \{L,R\}$, $\xi ^{J}$ is very
small or $\kappa ^{J}$ is very large, then the resulting observation \textbf{%
$Y$}$^{J}$ is highly compressed in the sense that $Y_{1}^{J}/Y_{k}^{J}$ is
positive and not much larger than one. Also, if $Y_{1}^{J}$ (and thus the
entire vector $\mathbf{Y}^{J}$) is small, then even if the tail is heavy in
the sense of $Y_{1}^{J}/Y_{k}^{J}$ being large, the tail still only makes a
minor contribution to the overall variation of the data. We operationalize
this by introducing the switching index 
\begin{equation}
\chi (\mathbf{Y}^{J})=\max (0,\min (Y_{1}^{J}-\rho _{1},\mathbf{1}%
[Y_{k}^{J}>0](Y_{1}^{J}/Y_{k}^{J}-1-\rho _{r}))  \label{switch}
\end{equation}%
for positive values of $\rho _{r}$, $\rho _{1}$ close to zero, so that $\chi
(\mathbf{Y}^{J})=0$ implies that either $Y_{1}^{J}$ is small or $%
Y_{1}^{J}/Y_{k}^{J}$ is close to unity. If $\chi (\mathbf{Y}^{J})=0$ for one
tail, but not the other, then the problem is heuristically close to knowing
that only one of the tails is moderately heavy. For example, suppose $\chi (%
\mathbf{Y}^{L})=0$, so the left-tail seems thin. Under approximation (\ref%
{Y0_Yn_GEV}), the sum of all observations that are not in the right tail
equals $\tilde{Y}_{0}^{L}=Y_{0}-\sum_{i=1}^{k}Y_{i}^{L}$, with corresponding
approximate variance equal to $\tilde{V}^{L}=1+\sum_{i=1}^{k}(Y_{i}^{L})^{2}$%
. It hence makes sense to switch to a \textquotedblleft single
tail\textquotedblright\ test $\varphi ^{S}:%
\mathbb{R}
^{k+2}\mapsto \{0,1\}$ that treats $\mathbf{Y}^{R}$ as the extreme
observations from the potentially heavy tail, and $\tilde{Y}_{0}^{L}$ to be
approximately normal with mean $-M^{\ast }(\mathbf{Y}^{R},\theta ^{R})$ and
variance $\tilde{V}^{L}$. In analogy to (\ref{opt_test}), such a test is of
the form%
\begin{equation}
\varphi ^{S}(\mathbf{y}^{R}\mathbf{,y}^{L},y_{0})=\mathbf{1}\left[ \frac{%
f_{a}^{S}(\mathbf{y}^{R})}{\sum_{i=1}^{M^{S}}\lambda _{i}^{S}f^{S}(\mathbf{y}%
^{R}\mathbf{,y}^{L},y_{0}|\theta _{i}^{S})}>1\right]  \label{testS}
\end{equation}%
where $\theta _{i}^{S}\in 
\mathbb{R}
^{3}$ and $\lambda _{i}^{S}>0$ form again a numerically determined
approximate least favorable distribution, and 
\begin{equation}
f^{S}(\mathbf{y}^{R}\mathbf{,y}^{L},y_{0}|\theta _{i}^{S})=f_{T}(\mathbf{y}%
|\theta ^{S})\phi ((\tilde{y}_{0}^{L}+M^{\ast }(\mathbf{y}^{R},\theta ^{S})/%
\sqrt{\tilde{v}^{L}})/\sqrt{\tilde{v}^{L}}.  \label{densSapprox}
\end{equation}%
The corresponding test for a thin right tail is given by $\varphi ^{S}(%
\mathbf{y}^{L}\mathbf{,y}^{R},-y_{0})$.

If both tails seem thin, then one would expect that the distribution of the
analogue to the full-sample t-statistic (cf. (\ref{Tsingle}) from Section %
\ref{sec:mainThm}) 
\begin{equation}
T(\mathbf{Y}^{\ast })=\frac{Y_{0}+\sum_{i=1}^{k}Y_{i}^{R}-%
\sum_{i=1}^{k}Y_{i}^{L}}{\sqrt{1+\sum_{i=1}^{k}(Y_{i}^{R})^{2}+%
\sum_{i=1}^{k}(Y_{i}^{L})^{2}}}  \label{Tboth}
\end{equation}%
to be reasonably well approximated by a standard normal distribution,
especially if $\sum_{i=1}^{k}(Y_{i}^{R})^{2}+\sum_{i=1}^{k}(Y_{i}^{L})^{2}$
is small.

These considerations, a numerical analysis, and the sequential structure of
the eventual algorithm presented in Section \ref{sec:algorithm} below
motivate the restriction of tests $\varphi $ to reject, $\varphi (\mathbf{Y}%
^{\ast })=1$, only if all of the following four conditions hold:

\begin{enumerate}
\item $|T(\mathbf{Y}^{\ast })|>\func{cv}_{T}(\mathbf{Y}^{\ast })$, where $%
\func{cv}_{T}(\mathbf{Y}^{\ast })=w_{\func{cv}}(\mathbf{Y}^{\ast })\func{cv}%
_{\alpha }^{Z}+(1-w_{\func{cv}}(\mathbf{Y}^{\ast }))\func{cv}_{\alpha }^{T}$
with $w_{\func{cv}}($\textbf{$Y$}$^{\ast
})=1/(1+\sum_{i=1}^{k}(Y_{i}^{R})^{2}+\sum_{i=1}^{k}(Y_{i}^{L})^{2})$ and $(%
\func{cv}_{\alpha }^{Z},\func{cv}_{\alpha }^{T})$ the $1-\alpha /2$
quantiles of a standard normal and student-t distribution with degrees of
freedom equal to $80+10\log (\alpha )$, respectively;

\item $\varphi _{\chi }^{S}(\mathbf{Y}^{R},\mathbf{Y}^{L},Y_{0})=1$;

\item $\varphi _{\chi }^{S}(\mathbf{Y}^{L},\mathbf{Y}^{R},-Y_{0})=1$;

\item the \textquotedblleft two tailed\textquotedblright\ test of the form (%
\ref{opt_test}) rejects, $\varphi ^{\ast }(\mathbf{Y}^{\ast })=1$;
\end{enumerate}

where%
\begin{equation}
\varphi _{\chi }^{S}(\mathbf{y}^{R}\mathbf{,y}^{L},y_{0})=\mathbf{1}\left[ 
\frac{\exp [5\chi (\mathbf{y}^{L})]\cdot f_{a}^{S}(\mathbf{y}^{R})}{%
\sum_{i=1}^{M^{S}}\lambda _{i}^{S}f^{S}(\mathbf{y}^{R}\mathbf{,y}%
^{L},y_{0}|\theta _{i}^{S})}>1\right] .  \label{testSchi}
\end{equation}%
The additional term $\exp [5\chi (\mathbf{y}^{L})]$ in (\ref{testSchi})
compared to (\ref{testS}) ensures that conditions 2 and 3 are not binding
whenever the corresponding switching index $\chi (\mathbf{y}^{L})$ is large.
Its continuity in $\mathbf{y}^{L}$ avoids the sharp change of the form of
the rejection region as a function of $\mathbf{y}^{\ast }$ that would be
induced by a simpler hard threshold rule $\varphi _{\chi }^{S}(\mathbf{y}^{R}%
\mathbf{,y}^{L},y_{0})=\mathbf{1}[\chi (\mathbf{y}^{L})=0]\varphi ^{S}(%
\mathbf{y}^{R}\mathbf{,y}^{L},y_{0})$.

Condition 1 implies that $\varphi $ never rejects if the analogue $T(\mathbf{%
Y}^{\ast })$ of the usual t-statistic does not reject; in that sense, we
seek to \textquotedblleft robustify\textquotedblright\ the usual t-statistic
to obtain better size control. Condition 1 has the additional appeal that
sums of the form (\ref{RP_hat}) then effectively only involve $\mathbf{Y}%
_{(l)}^{\ast }$ for which $|T(\mathbf{Y}_{(l)}^{\ast })|\geq \func{cv}_{T}(%
\mathbf{Y}_{(l)}^{\ast })$, with an associated gain in computing speed.

We stress that the definition of a \textquotedblleft thin
tail\textquotedblright\ in (\ref{switch}), the approximate normality of (\ref%
{Tboth})\ and so forth are purely heuristic and do not enter the evaluation
of $\mathbb{E}_{\theta }[\varphi (\mathbf{Y}^{\ast })]$ by the algorithm;
this probability is always computed from the distribution (\ref{Z0}) of $%
\mathbf{Y}^{\ast }$. The heuristics merely motivate the particular form of $%
\varphi $ just described. As discussed, it is not possible to numerically
check that $\mathbb{E}_{\theta }[\varphi (\mathbf{Y}^{\ast })]\leq \alpha $
for \emph{all} $\theta \in \Theta _{0}$. But we employ extensive numerical
analysis to ensure that $\mathbb{E}_{\theta }[\varphi (\mathbf{Y}^{\ast
})]\leq \alpha $ over a very large set, including values of $\theta $ that
lead to the events $\chi (\mathbf{Y}^{J})=0$ for $J\in \{L,R\}$ with
probability close to zero, close to one or in between. The simple form that $%
\varphi $ takes on with very high probability in the remainder of the
parameter space makes it plausible that $\mathbb{E}_{\theta }[\varphi (%
\mathbf{Y}^{\ast })]\leq \alpha $ for all $\theta \in \Theta _{0}$, or at
the least, very nearly so.

\subsubsection{\label{sec:ImpSamp}Recombining Tails in Importance Sampling}

Even though the switching rule of the last subsection reduces the
numerically relevant parameter space to a bounded set, this set still turns
out to be so large that a very large number $N$ of importance sampling draws
are necessary to obtain adequate approximations. The computationally
expensive part in the evaluation of $\widehat{\limfunc{RP}}(\theta )$ in (%
\ref{RP_hat}) for different $\theta $ is the evaluation of $f(\mathbf{Y}%
_{(l)}^{\ast }|\theta ,0)$ (since all $\bar{f}(\mathbf{Y}_{(l)}^{\ast })$
can be computed once and stored).

These evaluations can be dramatically sped up by recombining two
\textquotedblleft single tails\textquotedblright\ in different combinations:
For a given $\theta ^{S}=(\kappa ,\eta ,\xi )$, let $\mathbf{Y}^{e}\in 
\mathbb{R}
^{k+1}$ be an \textquotedblleft extended\textquotedblright\ single tail with
distribution 
\begin{equation*}
\mathbf{Y}^{e}=\left( 
\begin{array}{c}
\eta (\mathbf{X}+\kappa \mathbf{e}) \\ 
Z/\sqrt{2}-\eta m^{\ast }(\mathbf{X},\kappa ,\xi )%
\end{array}%
\right) =\left( 
\begin{array}{c}
\mathbf{Y}^{S} \\ 
Y_{0}^{e}%
\end{array}%
\right)
\end{equation*}%
where $\mathbf{X}$ is distributed as as in (\ref{gp_conv}), independent of $%
Z\sim \mathcal{N}(0,1)$. Denote the density of $\mathbf{Y}^{e}$ by $f^{e}(%
\mathbf{y}^{e}|\theta ^{S})$. Given two independent vectors $\mathbf{Y}%
_{(1)}^{e}$ and $\mathbf{Y}_{(2)}^{e}$ distributed according to $\theta
_{1}^{S}=\theta ^{L}$ and $\theta _{2}^{S}=\theta ^{R}$, respectively, note
that their combination into the \textquotedblleft both
tails\textquotedblright\ observation $(\mathbf{Y}_{(1)}^{S\prime },\mathbf{Y}%
_{(2)}^{S\prime },Y_{0,(1)}^{e}-Y_{0,(2)}^{e})^{\prime }\in 
\mathbb{R}
^{2k+1}$ has the same distribution as $\mathbf{Y}^{\ast }$ in (\ref{Z0}),
since the difference of two independent normals of variance 1/2 is again
standard normal. Thus, with $\mathbf{Y}_{(l)}^{e}$ i.i.d.~draws from a
suitable proposal density $\bar{f}^{e}$, one obtains the alternative
estimator%
\begin{equation}
\widetilde{\limfunc{RP}}(\theta
)=(KN)^{-1}\sum_{k=1}^{K}\sum_{l=1}^{N}\varphi ^{c}((\mathbf{Y}%
_{(l)}^{S\prime },\mathbf{Y}_{(l+k)}^{S\prime
},Y_{0,(l)}^{e}-Y_{0,(l+k)}^{e})^{\prime })\frac{f^{e}(\mathbf{Y}%
_{(l)}^{e}|\theta ^{L})f^{e}(\mathbf{Y}_{(l+k)}^{e}|\theta ^{R})}{\bar{f}%
^{e}(\mathbf{Y}_{(l)}^{e})\bar{f}^{e}(\mathbf{Y}_{(l+k)}^{e})}
\label{smartRP_hat}
\end{equation}%
that recombines each extended single tail with $K$ different other extended
single tails, for a total of $KN$ importance draws. Yet evaluation of (\ref%
{smartRP_hat}) only requires a simple product of the $(K+N)$ values $f^{e}(%
\mathbf{Y}_{(l)}^{e}|\theta ^{S})$ for $\theta ^{S}\in \{\theta ^{L},\theta
^{R}\}$. We let $K=128$ and $N=640,000$ for a total of nearly 82 million
importance sampling draws.

\subsubsection{\label{sec:algorithm}Implementation}

The overall algorithm proceeds in four stages. To describe these stages, let 
$\Theta _{0}^{S}\subset 
\mathbb{R}
^{3}$ be the set of parameters satisfying the constraints (a)-(d) of Section %
\ref{sec:paraspace} on one tail. Let $\Theta _{s}^{S}\subset \Theta _{0}^{S}$
be such that for $\theta ^{S}\in \Theta _{s}^{S}$, the event that the
switching index is zero, $\chi (\mathbf{Y}^{J})=0$, happens with at least
90\% probability, and $\Theta _{ss}^{S}\subset \Theta _{s}^{S}$ be such that 
$\chi (\mathbf{Y}^{J})=0$ with probability of exactly 90\%. Note that $%
\Theta _{s}^{S}$ and $\Theta _{ss}^{S}$ depend on $(\rho _{r},\rho _{1})$ in
(\ref{switch}).

\begin{enumerate}
\item Choose $(\rho _{r},\rho _{1})$ such that $\mathbb{E}_{\theta }[\mathbf{%
1}[|T(\mathbf{Y}^{\ast })|>\func{cv}_{T}(\mathbf{Y}^{\ast })]]\leq \alpha $
for all $\theta =(\theta ^{L\prime },\theta ^{R\prime })^{\prime }$ with $%
\theta ^{L},\theta ^{R}\in \Theta _{ss}^{S}$.

\item Use the algorithm of EMW to numerically determine $\varphi _{\chi
}^{S} $ via $\{\lambda _{i}^{S}\}_{i=1}^{M_{S}}$ and $\{\theta
_{i}^{S}\}_{i=1}^{M_{S}}$ in (\ref{testSchi}) so that 
\begin{equation*}
\mathbb{E}_{\theta }[\mathbf{1}[|T(\mathbf{Y}^{\ast })|>\func{cv}_{T}(%
\mathbf{Y}^{\ast })]\varphi _{\chi }^{S}(\mathbf{Y}^{R},\mathbf{Y}%
^{L},Y_{0})]\leq \alpha
\end{equation*}%
for all $\theta =(\theta ^{L\prime },\theta ^{R\prime })^{\prime }\in \Theta
_{0}$ with $\theta ^{L}\in \Theta _{ss}^{S}$ and $\theta ^{R}\in \Theta
_{0}^{S}\backslash \Theta _{s}^{S}$.

\item Use the algorithm of EMW to determine $\varphi ^{\ast }$ in (\ref%
{opt_test}) via $\{\lambda _{i}\}_{i=1}^{M}$ and $\{\theta _{i}\}_{i=1}^{M}$
so that the overall test $\varphi $ of the form described in Section \ref%
{sec:switching} satisfies $\mathbb{E}_{\theta }[\varphi (\mathbf{Y}^{\ast
})]\leq \alpha $ under all $\theta =(\theta ^{L\prime },\theta ^{R\prime
})^{\prime }\in \Theta _{0}$ for $\theta ^{L},\theta ^{R}\in \Theta
_{0}^{S}\backslash \Theta _{s}^{S}$.

\item Spot-check that $\varphi $ indeed satisfies $\mathbb{E}_{\theta
}[\varphi (\mathbf{Y}^{\ast })]\leq \alpha $ for all $\theta \in \Theta _{0}$%
, including $\theta =(\theta ^{L\prime },\theta ^{R\prime })^{\prime }$ with 
$\theta ^{L},\theta ^{R}\in \Theta _{s}^{S}$.
\end{enumerate}

Note that the parameter set under consideration becomes consecutively larger
in Steps 1-3, and the form imposed on tests described in Section \ref%
{sec:switching} ensures that any potential remaining overrejections of a
stage can be corrected by the subsequent stage, which increases the
numerical stability of the algorithm. Null rejection probabilities are
estimated throughout with the importance sampling estimator of Section \ref%
{sec:ImpSamp}. This estimator has an importance sampling standard error
(appropriately adjusted for the dependence in (\ref{smartRP_hat})) of no
more than $0.05\%$, $0.15\%$ and $0.2\%$ for $\alpha =1\%$, $5\%$, 10\%,
respectively.

We apply this algorithm to the default nuisance parameter space with $%
n_{0}=50$ of Section \ref{sec:paraspace} for $k=8$ and $\alpha \in
\{0.002,0.004,\ldots $, $0.008,0.01,0.02,\ldots $, $0.10,0.12,\ldots $, $%
0.20,0.25,0.30,0.40,0.50\}$, and also for $k=4$ to the larger nuisance
parameter space where $n_{0}=25$ in the constraints of Section \ref%
{sec:paraspace}. We ensure that the 95\% and 99\% level confidence intervals
obtained via test inversion\footnote{%
In the rare samples where test inversion yields disconnected sets, we set
the confidence interval equal to the smallest interval that contains all
non-rejections.} always contain the 90\% and 95\% level intervals,
respectively, and that the p-value is always coherent with the level of the
reported confidence interval by adding the obvious additional constraints to
the form of the tests for $\alpha \neq 0.05$. After trivial modifications
that decrease their rejection probability by an arbitrarily small amount,
these tests satisfy the condition of the two-tailed analogue of Theorem \ref%
{thm_main}; see Appendix \ref{sec:gen2tails} for details. For comparison
purposes, we also generate tests with $k\in \{4,12\}$ for $\alpha \in
\{0.01,0.05\}$ in the default parameter space. For $k=8$ and a given level $%
\alpha $, the computations take about one hour on a modern 24 core
workstation in a Fortran implementation, and about 3 hours for $k=12$. Once
the values for $\{\lambda _{i}^{S},\theta _{i}^{S}\}_{i=1}^{M_{S}}$ and $%
\{\lambda _{i},\theta _{i}\}_{i=1}^{M}$ are determined in this fashion, the
evaluation of the resulting test $\varphi $, as required in applications, is
computationally trivial.

\subsection{\label{sec:GMM}Application to GMM}

Suppose we estimate the parameter vector $\vartheta =(\beta ,\gamma ^{\prime
})^{\prime }\in 
\mathbb{R}
^{q}$ by Hansen's (1982)\nocite{Hansen82} Generalized Method of Moments
using the $r\times 1$ moment condition $\mathbb{E}[g(\vartheta ,z)]=0$ from
data $z_{j}$, $j=1,\ldots ,n_{z}$ and $r\times r$ positive definite
weighting function $\hat{\Psi}$. Suppose further that the data $z_{j}$ is
i.i.d.~across clusters defined by the partition $\{\mathcal{C}%
_{i}\}_{i=1}^{n}$ of $\{j:1\leq j\leq n\}$ (so that $\mathcal{C}_{i}=\{i\}$
and $n=n_{z}$ under i.i.d.~sampling of $z_{j}$). Then as $n\rightarrow
\infty $, under standard regularity conditions, $\hat{\vartheta}=(\hat{\beta}%
,\hat{\gamma}^{\prime })^{\prime }$ satisfies%
\begin{equation}
\sqrt{n}(\hat{\vartheta}-\vartheta )=\left( \Gamma ^{\prime }\Psi \Gamma
\right) ^{-1}\Gamma ^{\prime }\Psi \cdot n^{-1/2}\sum_{i=1}^{n}G_{i}+o_{p}(1)
\label{GMM_asy}
\end{equation}%
where $G_{i}=\sum_{j\in \mathcal{C}_{i}}g(\vartheta ,z_{j})$ are i.i.d., $%
\hat{\Gamma}=n^{-1}\sum_{j=1}^{n_{z}}\partial g(\vartheta ,z_{j})/\partial
\vartheta ^{\prime }|_{\vartheta =\hat{\theta}}\overset{p}{\rightarrow }%
\Gamma $ and $\hat{\Psi}\overset{p}{\rightarrow }\Psi $ with $\Gamma $ and $%
\Psi $ non-stochastic, so that the large sample variability of $\hat{%
\vartheta}$ is entirely driven by the average of i.i.d.~observations $G_{i}$%
. Correspondingly, the standard GMM hypothesis test of $H_{0}:\beta =\beta
_{0}$ is numerically equivalent to the usual t-test of $H_{0}:\beta =\beta
_{0}$ computed from the $n$ observations%
\begin{equation}
\hat{W}_{i}=\hat{\beta}+\iota _{1}^{\prime }(\hat{\Gamma}^{\prime }\hat{\Psi}%
\hat{\Gamma})^{-1}\hat{\Gamma}^{\prime }\hat{G}_{i},\text{ }i=1,\ldots ,n
\label{What_GMM}
\end{equation}%
where $\hat{G}_{i}=\sum_{j\in \mathcal{C}_{i}}g(\hat{\vartheta},z_{j})$ and $%
\iota _{1}$ is the $q\times 1$ vector $(1,0,\ldots ,0)^{\prime }$.

Thus, to the extent that $G_{i}$ follows a distribution with moderately
heavy tails, one would expect that small sample inference is improved by
applying the new test to the observations $\{\hat{W}_{i}\}_{i=1}^{n}$. A
corresponding analytical refinement result analogous to Theorem \ref%
{thm_main} is beyond the scope of this paper.

\section{Small Sample Results}

This section presents six sets of small sample results: two for inference
about the mean from an i.i.d.~sample, two for the difference of population
means from two independent samples, and two for a regression coefficient
with clustered standard errors. In all three cases, the data is either
generated from analytical distributions, or from draws with replacement from
a large data set. We focus on tests of nominal 5\% level in the main text;
results for 1\% level tests are reported in the appendix and exhibit broadly
similar patterns.

\subsection{Inference for the Mean}

\begin{figure}[tbp] 
\begin{center}
\small%
\caption{Population Densities in Monte Carlo Experiments}\bigskip

\includegraphics[width=5.985in,keepaspectratio]{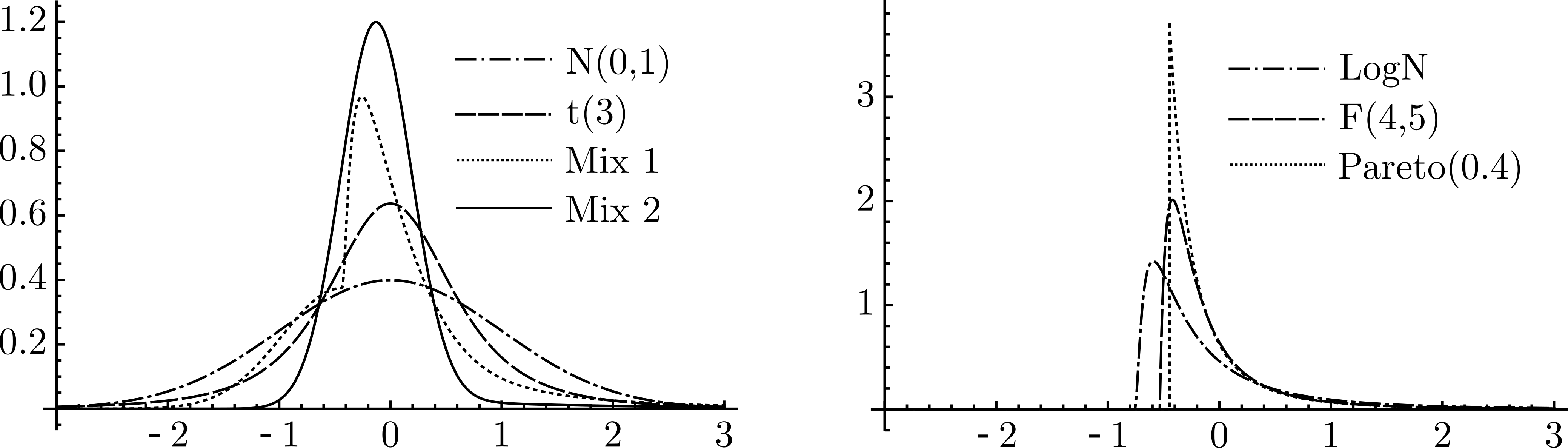}

\end{center}
\begin{small}%
\end{small}%
\end{figure}%
We initially compare our default test with $k=8$ (\textquotedblleft \textsc{%
new default}\textquotedblright ) with three standard tests for the
population mean: standard t-statistic based inference with critical value
from a student-t distribution with $n-1$ degrees of freedom
\textquotedblleft \textsc{t-stat}\textquotedblright ; the percentile-t
bootstrap based on the absolute value of the t-statistic \textquotedblleft 
\textsc{sym-boot}\textquotedblright ; and the percentile-t bootstrap based
on the signed t-statistic \textquotedblleft \textsc{asym-boot}%
\textquotedblright . The data is generated from one of seven populations:
the standard normal distribution N(0,1), the log-normal distribution LogN,
the F-distribution with 4 degrees of freedom in the numerator and 5 in the
denominator F(4,5), the student-t distribution with 3 degrees of freedom
t(3), an equal probability mixture between a N(0,1) and LogN distribution
Mix1, and a 95 / 5 mixture between a N(0,1/25) and a LogN distribution Mix2.
All population distributions are normalized to have mean zero and unit
variance; the corresponding densities are plotted in Figure 1. Technically,
only the Pareto distribution, the t-distribution and the F-distribution
exhibit heavy Pareto like tails in the sense of (\ref{evt_con}) with tail
indices $\xi =0.4$, $\xi =1/3$ and $\xi =0.4$, respectively, but as a
practical matter, also the log-normal and the two mixture distributions are
right-skewed enough to make small sample inference challenging.

\begin{table}[tbp] 
\begin{center}
\small%
\caption{Small Sample Results in Inference for the Mean}

\bigskip

$%
\begin{tabular}{lrrrrrrr}
\hline\hline
& N(0,1) & LogN & F(4,5) & \ \ t(3) & P(0.4) & Mix 1 & Mix 2 \\ \hline
\multicolumn{8}{c}{$n=50$} \\ \hline
\textsc{t-stat} & 5.0\TEXTsymbol{\vert}\textbf{0.99} & 10.0\TEXTsymbol{\vert}%
0.74 & 13.5\TEXTsymbol{\vert}0.65 & 4.7\TEXTsymbol{\vert}\textbf{1.01} & 13.6%
\TEXTsymbol{\vert}0.62 & 7.4\TEXTsymbol{\vert}0.88 & 18.8\TEXTsymbol{\vert}%
0.60 \\ 
\textsc{sym-boot} & 5.0\TEXTsymbol{\vert}\textbf{1.00} & 7.8\TEXTsymbol{\vert%
}1.07 & 10.8\TEXTsymbol{\vert}1.27 & 4.1\TEXTsymbol{\vert}\textbf{1.11} & 
10.6\TEXTsymbol{\vert}1.35 & 6.9\TEXTsymbol{\vert}1.12 & 18.1\TEXTsymbol{%
\vert}1.44 \\ 
\textsc{asym-boot} & 5.2\TEXTsymbol{\vert}\textbf{1.00} & 6.9\TEXTsymbol{%
\vert}0.96 & 8.9\TEXTsymbol{\vert}1.03 & 7.4\TEXTsymbol{\vert}1.06 & 8.6%
\TEXTsymbol{\vert}1.07 & 8.1\TEXTsymbol{\vert}1.02 & 17.6\TEXTsymbol{\vert}%
1.08 \\ 
\textsc{new default} & 3.8\TEXTsymbol{\vert}\textbf{1.10} & 3.2\TEXTsymbol{%
\vert}\textbf{0.93} & 4.5\TEXTsymbol{\vert}\textbf{0.77} & 3.3\TEXTsymbol{%
\vert}\textbf{1.44} & 5.2\TEXTsymbol{\vert}\textbf{0.72} & 3.3\TEXTsymbol{%
\vert}\textbf{1.11} & 12.2\TEXTsymbol{\vert}0.66 \\ \hline
\multicolumn{8}{c}{$n=100$} \\ \hline
\textsc{t-stat} & 4.9\TEXTsymbol{\vert}\textbf{1.00} & 8.2\TEXTsymbol{\vert}%
0.83 & 10.9\TEXTsymbol{\vert}0.73 & 4.6\TEXTsymbol{\vert}\textbf{1.01} & 11.5%
\TEXTsymbol{\vert}0.71 & 6.9\TEXTsymbol{\vert}0.91 & 15.4\TEXTsymbol{\vert}%
0.60 \\ 
\textsc{sym-boot} & 5.0\TEXTsymbol{\vert}\textbf{1.00} & 6.7\TEXTsymbol{\vert%
}1.04 & 9.1\TEXTsymbol{\vert}1.21 & 4.2\TEXTsymbol{\vert}\textbf{1.08} & 9.2%
\TEXTsymbol{\vert}1.19 & 6.4\TEXTsymbol{\vert}1.06 & 14.1\TEXTsymbol{\vert}%
1.17 \\ 
\textsc{asym-boot} & 5.1\TEXTsymbol{\vert}\textbf{1.00} & 6.5\TEXTsymbol{%
\vert}0.97 & 7.5\TEXTsymbol{\vert}1.02 & 6.6\TEXTsymbol{\vert}1.05 & 7.7%
\TEXTsymbol{\vert}1.00 & 7.4\TEXTsymbol{\vert}1.00 & 13.4\TEXTsymbol{\vert}%
0.95 \\ 
\textsc{new default} & 4.8\TEXTsymbol{\vert}\textbf{1.01} & 3.1\TEXTsymbol{%
\vert}\textbf{1.26} & 3.6\TEXTsymbol{\vert}\textbf{1.04} & 3.8\TEXTsymbol{%
\vert}\textbf{1.37} & 3.6\TEXTsymbol{\vert}\textbf{1.00} & 3.3\TEXTsymbol{%
\vert}\textbf{1.31} & 7.9\TEXTsymbol{\vert}0.75 \\ \hline
\multicolumn{8}{c}{$n=500$} \\ \hline
\textsc{t-stat} & 5.0\TEXTsymbol{\vert}\textbf{1.00} & 5.9\TEXTsymbol{\vert}%
\textbf{0.95} & 7.8\TEXTsymbol{\vert}0.87 & 4.8\TEXTsymbol{\vert}\textbf{1.01%
} & 7.9\TEXTsymbol{\vert}0.86 & 5.8\TEXTsymbol{\vert}\textbf{0.97} & 9.6%
\TEXTsymbol{\vert}0.77 \\ 
\textsc{sym-boot} & 5.0\TEXTsymbol{\vert}\textbf{1.00} & 5.4\TEXTsymbol{\vert%
}\textbf{1.01} & 6.9\TEXTsymbol{\vert}1.10 & 4.7\TEXTsymbol{\vert}\textbf{%
1.03} & 6.8\TEXTsymbol{\vert}1.19 & 5.4\TEXTsymbol{\vert}\textbf{1.01} & 8.1%
\TEXTsymbol{\vert}1.04 \\ 
\textsc{asym-boot} & 5.0\TEXTsymbol{\vert}\textbf{1.00} & 5.5\TEXTsymbol{%
\vert}\textbf{1.00} & 7.0\TEXTsymbol{\vert}1.01 & 6.1\TEXTsymbol{\vert}1.02
& 6.4\TEXTsymbol{\vert}1.05 & 6.0\TEXTsymbol{\vert}1.00 & 7.7\TEXTsymbol{%
\vert}0.95 \\ 
\textsc{new default} & 4.9\TEXTsymbol{\vert}\textbf{1.00} & 4.1\TEXTsymbol{%
\vert}\textbf{1.18} & 4.3\TEXTsymbol{\vert}\textbf{1.21} & 4.5\TEXTsymbol{%
\vert}\textbf{1.13} & 4.1\TEXTsymbol{\vert}\textbf{1.22} & 4.4\TEXTsymbol{%
\vert}\textbf{1.18} & 3.2\TEXTsymbol{\vert}\textbf{1.21}%
\end{tabular}%
$

\end{center}
\linespread{1.00}\selectfont
\begin{small}
\linespread{1.00}\selectfont%

Notes: Entries are the null rejection probability in percent, and the
average length of confidence intervals relative to average length of
confidence intervals based on size corrected t-statistic (bold if null
rejection probability is smaller than 6\%) of nominal 5\% level tests. Based
on 20,000 replications.%
\linespread{1.00}\selectfont
\end{small}
\linespread{1.00}\selectfont%
\end{table}%

Table 1 reports null rejection probabilities, along with the average length
of the resulting confidence interval, expressed as a multiple of the average
length of the infeasible confidence interval that is based on the
t-statistic, but applies the size adjusted critical value. As can be seen
from Table 1, the new method comes much closer to controlling size under
moderately heavy-tailed distributions. For the thin-tailed normal
population, the new method only leads to 10\% longer intervals for $n=50$,
and essentially no excessive length for $n\in \{100,500\}$. For other
populations, the intervals of the new method are often much longer than
those from other methods; but since the other methods do not come close to
controlling size, that comparison is not meaningful (entries in bold
indicate where tests are close to valid with a null rejection probability
below 6\%). Remarkably, for $n=50$, the new method yields shorter confidence
intervals than the size corrected t-statistic for some populations while
still controlling size. The explicit modelling of the tails can also yield
efficiency gains, since under a Pareto-like tail, the sample mean is not the
efficient estimator of the population mean.

An exception to the good performance of the new method is the student-t
population with three degrees of freedom. Even though it has fairly heavy
tails, with the third moment not existing, its symmetry enables \textsc{%
t-stat} and \textsc{sym-boot} to control size at much less cost to average
length compared to the new method.\footnote{%
The analytical result by \cite{Bakirov05} shows that the usual 5\% level
t-test remains small sample valid under arbitrary scale mixtures of normals,
which includes all t-distributions.}

\begin{table}[tbp] 
\begin{center}
\small%
\caption{Small Sample Results of New Methods for Inference for the  Mean}

$%
\begin{tabular}{lrrrrrrr}
\hline\hline
& N(0,1) & LogN & F(4,5) & \ \ t(3) & P(0.4) & Mix 1 & Mix 2 \\ \hline
\multicolumn{8}{c}{$n=25$} \\ \hline
\textsc{def}: $k=8,n_{0}=50$ & 4.7\TEXTsymbol{\vert}\textbf{1.00} & 13.1%
\TEXTsymbol{\vert}0.64 & 16.5\TEXTsymbol{\vert}0.56 & 4.0\TEXTsymbol{\vert}%
\textbf{1.02} & 17.8\TEXTsymbol{\vert}0.53 & 8.0\TEXTsymbol{\vert}0.82 & 18.7%
\TEXTsymbol{\vert}0.62 \\ 
$k=4,n_{0}=50$ & 2.7\TEXTsymbol{\vert}\textbf{1.25} & 7.4\TEXTsymbol{\vert}%
0.70 & 10.1\TEXTsymbol{\vert}0.61 & 3.1\TEXTsymbol{\vert}\textbf{1.16} & 11.6%
\TEXTsymbol{\vert}0.57 & 5.2\TEXTsymbol{\vert}\textbf{0.93} & 11.7%
\TEXTsymbol{\vert}0.67 \\ 
$k=12,n_{0}=50$ & \multicolumn{1}{c}{NA} & \multicolumn{1}{c}{NA} & 
\multicolumn{1}{c}{NA} & \multicolumn{1}{c}{NA} & \multicolumn{1}{c}{NA} & 
\multicolumn{1}{c}{NA} & \multicolumn{1}{c}{NA} \\ 
$k=4,n_{0}=25$ & 2.3\TEXTsymbol{\vert}\textbf{1.42} & 4.3\TEXTsymbol{\vert}%
\textbf{0.80} & 5.7\TEXTsymbol{\vert}\textbf{0.67} & 2.1\TEXTsymbol{\vert}%
\textbf{1.49} & 7.3\TEXTsymbol{\vert}0.61 & 3.1\TEXTsymbol{\vert}\textbf{1.09%
} & 9.3\TEXTsymbol{\vert}0.72 \\ \hline
\multicolumn{8}{c}{$n=50$} \\ \hline
\textsc{def}: $k=8,n_{0}=50$ & 3.8\TEXTsymbol{\vert}\textbf{1.10} & 3.2%
\TEXTsymbol{\vert}\textbf{0.93} & 4.5\TEXTsymbol{\vert}\textbf{0.77} & 3.3%
\TEXTsymbol{\vert}\textbf{1.44} & 5.2\TEXTsymbol{\vert}\textbf{0.72} & 3.3%
\TEXTsymbol{\vert}\textbf{1.11} & 12.2\TEXTsymbol{\vert}0.66 \\ 
$k=4,n_{0}=50$ & 3.9\TEXTsymbol{\vert}\textbf{1.15} & 2.6\TEXTsymbol{\vert}%
\textbf{0.99} & 3.6\TEXTsymbol{\vert}\textbf{0.81} & 3.6\TEXTsymbol{\vert}%
\textbf{1.46} & 4.3\TEXTsymbol{\vert}\textbf{0.76} & 3.2\TEXTsymbol{\vert}%
\textbf{1.15} & 11.1\TEXTsymbol{\vert}0.66 \\ 
$k=12,n_{0}=50$ & 3.5\TEXTsymbol{\vert}\textbf{1.15} & 4.3\TEXTsymbol{\vert}%
\textbf{0.86} & 6.1\TEXTsymbol{\vert}0.73 & 2.8\TEXTsymbol{\vert}\textbf{1.43%
} & 8.0\TEXTsymbol{\vert}0.68 & 2.8\TEXTsymbol{\vert}\textbf{1.08} & 10.9%
\TEXTsymbol{\vert}0.67 \\ 
$k=4,n_{0}=25$ & 3.7\TEXTsymbol{\vert}\textbf{1.15} & 2.1\TEXTsymbol{\vert}%
\textbf{1.15} & 2.8\TEXTsymbol{\vert}\textbf{0.94} & 3.2\TEXTsymbol{\vert}%
\textbf{1.70} & 3.2\TEXTsymbol{\vert}\textbf{0.87} & 3.2\TEXTsymbol{\vert}%
\textbf{1.32} & 11.6\TEXTsymbol{\vert}0.73 \\ \hline
\multicolumn{8}{c}{$n=100$} \\ \hline
\textsc{def}: $k=8,n_{0}=50$ & 4.8\TEXTsymbol{\vert}\textbf{1.01} & 3.1%
\TEXTsymbol{\vert}\textbf{1.26} & 3.6\TEXTsymbol{\vert}\textbf{1.04} & 3.8%
\TEXTsymbol{\vert}\textbf{1.37} & 3.6\TEXTsymbol{\vert}\textbf{1.00} & 3.3%
\TEXTsymbol{\vert}\textbf{1.31} & 7.9\TEXTsymbol{\vert}0.75 \\ 
$k=4,n_{0}=50$ & 5.0\TEXTsymbol{\vert}\textbf{1.02} & 3.0\TEXTsymbol{\vert}%
\textbf{1.23} & 3.4\TEXTsymbol{\vert}\textbf{1.00} & 3.8\TEXTsymbol{\vert}%
\textbf{1.51} & 3.8\TEXTsymbol{\vert}\textbf{0.95} & 3.5\TEXTsymbol{\vert}%
\textbf{1.30} & 5.6\TEXTsymbol{\vert}\textbf{0.71} \\ 
$k=12,n_{0}=50$ & 4.0\TEXTsymbol{\vert}\textbf{1.06} & 2.8\TEXTsymbol{\vert}%
\textbf{1.27} & 3.5\TEXTsymbol{\vert}\textbf{1.07} & 3.6\TEXTsymbol{\vert}%
\textbf{1.33} & 3.4\TEXTsymbol{\vert}\textbf{1.04} & 2.9\TEXTsymbol{\vert}%
\textbf{1.33} & 8.7\TEXTsymbol{\vert}0.75 \\ 
$k=4,n_{0}=25$ & 5.1\TEXTsymbol{\vert}\textbf{1.01} & 2.6\TEXTsymbol{\vert}%
\textbf{1.37} & 3.3\TEXTsymbol{\vert}\textbf{1.13} & 4.2\TEXTsymbol{\vert}%
\textbf{1.56} & 3.7\TEXTsymbol{\vert}\textbf{1.07} & 3.6\TEXTsymbol{\vert}%
\textbf{1.43} & 6.2\TEXTsymbol{\vert}0.82 \\ \hline
\multicolumn{8}{c}{$n=500$} \\ \hline
\textsc{def}: $k=8,n_{0}=50$ & 4.9\TEXTsymbol{\vert}\textbf{1.00} & 4.1%
\TEXTsymbol{\vert}\textbf{1.18} & 4.3\TEXTsymbol{\vert}\textbf{1.21} & 4.5%
\TEXTsymbol{\vert}\textbf{1.13} & 4.1\TEXTsymbol{\vert}\textbf{1.22} & 4.4%
\TEXTsymbol{\vert}\textbf{1.18} & 3.2\TEXTsymbol{\vert}\textbf{1.21} \\ 
$k=4,n_{0}=50$ & 5.1\TEXTsymbol{\vert}\textbf{1.00} & 4.4\TEXTsymbol{\vert}%
\textbf{1.31} & 4.6\TEXTsymbol{\vert}\textbf{1.26} & 4.4\TEXTsymbol{\vert}%
\textbf{1.31} & 4.4\TEXTsymbol{\vert}\textbf{1.24} & 4.2\TEXTsymbol{\vert}%
\textbf{1.32} & 3.2\TEXTsymbol{\vert}\textbf{1.10} \\ 
$k=12,n_{0}=50$ & 5.1\TEXTsymbol{\vert}\textbf{1.00} & 3.7\TEXTsymbol{\vert}%
\textbf{1.18} & 4.2\TEXTsymbol{\vert}\textbf{1.17} & 3.1\TEXTsymbol{\vert}%
\textbf{1.16} & 4.2\TEXTsymbol{\vert}\textbf{1.16} & 3.7\TEXTsymbol{\vert}%
\textbf{1.19} & 2.7\TEXTsymbol{\vert}\textbf{1.28} \\ 
$k=4,n_{0}=25$ & 5.0\TEXTsymbol{\vert}\textbf{1.00} & 4.2\TEXTsymbol{\vert}%
\textbf{1.34} & 4.2\TEXTsymbol{\vert}\textbf{1.38} & 4.5\TEXTsymbol{\vert}%
\textbf{1.28} & 4.3\TEXTsymbol{\vert}\textbf{1.32} & 4.1\TEXTsymbol{\vert}%
\textbf{1.36} & 3.1\TEXTsymbol{\vert}\textbf{1.27}%
\end{tabular}%
$

\end{center}
\linespread{1.00}\selectfont
\begin{small}
\linespread{1.00}\selectfont%

Notes: See Table 1.%
\linespread{1.00}\selectfont
\end{small}
\linespread{1.00}\selectfont%
\end{table}%

Table 2 compares different versions of the new method across the same set of
seven populations. We consider $k\in \{4,8,12\}$ for the default parameter
space with $n_{0}=50$, and also include the even more robust test with $k=4$
constructed from the parameter space in Section \ref{sec:paraspace} with $%
n_{0}=25$. For $n=25,$ only the test with $n_{0}=25$ comes close to
controlling size for non-thin tailed populations (and the test with $k=12$
cannot be applied at all, since there is only a single \textquotedblleft
middle\textquotedblright\ observation). For larger $n$, the tests with $k=4$
are even more successful in controlling size compared to the default method,
but at a non-negligible cost in terms of longer confidence intervals. In
contrast, the test for $k=12$ does not yield an additional substantial
reduction in average length, and has worse size control for $n=50$. These
results underlie our choice of the test with $k=8$ and $n_{0}=50$ as the
default, and in the following, we exclusively focus on this variant.

A potential objection to this first set of Monte Carlo results is that the
underlying populations have smooth tails, which might overstate the
effectiveness of the new method \textquotedblleft in
practice\textquotedblright . To address this concern, consider a population
that is equal to the (discrete) distribution from a large economic data set.
We use the income data of 2016 mortgage applicants as reported by banks
under the Home Mortgage Disclosure Act (HMDA). From this database of more
than 16 million applications, we create subpopulations that condition on
U.S.~state and the gender of the applicant, as well as the purpose of the
mortgage (home purchase, home improvement or refinancing) and whether or not
the unit is owner-occupied. We eliminate all records with missing data, and
only retain subpopulations with at least 5000 observations. For each of the
resulting 300 subpopulations, we compare the performance of alternative
methods for inference about the mean, based on i.i.d.~samples of size $n$
(that is, sampling is with replacement).

\begin{figure}[tbp] 
\begin{center}
\small%
\caption{Small Sample Results for HMDA Populations}\bigskip
\includegraphics[width=5.8356in,keepaspectratio]{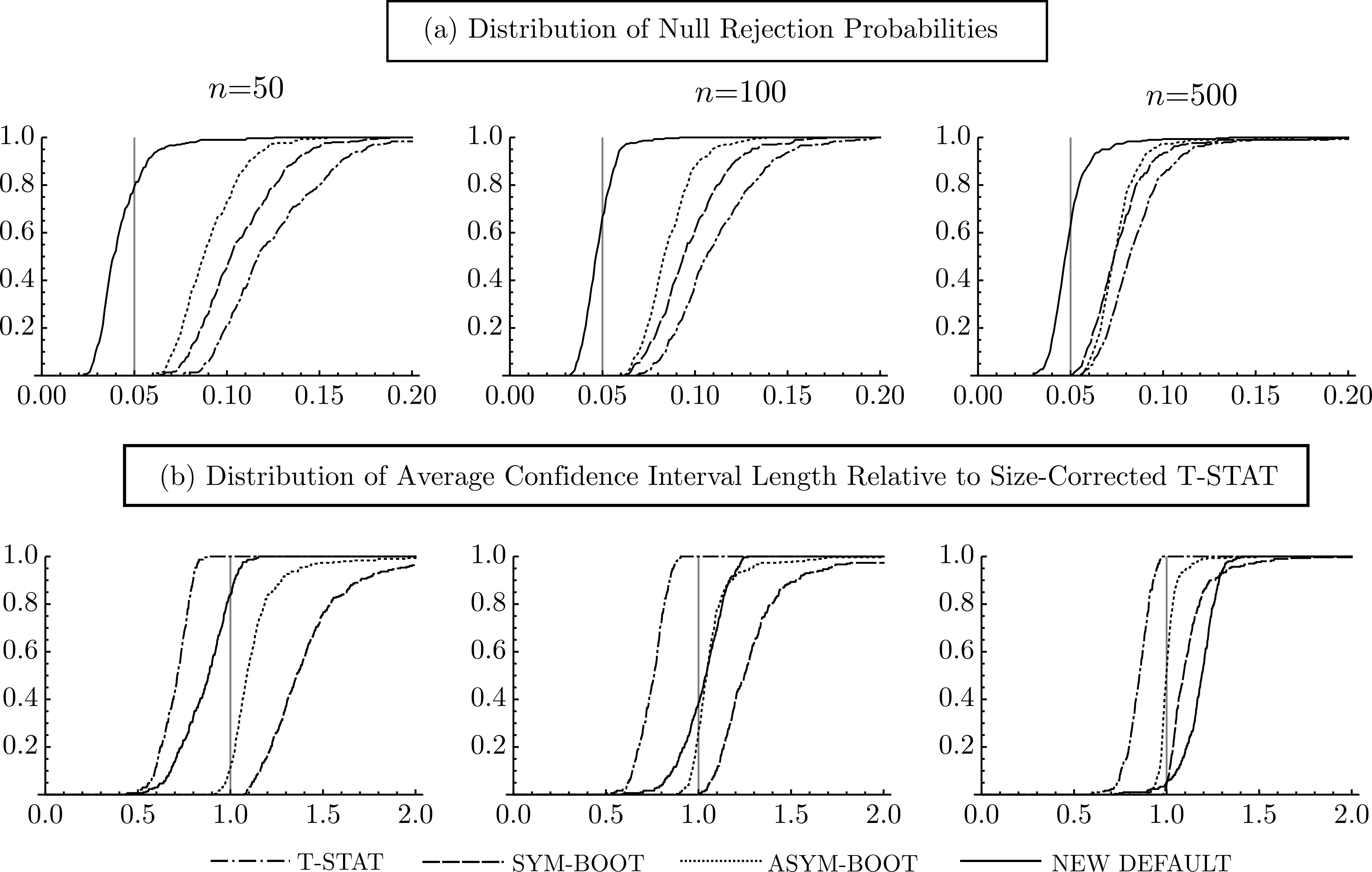}

\end{center}
\begin{small}%
\end{small}%
\end{figure}%

Panel (a) of Figure 2 plots the cumulative distribution function of the null
rejection probabilities over the 300 subpopulations for each tests
considered in Table 1, estimated from 20,000 draws from each subpopulation.
Nominally, all mass should be to the left of the 5\% line, but the
traditional tests don't comes close. For instance, for $n=100,$ the usual
t-statistic has null rejection probability of less than 10\% for only
approximately 40\% of the 300 subpopulation. In comparison, the new test
controls size much more successfully.

Panel (b) of Figure 2 plots the cumulative distribution function of the
average length of the confidence intervals, relative to the average length
of the size corrected t-statistic based interval. For $n=50$, the new method
not only controls size better than the bootstrap tests, but it also leads to
confidence intervals that are typically shorter on average. In fact, they
are substantially shorter than what is obtained from the infeasible size
corrected interval. For $n=\{100,500\}$, this is no longer the case and the
better size control of the new method comes at the cost of somewhat longer
confidence intervals.

One might argue that in the HMDA example, one could avoid the complications
of the heavy right tail of the income distribution by considering the
logarithm of the applicants' income. But, of course, there is no robust way
to transform a confidence interval for the population mean of log-income
into a valid confidence interval for the population mean income. What is
more, in many contexts, the policy relevant parameter is the population mean
(and not, say, the median) of some potentially heavy-tailed distribution:
think of health care costs, or flood damage, or asset returns.

\subsection{Difference between Two Population Means}

Our second set of Monte Carlo experiment concerns inference about the
difference of two population means $\mathbb{E}[W^{\func{I}}]-\mathbb{E}[W^{%
\func{II}}]$ based on two independent equal-sized i.i.d.~samples $%
W_{i}^{j}\sim W^{j}$, $i=1,\ldots ,n/2$, $j\in \{$I$,$II$\}$. Casting this
in terms of a linear regression and applying the general mapping (\ref%
{What_GMM})\ yields%
\begin{equation*}
\hat{W}_{i}=\left\{ 
\begin{array}{l}
\bar{W}^{\func{I}}-\bar{W}^{\func{II}}+2(W_{i}^{\func{I}}-\bar{W}^{\func{I}})%
\text{ \ for }i\leq n/2 \\ 
\bar{W}^{\func{I}}-\bar{W}^{\func{II}}-2(W_{i-n/2}^{\func{II}}-\bar{W}^{%
\func{II}})\ \text{ for }i>n/2%
\end{array}%
\right.
\end{equation*}%
where $\bar{W}^{j}=(n/2)^{-1}\sum_{i=1}^{n/2}W_{i}^{j}$ are the sample means
for $j\in \{$I$,$II$\}$.

We initially generate data according to%
\begin{equation}
W_{i}^{\func{I}}=\nu _{i}+\varepsilon _{i}^{\func{I}}\text{, \ \ }W_{i}^{%
\func{II}}=\varepsilon _{i}^{\func{II}}  \label{W2S}
\end{equation}%
for $i=1,\ldots ,n/2$, where $\varepsilon _{i}^{j}\sim iid\mathcal{N}%
(0,1/10) $ across $i$ and $j\in \{$I$,$II$\}$, and $\nu _{i}$ is distributed
according to one of the distributions of Table 1. Inference about $\mathbb{E}%
[W^{\func{I}}]-\mathbb{E}[W^{\func{II}}]$ can then be thought of as
inference about the average treatment effect $\mathbb{E}[\nu _{i}]$, with
the design amounting to a large but highly heterogeneous additive treatment
effect.

\begin{table}[tbp] 
\begin{center}
\small%
\caption{Small Sample Results for Difference of Population Means}

\bigskip

$%
\begin{tabular}{lrrrrrrr}
\hline\hline
& N(0,1) & LogN & F(4,5) & \ \ t(3) & P(0.4) & Mix 1 & Mix 2 \\ \hline
\multicolumn{8}{c}{$n=50$} \\ \hline
\textsc{t-stat} & 5.7\TEXTsymbol{\vert}\textbf{0.96} & 8.9\TEXTsymbol{\vert}%
0.81 & 8.9\TEXTsymbol{\vert}0.83 & 5.1\TEXTsymbol{\vert}\textbf{0.99} & 8.9%
\TEXTsymbol{\vert}0.83 & 7.2\TEXTsymbol{\vert}0.90 & 9.0\TEXTsymbol{\vert}%
0.83 \\ 
\textsc{sym-boot} & 5.7\TEXTsymbol{\vert}\textbf{0.97} & 8.3\TEXTsymbol{\vert%
}1.02 & 8.6\TEXTsymbol{\vert}1.12 & 4.7\TEXTsymbol{\vert}\textbf{1.07} & 8.6%
\TEXTsymbol{\vert}1.12 & 6.8\TEXTsymbol{\vert}1.06 & 8.9\TEXTsymbol{\vert}%
1.22 \\ 
\textsc{asym-boot} & 5.9\TEXTsymbol{\vert}\textbf{0.97} & 8.8\TEXTsymbol{%
\vert}0.93 & 9.1\TEXTsymbol{\vert}0.98 & 7.4\TEXTsymbol{\vert}1.03 & 9.1%
\TEXTsymbol{\vert}0.98 & 8.6\TEXTsymbol{\vert}0.98 & 10.0\TEXTsymbol{\vert}%
1.02 \\ 
\textsc{new default} & 2.0\TEXTsymbol{\vert}\textbf{1.40} & 3.8\TEXTsymbol{%
\vert}\textbf{1.06} & 4.7\TEXTsymbol{\vert}\textbf{1.06} & 2.3\TEXTsymbol{%
\vert}\textbf{1.47} & 4.8\TEXTsymbol{\vert}\textbf{1.05} & 3.5\TEXTsymbol{%
\vert}\textbf{1.25} & 6.2\TEXTsymbol{\vert}0.99 \\ \hline
\multicolumn{8}{c}{$n=100$} \\ \hline
\textsc{t-stat} & 5.5\TEXTsymbol{\vert}\textbf{0.98} & 7.8\TEXTsymbol{\vert}%
0.87 & 7.9\TEXTsymbol{\vert}0.87 & 4.9\TEXTsymbol{\vert}\textbf{1.00} & 8.2%
\TEXTsymbol{\vert}0.86 & 6.9\TEXTsymbol{\vert}0.92 & 9.9\TEXTsymbol{\vert}%
0.82 \\ 
\textsc{sym-boot} & 5.4\TEXTsymbol{\vert}\textbf{0.98} & 7.0\TEXTsymbol{\vert%
}1.04 & 7.4\TEXTsymbol{\vert}1.13 & 4.4\TEXTsymbol{\vert}\textbf{1.07} & 7.7%
\TEXTsymbol{\vert}1.14 & 6.4\TEXTsymbol{\vert}1.05 & 9.6\TEXTsymbol{\vert}%
1.21 \\ 
\textsc{asym-boot} & 5.4\TEXTsymbol{\vert}\textbf{0.98} & 7.6\TEXTsymbol{%
\vert}0.98 & 7.9\TEXTsymbol{\vert}1.01 & 6.9\TEXTsymbol{\vert}1.04 & 8.3%
\TEXTsymbol{\vert}1.01 & 7.6\TEXTsymbol{\vert}0.99 & 10.6\TEXTsymbol{\vert}%
1.02 \\ 
\textsc{new default} & 4.4\TEXTsymbol{\vert}\textbf{1.08} & 3.4\TEXTsymbol{%
\vert}\textbf{1.28} & 4.2\TEXTsymbol{\vert}\textbf{1.20} & 3.7\TEXTsymbol{%
\vert}\textbf{1.43} & 4.4\TEXTsymbol{\vert}\textbf{1.18} & 4.0\TEXTsymbol{%
\vert}\textbf{1.30} & 7.8\TEXTsymbol{\vert}1.01 \\ \hline
\multicolumn{8}{c}{$n=500$} \\ \hline
\textsc{t-stat} & 5.4\TEXTsymbol{\vert}\textbf{0.98} & 6.4\TEXTsymbol{\vert}%
0.94 & 6.4\TEXTsymbol{\vert}0.93 & 4.6\TEXTsymbol{\vert}\textbf{1.01} & 6.8%
\TEXTsymbol{\vert}0.91 & 5.7\TEXTsymbol{\vert}\textbf{0.97} & 8.3\TEXTsymbol{%
\vert}0.85 \\ 
\textsc{sym-boot} & 5.5\TEXTsymbol{\vert}\textbf{0.98} & 5.8\TEXTsymbol{\vert%
}\textbf{1.01} & 6.0\TEXTsymbol{\vert}1.12 & 4.4\TEXTsymbol{\vert}\textbf{%
1.03} & 6.3\TEXTsymbol{\vert}1.10 & 5.3\TEXTsymbol{\vert}\textbf{1.01} & 7.7%
\TEXTsymbol{\vert}1.08 \\ 
\textsc{asym-boot} & 5.4\TEXTsymbol{\vert}\textbf{0.98} & 6.3\TEXTsymbol{%
\vert}0.99 & 6.6\TEXTsymbol{\vert}1.04 & 5.8\TEXTsymbol{\vert}\textbf{1.02}
& 6.6\TEXTsymbol{\vert}1.02 & 6.2\TEXTsymbol{\vert}1.00 & 8.3\TEXTsymbol{%
\vert}0.98 \\ 
\textsc{new default} & 5.3\TEXTsymbol{\vert}\textbf{0.99} & 4.2\TEXTsymbol{%
\vert}\textbf{1.21} & 4.1\TEXTsymbol{\vert}\textbf{1.25} & 4.2\TEXTsymbol{%
\vert}\textbf{1.15} & 4.2\TEXTsymbol{\vert}\textbf{1.24} & 4.2\TEXTsymbol{%
\vert}\textbf{1.22} & 4.2\TEXTsymbol{\vert}\textbf{1.31}%
\end{tabular}%
$

\end{center}
\linespread{1.00}\selectfont
\begin{small}
\linespread{1.00}\selectfont%

Notes: See Table 1.%
\linespread{1.00}\selectfont
\end{small}
\linespread{1.00}\selectfont%
\end{table}%
Table 3 compares the new method to standard t-statistic based inference and
a symmetric and asymmetric percentile-t bootstrap, where now the bootstrap
samples combine $n/2$ randomly selected observations with replacement from
each of the two samples. In this exercise the design with $\nu _{i}\sim 
\mathcal{N}(0,1)$ leads to a much longer confidence interval from the new
method with $n=50$. The reason is that with $\varepsilon _{i}^{j}\sim 
\mathcal{N}(0,1/10)$ in (\ref{W2S}), $W_{i}^{\func{I}}$ has much larger
variance than $W_{i}^{\func{II}}$. The distribution of $\hat{W}_{i}$ is thus
approximately equal to a 50-50 mixture of two normal distributions with very
different variances, which is heavier tailed than a normal distribution. At
the same time, for asymmetric $\nu _{i}$, standard methods do not control
size well, while the new method does so much more successfully.

\begin{figure}[tbp] 
\begin{center}
\small%
\caption{Small Sample Results for Two Samples from HDMA Populations}\bigskip
\includegraphics[width=5.8356in,keepaspectratio]{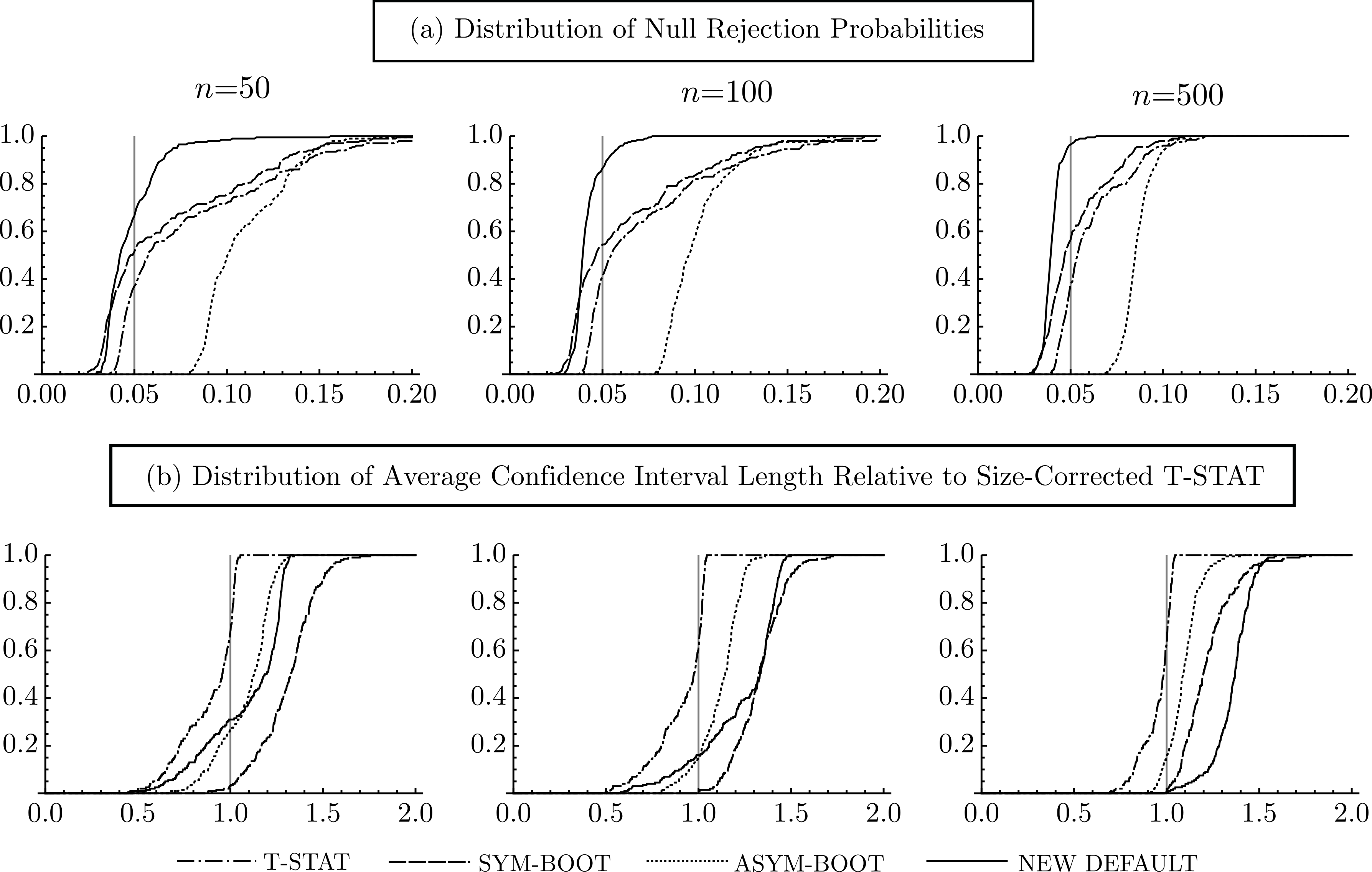}

\end{center}
\begin{small}%
\end{small}%
\end{figure}%
As a second exercise, we generate $W_{i}^{j}$ as $n/2$ i.i.d.~draws of two
randomly selected subpopulations of the HMDA data set considered in the last
section. Inference about $\mathbb{E}[W^{\func{I}}]-\mathbb{E}[W^{\func{II}}]$
then corresponds to inference about the average treatment effect if the
treatment induces a change from the distribution of income in one
subpopulation to the distribution in another---maybe a plausible calibration
for an intervention that affects individuals' incomes. Figure 3 reports the
performance of the inference methods of Table 3 for 200 randomly selected
pairs of subpopulations, in analogy to Figure 2 above. We find that also in
this exercise, standard methods fail to produce reliable inference, while
the new method is substantially more successful at controlling size.

\subsection{\label{sec:clust}Clustered Linear Regression}

A third set of Monte Carlo experiments explores the performance of the new
method for inference in a clustered linear regression%
\begin{equation}
Y_{it}=\beta X_{it}+Z_{it}^{\prime }\gamma +u_{it}\text{, }t=1,\ldots ,T_{i}%
\text{, }i=1,\ldots ,n  \label{cluster}
\end{equation}%
with conditionally mean zero $u_{it}$, so that there are $T_{i}$
observations in cluster $i$. Viewing linear regression as a special case of
GMM inference, we obtain from the development of Section \ref{sec:GMM} and
the Frisch-Waugh Theorem that 
\begin{equation*}
\hat{W}_{i}=\hat{\beta}+\left( n^{-1}\sum_{j=1}^{n}\sum_{t=1}^{T_{j}}\hat{X}%
_{jt}\right) ^{-1}\sum_{t=1}^{T_{i}}\hat{X}_{it}\hat{u}_{it}
\end{equation*}%
where $\hat{u}_{it}$ and $\hat{\beta}$ are the OLS\ estimates of $u_{it}$
and $\beta $, and $\hat{X}_{it}$ are the residuals of a OLS regression of $%
X_{it}$ on $Z_{it}$. We consider four tests of $H_{0}:\beta =\beta _{0}$:
The t-statistic implemented by STATA, which is nearly identical to a
standard t-test applied to $\hat{W}_{i}$, except for degree of freedom
corrections; the suggestion of \cite{Imbens16} to account for a potentially
small number of heterogeneous clusters \textquotedblleft \textsc{Im-Ko}%
\textquotedblright\ (we consider the variant that involves the data
dependent degree of freedom adjustment $K_{IK}$ in their notation); the wild
cluster bootstrap that imposes the null hypothesis suggested by \cite%
{Cameron08} \textquotedblleft \textsc{CGM}\textquotedblright ; and the new
default test applied to $\hat{W}_{i}$ \textquotedblleft \textsc{new default}%
\textquotedblright .

We initially consider data generated from model (\ref{cluster}) where 
\begin{equation}
u_{it}=\nu _{i}X_{it}+\varepsilon _{it}\text{,}  \label{uit}
\end{equation}%
$\nu _{i}$ is i.i.d.~mean-zero with a distribution that is one of the seven
populations considered in Table 1, one element of $Z_{it}$ is a constant,
and $X_{it},$ the 5 non-constant elements of $Z_{it}$, and $\varepsilon
_{it} $ are independent standard normal. We set $T_{i}=T=10$ for all
clusters. The presence of $\nu _{i}$ induces heteroskedastic correlations
within each cluster of observations $\{Y_{it}\}_{t=1}^{T}$.

\begin{table}[tbp] 
\begin{center}
\small%
\caption{Small Sample Results in Clustered Regression Design}

\bigskip

$%
\begin{tabular}{lrrrrrrr}
\hline\hline
& N(0,1) & LogN & F(4,5) & \ \ t(3) & P(0.4) & Mix 1 & Mix 2 \\ \hline
\multicolumn{8}{c}{$n=50$} \\ \hline
\textsc{STATA} & 5.1\TEXTsymbol{\vert}\textbf{1.00} & 9.3\TEXTsymbol{\vert}%
0.80 & 10.7\TEXTsymbol{\vert}0.76 & 4.7\TEXTsymbol{\vert}\textbf{1.01} & 10.9%
\TEXTsymbol{\vert}0.75 & 6.9\TEXTsymbol{\vert}0.92 & 12.3\TEXTsymbol{\vert}%
0.75 \\ 
\textsc{Im-Ko} & 4.9\TEXTsymbol{\vert}\textbf{1.00} & 9.1\TEXTsymbol{\vert}%
0.81 & 10.5\TEXTsymbol{\vert}0.77 & 4.5\TEXTsymbol{\vert}\textbf{1.02} & 10.7%
\TEXTsymbol{\vert}0.75 & 6.7\TEXTsymbol{\vert}0.92 & 12.0\TEXTsymbol{\vert}%
0.75 \\ 
\textsc{CGM} & 5.0\TEXTsymbol{\vert}\textbf{1.01} & 9.4\TEXTsymbol{\vert}0.77
& 10.8\TEXTsymbol{\vert}0.72 & 5.0\TEXTsymbol{\vert}\textbf{1.00} & 11.0%
\TEXTsymbol{\vert}0.70 & 7.0\TEXTsymbol{\vert}0.89 & 12.3\TEXTsymbol{\vert}%
0.68 \\ 
\textsc{new default} & 3.3\TEXTsymbol{\vert}\textbf{1.34} & 3.5\TEXTsymbol{%
\vert}\textbf{0.97} & 4.4\TEXTsymbol{\vert}\textbf{0.92} & 2.8\TEXTsymbol{%
\vert}\textbf{1.44} & 4.5\TEXTsymbol{\vert}\textbf{0.89} & 3.3\TEXTsymbol{%
\vert}\textbf{1.19} & 7.3\TEXTsymbol{\vert}0.88 \\ \hline
\multicolumn{8}{c}{$n=100$} \\ \hline
\textsc{STATA} & 5.2\TEXTsymbol{\vert}\textbf{0.99} & 7.6\TEXTsymbol{\vert}%
0.87 & 9.5\TEXTsymbol{\vert}0.81 & 4.7\TEXTsymbol{\vert}\textbf{1.01} & 9.8%
\TEXTsymbol{\vert}0.79 & 6.7\TEXTsymbol{\vert}0.93 & 11.8\TEXTsymbol{\vert}%
0.75 \\ 
\textsc{Im-Ko} & 5.1\TEXTsymbol{\vert}\textbf{1.00} & 7.5\TEXTsymbol{\vert}%
0.87 & 9.4\TEXTsymbol{\vert}0.81 & 4.7\TEXTsymbol{\vert}\textbf{1.01} & 9.8%
\TEXTsymbol{\vert}0.80 & 6.6\TEXTsymbol{\vert}0.93 & 11.7\TEXTsymbol{\vert}%
0.75 \\ 
\textsc{CGM} & 5.0\TEXTsymbol{\vert}\textbf{1.00} & 7.7\TEXTsymbol{\vert}0.85
& 9.6\TEXTsymbol{\vert}0.77 & 4.9\TEXTsymbol{\vert}\textbf{1.01} & 9.9%
\TEXTsymbol{\vert}0.75 & 6.6\TEXTsymbol{\vert}0.91 & 11.9\TEXTsymbol{\vert}%
0.69 \\ 
\textsc{new default} & 4.5\TEXTsymbol{\vert}\textbf{1.11} & 3.2\TEXTsymbol{%
\vert}\textbf{1.26} & 4.2\TEXTsymbol{\vert}\textbf{1.12} & 4.0\TEXTsymbol{%
\vert}\textbf{1.42} & 4.4\TEXTsymbol{\vert}\textbf{1.10} & 3.8\TEXTsymbol{%
\vert}\textbf{1.32} & 7.0\TEXTsymbol{\vert}0.96 \\ \hline
\multicolumn{8}{c}{$n=500$} \\ \hline
\textsc{STATA} & 5.1\TEXTsymbol{\vert}\textbf{1.00} & 6.1\TEXTsymbol{\vert}%
0.95 & 7.1\TEXTsymbol{\vert}0.91 & 5.0\TEXTsymbol{\vert}\textbf{1.00} & 7.5%
\TEXTsymbol{\vert}0.89 & 5.5\TEXTsymbol{\vert}\textbf{0.97} & 8.8\TEXTsymbol{%
\vert}0.83 \\ 
\textsc{Im-Ko} & 5.1\TEXTsymbol{\vert}\textbf{1.00} & 6.1\TEXTsymbol{\vert}%
0.95 & 7.1\TEXTsymbol{\vert}0.91 & 5.0\TEXTsymbol{\vert}\textbf{1.00} & 7.4%
\TEXTsymbol{\vert}0.89 & 5.5\TEXTsymbol{\vert}\textbf{0.98} & 8.8\TEXTsymbol{%
\vert}0.83 \\ 
\textsc{CGM} & 5.0\TEXTsymbol{\vert}\textbf{1.00} & 6.1\TEXTsymbol{\vert}0.94
& 7.3\TEXTsymbol{\vert}0.87 & 5.1\TEXTsymbol{\vert}\textbf{0.99} & 7.7%
\TEXTsymbol{\vert}0.85 & 5.6\TEXTsymbol{\vert}\textbf{0.97} & 9.0\TEXTsymbol{%
\vert}0.80 \\ 
\textsc{new default} & 5.0\TEXTsymbol{\vert}\textbf{1.00} & 4.1\TEXTsymbol{%
\vert}\textbf{1.20} & 4.3\TEXTsymbol{\vert}\textbf{1.24} & 4.7\TEXTsymbol{%
\vert}\textbf{1.14} & 4.4\TEXTsymbol{\vert}\textbf{1.23} & 4.0\TEXTsymbol{%
\vert}\textbf{1.22} & 3.5\TEXTsymbol{\vert}\textbf{1.29}%
\end{tabular}%
$

\end{center}
\linespread{1.00}\selectfont
\begin{small}
\linespread{1.00}\selectfont%

Notes: Entries are the null rejection probability in percent, and the
average length of confidence intervals relative to average length of
confidence intervals based on size corrected \textsc{STATA} (bold if null
rejection probability is smaller than 6\%) of nominal 5\% level tests.%
\linespread{1.00}\selectfont
\end{small}
\linespread{1.00}\selectfont%
\end{table}%

Table 4 reports the results. As in the inference about the mean problem, the
new method is seen to control size much more successfully compared to the
other methods, although at a cost in average confidence interval length that
is more pronounced than in Table 1 for the thin-tailed $\nu _{i}\sim 
\mathcal{N}(0,1)$. Intuitively, the product of two independent normals $\nu
_{i}X_{it}$ has considerably heavier tails than a normal distribution, but
it is still symmetric.

In the final Monte Carlo exercise we again consider a discrete population
from a large economic data set. In particular, we consider a sample of all
employed workers aged 18-65 from the 2018 merged outgoing rotation group
sample of the Current Population Survey (CPS). We let the dependent variable 
$Y_{it}$ be the logarithm of wages, and pick the regressor of interest $%
X_{it}$ and the 5 non-constant controls $Z_{it}$ as a random subset of
potential regressor including gender, race, age and dummies for Hispanic,
non-white, married, public sector employer, union membership and whether
hours or the wage was imputed. The resulting coefficient $\beta $ on $X_{it}$
in the regression using the entire 145,838 individuals in the database is
the population coefficient. We cluster at the level of 308 Metropolitan
Statistical Areas (MSAs).\footnote{%
For the purposes of this exercise, we treat as additional MSAs the part of
each U.S. state outside of any CBSA area.} The four different methods of
Table 4 are then employed to conduct inference about $\beta $ based on a
sample consisting of all individuals that reside in $n$ randomly selected
MSAs, where the MSAs are drawn with replacement. By construction the
clusters are thus i.i.d.~and the population regression coefficient is equal
to $\beta $.

\begin{figure}[tbp] 
\begin{center}
\small%
\caption{Small Sample Results for CPS Clustered Regressions}

\bigskip
\includegraphics[width=5.8356in,keepaspectratio]{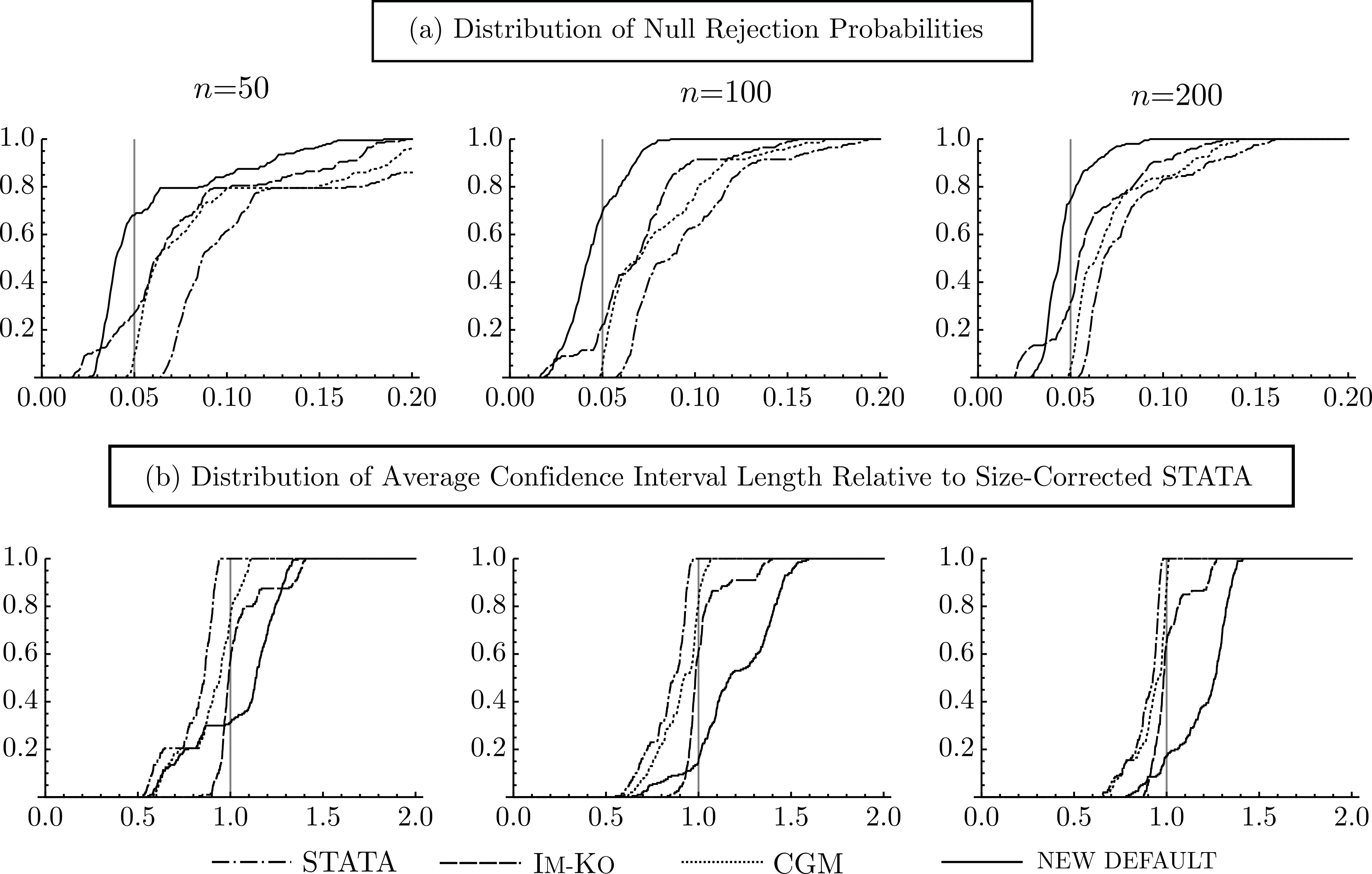}

\end{center}
\begin{small}%
\end{small}%
\end{figure}%
Figure 4 depicts the results over 200 populations generated in this manner,
analogous to Figure 3, for $n\in \{50,100,200\}$. (We consider $n=200$
rather than $n=500$ for the largest sample size to avoid that with high
probability, samples contain many identical clusters.) In this design none
of the methods come close to perfectly controlling size. Still, the new
method is substantially more successful, albeit at the cost of considerably
longer average confidence intervals for $n\in \{100,200\}$.

The poor performance of the standard methods might come as a surprise given
that none of the variables in the CPS exercise are heavy-tailed, and the
number of clusters is not particularly small. Approximately (cf. equation (%
\ref{GMM_asy})), the variability of the OLS estimator $\hat{\beta}$ is
driven by the average of the $n$ i.i.d.~random variables 
\begin{equation*}
G_{i}=\sum_{t=1}^{T_{i}}\tilde{X}_{it}u_{it}
\end{equation*}%
where $u_{it}$ is the population regression error and $\tilde{X}_{it}$ is
the residual of a population regression of $X_{it}$ on $Z_{it}$. The
distribution of $G_{i}$ may be heavy-tailed because (i) $u_{it}$ has a
heavy-tailed component, as in (\ref{uit}) above; (ii) the joint distribution
of $(\tilde{X}_{it},u_{it})$ is such that $\tilde{X}_{it}u_{it}$ is
heavy-tailed; (iii) $\tilde{X}_{it}$ is heavy-tailed; (iv) $T_{i}$ is
heterogeneous across $i$, so that large clusters with big $T_{i}$ lead to $%
G_{i}$ with high variance; or a combination of these effects. MSAs are
highly heterogeneous in their size: the largest contains 6,163 individuals,
and the smallest only 42. Effect (iv) is thus clearly present, and the
suggestion by \cite{Imbens16} is designed to accommodate effects (iii) and
(iv). But as reported in Table 4, if $G_{i}$ is heavy-tailed due to effect
(i), then the adjustment of \cite{Imbens16} does not help much. The CPS\
design seems to exhibit all four effects to some degree, making correct
inference quite challenging, and the new method relatively most successful
at controlling size.

\section{Conclusion}

Whenever researchers compare a t-statistic to the usual standard normal
critical value they effectively assume that the central limit theorem
provides a reasonable approximation. This is true when conducting inference
for the mean from an i.i.d.~sample, but it holds more generally for linear
regression, GMM inference, and so forth. As is well understood, the central
limit theorem requires that the contribution of each term to the overall
variation is small. To some extent, this is empirically testable: one can
simply compare the absolute values of each (demeaned) term with the sample
standard deviation. The normal approximation then surely becomes suspect if
the largest absolute term is, say, equal to half of a standard deviation.

One may view the new test suggested here as a formalization of this notion:
the extreme terms are set apart, and if they are large, then the test
automatically becomes more conservative. What is more, even if the sample
realization from an underlying population with a heavy tail fails to
generate a very large term, it still leaves a tell-tale sign in the large
spacings between the largest terms. Correspondingly, the test also becomes
more conservative if the largest observations are far apart from each other,
even if the largest one isn't all that large---the new method seeks to infer
the likelihood of a potential large outlier based on the spacings of the
extreme terms. These adjustments are disciplined by an assumption of
Pareto-like tails. But as was found in the small sample simulations, they
help generate more reliable inference also when the underlying population is
more loosely characterized by a moderately heavy tail.

It would be desirable to extend the new method also to $F$-type tests of
null hypotheses that restrict more than one parameter. Such an extension is
far from straightforward, though: extreme value theory for the joint
extremes of vector-valued observations does not yield a tightly parametrized
approximate model, and the approach pursued here to determine the
appropriate adjustments is already very computationally challenging.%
\clearpage%
\pagebreak

\appendix%
\renewcommand{\baselinestretch}{1.2} \small \normalsize \sloppy
\allowdisplaybreaks
\begin{small}%

\section{Appendix}

\subsection{Proof of Theorem \protect\ref{thm_main}}

We write $C$ for a generic large enough positive constant, not necessarily
the same in each instance. Without loss of generality, under Condition 1 we
can choose $w_{0}$ large enough so that uniformly in $w\geq w_{0}$, 
\begin{eqnarray}
C^{-1}w^{-1/\xi } &\leq &1-F(w)\leq Cw^{-1/\xi }  \label{ineq_Fbar} \\
f(w) &\leq &Cw^{-1/\xi -1}.  \label{ineq_fbar}
\end{eqnarray}

Also we normalize $\limfunc{Var}[W]=1$. Define%
\begin{eqnarray*}
m(w) &=&-\mathbb{E}[W|W\leq m]\text{, }m^{\ast }(w)=\sigma ^{1/\xi }\frac{%
w^{1-1/\xi }}{1-\xi } \\
A_{n} &=&\mathbf{1}[W_{k}^{R}>w_{0}]\text{, }A_{n}^{\ast }=\mathbf{1}[n^{\xi
}\sigma X_{k}>w_{0}]\text{,}
\end{eqnarray*}%
$V(w)=\limfunc{Var}[W|W\leq w]$ and $\Delta _{n}=|(1-k/n)^{-1/2}/s_{n}-1/%
\sqrt{V(W_{k}^{R})}|$.

The proof of Theorem 2 is based on a number of preliminary Lemmas. We assume
throughout that the assumptions of Theorem \ref{thm_main} hold, and that $%
n>k+1$. All limits are taken as $n\rightarrow \infty $.

\begin{lemma}
\label{lm:A}For any $p>0$,

(a) $n^{p}\mathbb{P}(W_{k}^{R}\leq w_{0})\rightarrow 0$;

(b) $n^{p}\mathbb{P}(n^{\xi }\sigma X_{k}\leq w_{0})\rightarrow 0.$
\end{lemma}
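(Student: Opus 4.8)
The plan is to treat parts (a) and (b) separately; each reduces to the elementary observation that a polynomial factor $n^{p}$ cannot obstruct convergence to zero against a factor that is exponentially small in $n$.

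For part (a), note that $\{W_{k}^{R}\leq w_{0}\}$ is exactly the event that at most $k-1$ of the i.i.d.\ draws $W_{1},\ldots ,W_{n}$ exceed $w_{0}$. Set $\pi =1-F(w_{0})$; the number of exceedances is $\mathrm{Bin}(n,\pi )$, and crucially $\pi >0$ because the choice of $w_{0}$ made at the start of the proof of Theorem \ref{thm_main} already gives $1-F(w_{0})\geq C^{-1}w_{0}^{-1/\xi }>0$ by (\ref{ineq_Fbar}). Using $\binom{n}{j}\leq n^{k-1}$ and $(1-\pi )^{n-j}\leq (1-\pi )^{n-k+1}$ for $j\leq k-1$,
\[
\mathbb{P}(W_{k}^{R}\leq w_{0})=\sum_{j=0}^{k-1}\binom{n}{j}\pi ^{j}(1-\pi )^{n-j}\leq k\,n^{k-1}(1-\pi )^{n-k+1}
\]
for $n>k+1$. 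Multiplying by $n^{p}$ and letting $n\to \infty $ proves (a), since $n^{p+k-1}(1-\pi )^{n-k+1}\to 0$.

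For part (b), I would invoke the stochastic representation from Section \ref{sec:evt}: $X_{k}^{-1/\xi }\overset{d}{=}G_{k}:=\sum_{l=1}^{k}E_{l}\sim \mathrm{Gamma}(k,1)$ with $E_{l}$ i.i.d.\ standard exponential. Since $\xi >0$, a short manipulation shows that $\{n^{\xi }\sigma X_{k}\leq w_{0}\}$ is the same event as $\{G_{k}\geq c\,n\}$ with $c=(\sigma /w_{0})^{1/\xi }>0$; here small $X_{k}$ corresponds to large $G_{k}$. Because the moment generating function of $G_{k}$ is finite near the origin (indeed $\mathbb{E}[e^{G_{k}/2}]=2^{k}$), a Chernoff bound gives $\mathbb{P}(G_{k}\geq c\,n)\leq 2^{k}e^{-cn/2}$. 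Multiplying by $n^{p}$ and letting $n\to \infty $ proves (b).

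There is no genuine obstacle here — the argument is routine. The only points requiring a moment's care are verifying that $\pi >0$ (immediate from the way $w_{0}$ was fixed in the theorem's proof) and getting the inequality direction right when passing between $X_{k}$ and $G_{k}$, since for $\xi >0$ it is the large values of $G_{k}$, an exponentially unlikely event, that govern the bound.
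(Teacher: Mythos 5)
Your proof is correct and follows essentially the same route as the paper: part (a) is the identical binomial tail bound $\mathbb{P}(W_{k}^{R}\leq w_{0})\leq C n^{k-1}F(w_{0})^{n-k+1}$ with $F(w_{0})<1$, and part (b) reduces, exactly as in the paper, to the Gamma$(k,1)$ representation $X_{k}^{-1/\xi}\sim\sum_{l=1}^{k}E_{l}$, with your Chernoff bound being just a packaged version of the paper's ``direct calculation from the density'' of the exponentially small upper tail.
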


\begin{proof}
(a) Follows from $\mathbb{P}(W_{k}^{R}\leq w_{0})=\sum_{i=0}^{k-1}\dbinom{n}{%
n-i}F(w_{0})^{n-i}(1-F(w_{0}))^{i}\leq n^{k}F(w_{0})^{n-k}$ and $F(w_{0})<1$.

(b) $\mathbb{\,}$Follows from a direct calculation from the density of $%
(\sum_{l=1}^{k}E_{l})^{-\xi }=X_{k}/\sigma $.
\end{proof}

\begin{lemma}
\label{lm:expecT}(a) For any $p<k/\xi $, $\mathbb{E}[A_{n}|n^{-\xi
}W_{k}^{R}|^{p}]=O(1).$

(b) $\mathbb{E}[A_{n}||n^{-\xi }\mathbf{W}^{R}||]=O(1).$

(c) For any $p<0$, $\mathbb{E}[X_{k}^{p}]=O(1).$

(d) $\mathbb{E}[||\mathbf{X}||^{2}]<\infty .$
\end{lemma}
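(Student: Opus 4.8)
The plan is to treat the four assertions separately, obtaining (c) and (d) directly from the stochastic representation of $\mathbf{X}$, and obtaining (a) and (b) from elementary tail bounds for the upper order statistics of $F$ together with Condition 1.

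For (c) and (d), recall $\{X_j^{-1/\xi}\}_{j=1}^{k}\sim\{\sum_{l=1}^{j}E_{l}\}_{j=1}^{k}$ with the $E_{l}$ i.i.d.\ standard exponential, so $\sum_{l=1}^{j}E_{l}$ is $\mathrm{Gamma}(j,1)$. Thus for $p<0$, $X_{k}^{p}=(\sum_{l=1}^{k}E_{l})^{-\xi p}$ is a positive power ($-\xi p>0$) of a $\mathrm{Gamma}(k,1)$ variable, so $\mathbb{E}[X_{k}^{p}]=\Gamma(k-\xi p)/\Gamma(k)<\infty$, which is (c); and $\mathbb{E}[X_{j}^{2}]=\mathbb{E}[(\sum_{l=1}^{j}E_{l})^{-2\xi}]=\Gamma(j-2\xi)/\Gamma(j)<\infty$ since $2\xi<1\leq j$ for every $j\leq k$, whence $\mathbb{E}[\Vert\mathbf{X}\Vert^{2}]=\sum_{j=1}^{k}\mathbb{E}[X_{j}^{2}]<\infty$, which is (d).

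For (a), write $\mathbb{E}[A_{n}|n^{-\xi}W_{k}^{R}|^{p}]=n^{-\xi p}\mathbb{E}[\mathbf{1}[W_{k}^{R}>w_{0}](W_{k}^{R})^{p}]$ and aim to show the last expectation is $O(n^{\xi p})$ uniformly in $n$. If $p\geq 0$, I would use $\mathbb{P}(W_{k}^{R}>w)\leq\binom{n}{k}(1-F(w))^{k}\leq Cn^{k}w^{-k/\xi}$ for $w\geq w_{0}$ (from (\ref{ineq_Fbar})) together with $\mathbb{P}(W_{k}^{R}>w)\leq 1$, and integrate $\mathbb{E}[\mathbf{1}[W_{k}^{R}>w_{0}](W_{k}^{R})^{p}]=w_{0}^{p}\mathbb{P}(W_{k}^{R}>w_{0})+p\int_{w_{0}}^{\infty}w^{p-1}\mathbb{P}(W_{k}^{R}>w)\,dw$, splitting the integral at $w\asymp n^{\xi}$; the tail piece converges precisely because $p<k/\xi$ and is $O(n^{\xi p})$, and so is the head piece. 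If $p<0$, the crude bound $(W_{k}^{R})^{p}\leq w_{0}^{p}$ on $A_{n}$ is useless (it leaves $n^{-\xi p}w_{0}^{p}\to\infty$), so instead write $\mathbb{E}[\mathbf{1}[W_{k}^{R}>w_{0}](W_{k}^{R})^{p}]\leq|p|\int_{w_{0}}^{\infty}G(w)w^{p-1}\,dw$ in terms of the c.d.f.\ $G(w)=\mathbb{P}(W_{k}^{R}\leq w)=\mathbb{P}(\mathrm{Bin}(n,1-F(w))\leq k-1)$; on $[n^{\xi},\infty)$ bound $G\leq 1$ (contributing $O(n^{\xi p})$), and on $[w_{0},n^{\xi}]$ use the binomial lower-tail estimate $G(w)\leq C(1+(n(1-F(w)))^{k-1})e^{-cn(1-F(w))}$ with $1-F(w)\geq C^{-1}w^{-1/\xi}$; the substitution $u=nw^{-1/\xi}$ then turns this part of the integral into $\xi n^{\xi p}\int_{1}^{\infty}(1+u^{k-1})e^{-cu}u^{-\xi p-1}\,du=O(n^{\xi p})$. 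Multiplying through by $n^{-\xi p}$ gives $O(1)$ in both cases.

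For (b), on $A_{n}$ every coordinate of $\mathbf{W}^{R}$ is positive and at most $W_{1}^{R}$, so $\Vert\mathbf{W}^{R}\Vert\leq\sqrt{k}\,W_{1}^{R}$ and $\mathbb{E}[A_{n}\Vert n^{-\xi}\mathbf{W}^{R}\Vert]\leq\sqrt{k}\,\mathbb{E}[\mathbf{1}[W_{1}^{R}>w_{0}]n^{-\xi}W_{1}^{R}]$, which I would bound by the argument of (a) with $k$ replaced by $1$ and $p=1$, now using $\mathbb{P}(W_{1}^{R}>w)=1-F(w)^{n}\leq n(1-F(w))\leq Cnw^{-1/\xi}$ and noting the relevant tail integral converges because $1<1/\xi$. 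The one step that requires more than a routine order-statistics computation is the negative-$p$ case of (a): one genuinely needs the binomial concentration estimate showing that $W_{k}^{R}$ is exponentially unlikely to be of smaller order than $n^{\xi}$, which is what rescues the divergent $n^{-\xi p}$ prefactor; the remaining cases are standard tail integrals.
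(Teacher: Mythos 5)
Your proposal is correct. Parts (c) and (d) coincide with the paper's argument (direct Gamma-moment calculations from the exponential representation; the paper bounds $\Vert\mathbf{X}\Vert$ by $k\sigma E_{1}^{-\xi}$ rather than summing $\mathbb{E}[X_{j}^{2}]$, which is immaterial). For (a) and (b) your technical route differs in packaging from the paper's: the paper writes down the exact density of $S=n^{-\xi}W_{k}^{R}$, inserts the bounds (\ref{ineq_Fbar})--(\ref{ineq_fbar}) together with $F(n^{\xi}s)^{n}\leq\exp(-C^{-1}s^{-1/\xi})$, and obtains the single envelope $Cs^{-k/\xi-1}\exp(-C^{-1}s^{-1/\xi})$, from which every moment of order $p<k/\xi$ (positive or negative) follows in one stroke. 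You instead work with the survival function and c.d.f.\ via the layer-cake formula, which forces a case split: a union/tail bound $\mathbb{P}(W_{k}^{R}>w)\leq Cn^{k}w^{-k/\xi}$ for $p\geq 0$, and a binomial lower-tail concentration estimate for $p<0$ to supply the exponential suppression of $\{W_{k}^{R}\lesssim n^{\xi}\}$. Both arguments rest on exactly the same two facts --- the polynomial upper tail of $W_{k}^{R}$ at scale $n^{\xi}$ and the exponentially small probability that $W_{k}^{R}$ falls an order of magnitude below $n^{\xi}$ --- so the content is the same; the density-bound route is somewhat more economical because it unifies the two regimes, while your c.d.f.\ route avoids writing the joint/marginal order-statistic density and correctly identifies (as you note) that the binomial concentration step is the only nontrivial ingredient. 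Your treatment of (b) via $\Vert\mathbf{W}^{R}\Vert\leq\sqrt{k}\,W_{1}^{R}$ and $\mathbb{P}(W_{1}^{R}>w)\leq Cnw^{-1/\xi}$ is sound and matches the paper's up to the same density-versus-tail-probability distinction.
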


\begin{proof}
(a) The density of $S=n^{-\xi }W_{k}^{R}$, for $s>n^{-\xi }w_{0}$, is given
by%
\begin{equation*}
\frac{n!}{(n-k)!(k-1)!}(1-F(n^{\xi }s))^{k-1}F(n^{\xi }s)^{n-k}f(n^{\xi
}s)n^{\xi }\leq F(w_{0})^{-k}n^{k+\xi }(1-F(n^{\xi }s))^{k-1}F(n^{\xi
}s)^{n}f(n^{\xi }s).
\end{equation*}%
Using (\ref{ineq_Fbar}) and (\ref{ineq_fbar}), we have $(1-F(n^{\xi
}s))^{k-1}\leq Cn^{1-k}s^{(1-k)/\xi }$ and $f(n^{\xi }s)\leq Cn^{-1-\xi
}s^{-1/\xi -1}$. Furthermore, using $(1-a/n)^{n}\leq e^{-a}$ for all $0\leq
a\leq n$ and (\ref{ineq_Fbar}), we have uniformly in $s\geq n^{-\xi }w_{0}$%
\begin{equation*}
F(n^{\xi }s)^{n}\leq (1-C^{-1}s^{-1/\xi }/n)^{n}\leq \exp (-C^{-1}s^{-1/\xi
}).
\end{equation*}%
Thus, the density of $S_{n}$ is bounded above by $Cs^{-k/\xi -1}\exp
(-C^{-1}s^{-1/\xi })$ on $s\in \lbrack n^{-\xi }w_{0},\infty )$, and the
result follows.

(b) $A_{n}||\mathbf{W}^{R}||\leq kW_{1}^{R}$, and, proceeding as in the
proof of part (a), the density of $n^{-\xi }W_{1}^{R}$ for $s>n^{-\xi }w_{0}$
is bounded above by $Cs^{-1/\xi -1}\exp (-C^{-1}s^{-1/\xi })$ on $s\in
\lbrack n^{-\xi }w_{0},\infty )$, so the result follows.

(c) $\mathbb{E}[X_{k}^{p}]=\sigma ^{p}\mathbb{E}[(\sum_{l=1}^{k}E_{l})^{-p%
\xi }]<\infty $, where the last inequality follows a direct calculation.

(d) $||\mathbf{X}||\leq k\sigma E_{1}^{-\xi }$, and the result follows by a
direct calculation.
\end{proof}

\begin{lemma}
\label{lm:w}For $w>w_{0}$, let $\tilde{W}^{0}$ be a random variable with
c.d.f. equal to $F(\tilde{w})/F(w)$ for $\tilde{w}<w$, and equal to one
otherwise, and let $\tilde{W}=\tilde{W}^{0}+m(w)$. Then, uniformly in $%
w>w_{0}$

(a) $m(w)\leq Cw^{-1/\xi +1}$;

(b) $|V(w)-1|\leq Cw^{2-1/\xi }$;

(c) $|m(w)-m^{\ast }(w)|\leq Cw^{1-(1+\delta )/\xi }+Cw^{1-2/\xi }$;

(d) for any $\beta _{0}>1$ and $1<\beta <\beta _{0}$, $\mathbb{E}[|\tilde{W}%
|^{\beta }]\leq Cw^{\beta _{0}-1/\xi }$;

(e) $\mathbb{E}[\tilde{W}^{2}\mathbf{1}[\tilde{W}^{2}>V(w)n]]\leq
Cw^{2}n^{-1/(2\xi )}$;

(f) $\mathbb{E}[|\tilde{W}|^{3}\mathbf{1}[\tilde{W}^{2}\leq V(w)n]]\leq
Cw^{(1/2-r_{k}(\xi ))/\xi }.$
\end{lemma}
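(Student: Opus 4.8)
The proof of each part rests on the two-sided tail bounds (\ref{ineq_Fbar})--(\ref{ineq_fbar}), the sharper density expansion of Condition 1, and the assumed finiteness of all left-tail moments; throughout one may take $w_0$ large enough that $F(w)\ge F(w_0)>1/2$ for $w\ge w_0$, and since $m(w)\to0$ and $V(w)\to1$ both stay bounded on $[w_0,\infty)$. The computation I would use repeatedly is that, for $a<1/\xi$,
\[
\mathbb{E}[W^{a}\mathbf{1}[W>w]]=\int_{w}^{\infty}t^{a}f(t)\,dt\le C\int_{w}^{\infty}t^{a-1/\xi-1}\,dt=Cw^{a-1/\xi},
\]
and that $\int_{w_0}^{w}t^{a-1/\xi-1}\,dt$ is $O(1)$ for $a<1/\xi$ and $O(w^{a-1/\xi})$ for $a>1/\xi$. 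Part (a) is then immediate from $m(w)=\mathbb{E}[W\mathbf{1}[W>w]]/F(w)$ (which uses $\mathbb{E}[W]=0$) applied with $a=1$. For part (b), write $V(w)=\mathbb{E}[W^{2}\mathbf{1}[W\le w]]/F(w)-m(w)^{2}$ and use $\mathbb{E}[W^{2}\mathbf{1}[W\le w]]=1-\mathbb{E}[W^{2}\mathbf{1}[W>w]]=1-O(w^{2-1/\xi})$ together with $1/F(w)=1+O(w^{-1/\xi})$ and $m(w)^{2}=O(w^{2-2/\xi})$; collecting terms and noting that $2-1/\xi$ is the largest of the three exponents gives $|V(w)-1|\le Cw^{2-1/\xi}$.

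For part (c) I would substitute the Condition-1 form $f(t)=(\xi\sigma)^{-1}(t/\sigma)^{-1/\xi-1}(1+h(t))$ into $\mathbb{E}[W\mathbf{1}[W>w]]=\int_{w}^{\infty}tf(t)\,dt$: the $h\equiv0$ part integrates exactly to $m^{*}(w)=\sigma^{1/\xi}w^{1-1/\xi}/(1-\xi)$, while the remainder is at most $C\int_{w}^{\infty}t\cdot t^{-1/\xi-1}\cdot t^{-\delta/\xi}\,dt=Cw^{1-(1+\delta)/\xi}$ using $|h(t)|\le Ct^{-\delta/\xi}$. Dividing by $F(w)=1+O(w^{-1/\xi})$ turns this into $m(w)=m^{*}(w)+O(w^{1-(1+\delta)/\xi})+O(w^{1-2/\xi})$, the last error coming from $m^{*}(w)(1/F(w)-1)$. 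For part (d), $|\tilde{W}|\le|\tilde{W}^{0}|+m(w)\le|\tilde{W}^{0}|+C$, so $\mathbb{E}[|\tilde{W}|^{\beta}]\le C(1+\mathbb{E}[|\tilde{W}^{0}|^{\beta}])$; since $\tilde{W}^{0}$ has density $f/F(w)\le Cf$, $\mathbb{E}[|\tilde{W}^{0}|^{\beta}]\le C\mathbb{E}[|W|^{\beta}\mathbf{1}[W<w_0]]+C\int_{w_0}^{w}t^{\beta-1/\xi-1}\,dt$, the first term finite (all left moments finite, bounded body) and the second $O(1+w^{\beta-1/\xi})$, which for $w\ge w_0$ is dominated by $Cw^{\beta_0-1/\xi}$ since $\beta<\beta_0$.

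Parts (e) and (f) supply the two terms in the Bentkus bound (\ref{BG_bound}), and I would prove both by splitting the support of $\tilde{W}^{0}$ at $w_0$ into a far-right part ($\tilde{W}^{0}>w_0$) and a body/left-tail part ($\tilde{W}^{0}\le w_0$). For (e): on $\{\tilde{W}^{0}>w_0,\ \tilde{W}^{2}>V(w)n\}$ the truncation forces $\tilde{W}^{0}\ge c\sqrt{n}$, so the contribution is at most $\int_{c\sqrt{n}}^{w}(t+m(w))^{2}\tfrac{f(t)}{F(w)}\,dt\le Cw^{2}\int_{c\sqrt{n}}^{\infty}t^{-1/\xi-1}\,dt=Cw^{2}n^{-1/(2\xi)}$ (the set being empty if $w<c\sqrt{n}$), the key move being to bound $t^{2}\le w^{2}$ before integrating only the Pareto density; on $\{\tilde{W}^{0}\le w_0,\ \tilde{W}^{2}>V(w)n\}$ one has $\tilde{W}^{0}\le-c\sqrt{n}$, and Markov's inequality with exponent $p\ge1/\xi$, using finiteness of $\mathbb{E}[|W|^{2+p}\mathbf{1}[W<0]]$, yields $Cn^{-p/2}\le Cn^{-1/(2\xi)}\le Cw^{2}n^{-1/(2\xi)}$. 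For (f): on the body/left-tail part, $|\tilde{W}|^{3}\le C(1+|\tilde{W}^{0}|^{3})$ and $\mathbb{E}[|\tilde{W}^{0}|^{3}\mathbf{1}[\tilde{W}^{0}\le w_0]]\le C\mathbb{E}[|W|^{3}\mathbf{1}[W\le w_0]]<\infty$, a constant which is $\le Cw^{(1/2-r_k(\xi))/\xi}$ for $w\ge w_0$ because that exponent is nonnegative on the range $\xi>\frac{1+k}{1+3k}$ (equivalently $r_k(\xi)<1/2$); on the far-right part, $\tilde{W}^{2}\le V(w)n$ forces $\tilde{W}^{0}\le\min(w,C\sqrt{n})\le w$, so the contribution is $\le C\int_{w_0}^{w}t^{2-1/\xi}\,dt\le Cw^{3-1/\xi}$ (the integrand integrable up to $w$ since $\xi>1/3$), and finally $w^{3-1/\xi}\le w^{(1/2-r_k(\xi))/\xi}$ for $w\ge1$ because $3\xi-1\le 1/2-r_k(\xi)$, i.e.\ $r_k(\xi)\le\tfrac32-3\xi$, which is just $\tfrac{1+k}{1+k+2\xi}\le1$.

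The main obstacle is arithmetic bookkeeping rather than any deep estimate: in (f) one must verify the two exponent inequalities $(1/2-r_k(\xi))/\xi\ge 0$ and $(1/2-r_k(\xi))/\xi\ge 3-1/\xi$ over the entire parameter range of Theorem \ref{thm_main}, and in (e) one must arrange the $w$- and $n$-dependence to factor as claimed, which hinges on pulling the polynomial weight $t^{2}$ out of the Pareto-density integral. Condition 1 in its quantitative form $|h|\le Cw^{-\delta/\xi}$ is needed only in part (c); everywhere else the crude bounds (\ref{ineq_Fbar})--(\ref{ineq_fbar}) and the finiteness of the left-tail moments suffice, and the truncation threshold $V(w)n$ enters only through the elementary observations $\tilde{W}^{0}\le\min(w,C\sqrt n)$ and $\tilde{W}^{0}\ge c\sqrt n$ used above.
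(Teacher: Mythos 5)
Your proposal is correct, and for parts (a)--(c) it follows the paper's argument essentially verbatim (the same decomposition of $1-V(w)$ into three terms, and the same two error sources $w^{1-(1+\delta)/\xi}$ from $h$ and $w^{1-2/\xi}$ from $1/F(w)$). For (d)--(f) you take a somewhat different route: you split $\tilde W^{0}$ at $w_{0}$ and compute the right-tail contributions by direct integration against the Pareto density bound, handling the body and left tail via the assumed finiteness of all left-tail moments, whereas the paper proves (d) by a $c_{r}$-inequality manipulation of $\mathbb{E}[|W|^{\beta}\,|\,W<w]$ and then obtains (f) as an immediate corollary of (d) with $\beta_{0}=(1/2-r_{k}(\xi))/\xi+1/\xi>3$, and proves (e) by first asserting $|\tilde W|\leq Cw$ almost surely and then bounding $\mathbb{P}(\tilde W^{2}>V(w_{0})n)$. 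Your version of (e) is in fact the more careful one: the bound $|\tilde W|\leq Cw$ is only valid for the right-hand portion of the truncated distribution (the left tail of $F$ is unbounded), and your separate Markov-inequality treatment of the event $\tilde W^{0}\leq -c\sqrt{n}$ using high left-tail moments is exactly what is needed to close that gap. Your exponent checks for (f) ($r_{k}(\xi)\leq 3/2-3\xi$ from $\tfrac{1+k}{1+k+2\xi}\leq 1$, and $r_{k}(\xi)<1/2$ on the relevant range of $\xi$) are the right verifications and match what the paper implicitly uses; the only trivial imprecision is your parenthetical claiming $r_{k}(\xi)<1/2$ is \emph{equivalent} to $\xi>\tfrac{1+k}{1+3k}$ (the exact threshold is $\tfrac{1+k}{4+3k}$, which is smaller, so the implication you need still holds).
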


\begin{proof}
(a) Follows from $m(w)=\mathbb{E}[W\mathbf{1}[W>w]]/F(w)$, (\ref{ineq_fbar})
and $F(w)\geq F(w_{0})>0$.

(b) $V(w)=\mathbb{E}[W^{2}\mathbf{1}[W<w]]/F(w)-m(w)^{2}$, so that%
\begin{equation*}
1-V(w)=\frac{F(w)-1}{F(w)}+\frac{1-\mathbb{E}[W^{2}\mathbf{1}[W<w]]}{F(w)}%
+m(w)^{2}.
\end{equation*}%
Now for $w>w_{0}$, $F(w)^{-1}\leq F(w_{0})^{-1}$, $1-F(w)\leq Cw^{-1/\xi }$
by (\ref{ineq_Fbar}), and $m(w)\leq Cw^{-1/\xi +1}$ from part (a).
Furthermore, using (\ref{ineq_fbar}) 
\begin{eqnarray*}
1-\mathbb{E}[W^{2}\mathbf{1}[W\left. <w\right. ]] &=&\int_{w}^{\infty
}f(s)s^{2}ds\leq C\int_{w}^{\infty }s^{-1/\xi +1}ds \\
&\leq &Cw^{2-1/\xi }
\end{eqnarray*}%
so the result follows.

(c) For $w>w_{0}$%
\begin{eqnarray*}
|m(w)-m^{\ast }(w)| &=&\left\vert \frac{\int_{w}^{\infty }sf(s)ds}{F(w)}-%
\frac{\sigma ^{1/\xi }\frac{w^{1-1/\xi }}{1-\xi }}{F(w)}+\frac{\sigma
^{1/\xi }\frac{w^{1-1/\xi }}{1-\xi }}{F(w)}-\sigma ^{1/\xi }\frac{w^{1-1/\xi
}}{1-\xi }\right\vert \\
&\leq &F(w)^{-1}\left\vert \int_{w}^{\infty }sf(s)ds-\int_{w}^{\infty }(\xi
\sigma )^{-1}(\frac{s}{\sigma })^{-1/\xi -1}ds\right\vert +\sigma ^{1/\xi }%
\frac{w^{1-1/\xi }}{1-\xi }F(w)^{-1}(1-F(w))
\end{eqnarray*}%
and%
\begin{eqnarray*}
\left\vert \int_{w}^{\infty }sf(s)ds-\int_{w}^{\infty }(\xi \sigma )^{-1}(%
\frac{s}{\sigma })^{-1/\xi -1}ds\right\vert &\leq &C\int_{w}^{\infty
}s^{-1/\xi }|h(s)|ds \\
&\leq &C\int_{w}^{\infty }s^{-(\delta +1)/\xi }ds \\
&\leq &Cw^{1-(1+\delta )/\xi }
\end{eqnarray*}%
and $F(w)^{-1}\leq F(w_{0})^{-1}$, $1-F(w)\leq Cw^{-1/\xi }$, so that%
\begin{equation*}
|m(w)-m^{\ast }(w)|\leq Cw^{1-(1+\delta )/\xi }+Cw^{1-2/\xi }.
\end{equation*}

(d) By the $c_{r}$ inequality and the result of part (a)%
\begin{eqnarray*}
\mathbb{E}[|\tilde{W}|^{\beta }] &=&\mathbb{E}[|W+m(w)|^{\beta }\left\vert
W<w\right. ] \\
&\leq &C\mathbb{E}[|W|^{\beta _{0}-1/\xi }|W|^{1/\xi -\beta _{0}+\beta
}\left\vert W<w\right. ]+C|m(w)|^{\beta } \\
&\leq &Cw^{\beta _{0}-1/\xi }\mathbb{E}[|W|^{1/\xi -\beta _{0}+\beta
}\left\vert W<w\right. ]+Cw^{-\beta /\xi +\beta } \\
&\leq &Cw^{\beta _{0}-1/\xi }+Cw^{-\beta /\xi +\beta }
\end{eqnarray*}%
where the last inequality follows from $\mathbb{E}[|W|^{1/\xi -\beta
_{0}+\beta }]<\infty $.

(e) Note that $V(w)\geq V(w_{0})>0$, and by the result in part (a), $|\tilde{%
W}|\leq Cw$ uniformly in $w\geq w_{0}$ almost surely. Thus%
\begin{eqnarray*}
\mathbb{E}[\tilde{W}^{2}\mathbf{1}[\tilde{W}^{2}\left. >\right. V(w)n]]
&\leq &\mathbb{E}[\tilde{W}^{2}\mathbf{1}[\tilde{W}^{2}>V(w_{0})n]] \\
&\leq &Cw^{2}\mathbb{P}(\tilde{W}^{2}>V(w_{0})n) \\
&\leq &Cw^{2}\mathbb{P}(W^{2}>\tfrac{1}{2}V(w_{0})n)+Cw^{2}\mathbf{1}%
[m(w)^{2}>\tfrac{1}{2}V(w_{0})n]
\end{eqnarray*}%
where the third inequality uses 
\begin{eqnarray*}
\mathbb{P}(|\tilde{W}|\left. >\right. s) &=&\mathbb{P}(|W+m(w)|>s\left\vert
W<w\right. ) \\
&\leq &\mathbb{P}(|W|>\tfrac{1}{2}s\left\vert W<w\right. )+\mathbf{1}[m(w)>%
\tfrac{1}{2}s] \\
&\leq &F(w_{0})^{-1}\mathbb{P}(|W|>\tfrac{1}{2}s)+\mathbf{1}[m(w)>\tfrac{1}{2%
}s]
\end{eqnarray*}%
for all $s>0$. Now $\mathbf{1}[m(w)^{2}>\tfrac{1}{2}V(w_{0})n]=0$ for all
large enough $n$, since $m(w)\leq C$ uniformly in $w$ from part (a).
Finally, $\mathbb{P}(W^{2}>\tfrac{1}{2}V(w_{0})n)\leq Cn^{-1/(2\xi )}$ from (%
\ref{ineq_Fbar}).

(f) Apply part (d) with $\beta _{0}=(1/2-r_{k}(\xi ))/\xi +1/\xi >3$ for $%
\xi \in \lbrack 1/3,1/2)$ to obtain 
\begin{equation*}
\mathbb{E}[|\tilde{W}|^{3}\mathbf{1}[\tilde{W}^{2}\left. \leq V(w)n\right.
]]\leq \mathbb{E}[|\tilde{W}|^{3}]\leq Cw^{\beta _{0}}.
\end{equation*}
\end{proof}

\begin{lemma}
\label{lm:sig}For any $\epsilon >0$, $\mathbb{E}[A_{n}\mathbf{1}[\Delta
_{n}>Cn^{-r_{k}(\xi )+1/2-\xi }|\mathbf{W}^{R}]\leq Cn^{-r_{k}(\xi
)+\epsilon }(1+(n^{-\xi }W_{k}^{R})^{k/\xi -\epsilon }).$
\end{lemma}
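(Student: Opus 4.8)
The plan is to condition on the vector $\mathbf{W}^R$ and exploit the standard fact that, given $W_k^R=w$, the remaining $n-k$ observations $W_i^s$ are i.i.d.\ draws from the truncation of $F$ at $w$, whose variance is $V(w)=\limfunc{Var}[W\mid W\le w]$ and all of whose moments are finite and bounded polynomially in $w$ by Lemma~\ref{lm:w}(d). Since $A_n=\mathbf{1}[W_k^R>w_0]$ is $\mathbf{W}^R$-measurable and the conditional law of $\{W_i^s\}$ depends on $\mathbf{W}^R$ only through $W_k^R$, it suffices to establish, uniformly in $w>w_0$,
\[
\mathbb{P}\bigl(\Delta_n>Cn^{-r_k(\xi)+1/2-\xi}\mid W_k^R=w\bigr)\le Cn^{-r_k(\xi)+\epsilon}\bigl(1+(n^{-\xi}w)^{k/\xi-\epsilon}\bigr).
\]

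First I would reduce $\Delta_n$ to a fluctuation of a sample variance. Writing $\hat V_n=(1-k/n)s_n^2=n^{-1}\sum_{i=1}^{n-k}(W_i^s-\bar W^s)^2$, we have $\Delta_n=|\hat V_n^{-1/2}-V(w)^{-1/2}|$, and since $s_n^2$ is shift-invariant, $\hat V_n$ is the sample variance of $n-k$ i.i.d.\ mean-zero variables with variance $V(w)$. Because $V(w)$ is bounded between positive constants for $w>w_0$ (Lemma~\ref{lm:w}(b)), a mean-value estimate gives $\Delta_n\le C|\hat V_n-V(w)|$ on $\{\hat V_n\ge V(w)/2\}$, while on the complement $|\hat V_n-V(w)|$ is bounded below by a positive constant; as the threshold $Cn^{-r_k(\xi)+1/2-\xi}$ tends to zero (using $r_k(\xi)>1/2-\xi$), for $n$ large both cases are subsumed in $\mathbb{P}(\Delta_n>t\mid w)\le 2\,\mathbb{P}(|\hat V_n-V(w)|>t/C\mid w)$, so the task is a tail bound on $|\hat V_n-V(w)|$.

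Next I would apply Markov's inequality with a power $q$ to be chosen and bound $\mathbb{E}[|\hat V_n-V(w)|^q\mid w]$ by decomposing $\hat V_n-V(w)$ into $n^{-1}\sum_{i=1}^{n-k}(\tilde W_i^2-V(w))$, a demeaning term in $\bar{\tilde W}^2$, and an $O(1/n)$ deterministic bias, where $\tilde W_i$ are the centered truncated observations. A Rosenthal/Marcinkiewicz--Zygmund inequality for the i.i.d.\ mean-zero summands $\tilde W_i^2-V(w)$ (and, for the demeaning term, for $\tilde W_i$) yields $\mathbb{E}[|\hat V_n-V(w)|^q\mid w]\le C\bigl(n^{-q/2}(\mathbb{E}[\tilde W^4])^{q/2}+n^{1-q}\mathbb{E}[\tilde W^{2q}]\bigr)$. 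Inserting the bounds $\mathbb{E}[\tilde W^4]\le Cw^{4-1/\xi+\epsilon'}$ and $\mathbb{E}[\tilde W^{2q}]\le Cw^{2q-1/\xi+\epsilon'}$ from Lemma~\ref{lm:w}(d) (with the exponent $\beta_0$ there chosen just above $4$ and $2q$) and writing $w=n^\xi\cdot(n^{-\xi}w)$, both terms collapse to order $n^{(2\xi-1)q}$ times a power of $n^{-\xi}w$, the larger being $2q-1/\xi+\epsilon'$ for $q>2$. Combining with the threshold gives $\mathbb{P}(\Delta_n>Cn^{-r_k(\xi)+1/2-\xi}\mid w)\le Cn^{q(r_k(\xi)+3\xi-3/2)}\bigl(1+(n^{-\xi}w)^{2q-1/\xi+\epsilon'}\bigr)$. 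We then need the power of $n^{-\xi}w$ to be at most $k/\xi-\epsilon$, i.e.\ $q$ just below $(k+1)/(2\xi)$, and the power of $n$ to be at most $-r_k(\xi)+\epsilon$; since $r_k(\xi)+3\xi-3/2<0$ the latter is also an upper constraint on $q$, and a short computation shows the two constraints meet exactly at $q=(k+1)/(2\xi)$ \emph{because} $r_k(\xi)=\frac{3(1+k)(1-2\xi)}{2(1+k+2\xi)}$; taking $q$ slightly below this value produces the $\epsilon$ slack in the statement.

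The hard part is the third step: controlling the $q$-th moment of $\hat V_n-V(w)$ \emph{uniformly in the truncation point} $w$, because $\mathbb{E}[\tilde W^{2q}]$ grows like $w^{2q-1/\xi}$. This is precisely the trade-off that pins down the rate: a larger $q$ sharpens the Markov bound but both imports a larger power of $n^{-\xi}W_k^R$, which must stay strictly below the integrability threshold $k/\xi$ so that $\mathbb{E}[A_n(n^{-\xi}W_k^R)^{k/\xi-\epsilon}]<\infty$ when Lemma~\ref{lm:sig} is later integrated against $\mathbf{W}^R$ (cf.\ Lemma~\ref{lm:expecT}(a)), and amplifies the $n^{(2\xi-1)q}$ factor. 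The optimal choice $q=(k+1)/(2\xi)$ balances these, so the dependence of $r_k$ on $k$ --- heuristically, larger $k$ means more truncation, a lighter tail, and a better-behaved $s_n^2$ --- emerges directly from this optimization.
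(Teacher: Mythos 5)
Your proposal is correct and follows essentially the same route as the paper's proof: conditioning on $\mathbf{W}^{R}$, reducing $\Delta _{n}$ to the fluctuation $|s_{n}^{2}-V(w)|$ of the truncated sample variance, applying Rosenthal plus Markov with moment exponent just below $(k+1)/(2\xi )$, and using Lemma \ref{lm:w}(d) to control the truncated moments polynomially in $w$, with the same observation that this exponent simultaneously delivers the rate $n^{-r_{k}(\xi )}$ and the integrable power $(n^{-\xi }W_{k}^{R})^{k/\xi -\epsilon }$. The only difference is cosmetic: you pass from $\Delta _{n}$ to $|s_{n}^{2}-V(w)|$ at the outset, whereas the paper performs that conversion at the end.
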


\begin{proof}
We initially prove%
\begin{equation}
\mathbb{E}[A_{n}\mathbf{1}[|s_{n}^{2}-V(W_{k}^{R})|>n^{-r_{k}(\xi )+1/2-\xi
}]|\mathbf{W}^{R}]\leq Cn^{-r_{k}(\xi )+\epsilon }(1+(W_{k}^{R}/n^{\xi
})^{k/\xi -\epsilon }).  \label{eq:lmsig_ini1}
\end{equation}%
With $\tilde{W}$ as defined in Lemma \ref{lm:w}, note that by the $c_{r}$
inequality, for any $\beta >0$ 
\begin{eqnarray}
\mathbb{E}[|\tilde{W}^{2}-V(w)|^{\beta }] &\leq &C\mathbb{E}[|\tilde{W}%
|^{2\beta }]+CV(w)^{\beta }  \notag \\
&\leq &C\mathbb{E}[|\tilde{W}|^{2\beta }]  \label{ineq_wq}
\end{eqnarray}%
since $V(w)\leq 1$ and $\mathbb{E}[|\tilde{W}|^{2\beta }]>0$ uniformly in $%
w\geq w_{0}$. Let $\tilde{W}_{i}$, $i=1,\ldots ,n-k$ be i.i.d.~and
distributed like $\tilde{W}$, and define $\tilde{Q}_{i}=(n-k)^{-1}(\tilde{W}%
_{i}^{2}-V(w)).$ Note that $\mathbb{E}[\tilde{W}_{i}]=\mathbb{E}[\tilde{Q}%
_{i}]=0$. By Rosenthal's (1970)\nocite{Rosenthal70} inequality, for any $p>2$
\begin{equation*}
\mathbb{E}\left[ \left\vert \sum_{i=1}^{n-k}\tilde{Q}_{i}\right\vert ^{p}%
\right] \leq C(n-k)\mathbb{E}[|\tilde{Q}_{1}|^{p}]+C((n-k)\mathbb{E}[\tilde{Q%
}_{1}^{2}])^{p/2}.
\end{equation*}%
Application of (\ref{ineq_wq}) and Lemma \ref{lm:w} (d) yields, for $w\geq
w_{0}$, $n>k+1$ and any $p_{0}>p>2$%
\begin{eqnarray*}
(n-k)\mathbb{E}[|\tilde{Q}_{1}|^{p}] &\leq &Cn^{1-p}w^{2p_{0}-1/\xi } \\
&=&Cn^{2\xi p_{0}-p}(w/n^{\xi })^{2p_{0}-1/\xi } \\
((n-k)\mathbb{E}[\tilde{Q}_{1}^{2}])^{p/2} &\leq &Cn^{-p/2}\mathbb{E}[|%
\tilde{W}|^{4}]^{p/2} \\
&\leq &Cn^{-p/2}w^{2p_{0}-p_{0}/(2\xi )} \\
&\leq &Cn^{2\xi p_{0}-(p+p_{0})/2}(w/n^{\xi })^{2p_{0}-p_{0}/(2\xi )}
\end{eqnarray*}%
so that uniformly in $w\geq w_{0}$%
\begin{equation*}
\mathbb{E}\left[ \left\vert \sum_{i=1}^{n-k}\tilde{Q}_{i}\right\vert ^{p}%
\right] \leq Cn^{2\xi p_{0}-p}(1+(w/n^{\xi })^{2p_{0}-1/\xi }).
\end{equation*}%
By Markov's inequality, for any $\alpha \in 
\mathbb{R}
$%
\begin{equation*}
\mathbb{P}\left( \left\vert \sum_{i=1}^{n-k}\tilde{Q}_{i}\right\vert >\tfrac{%
1}{2}n^{\alpha }\right) \leq 2^{p}\frac{\mathbb{E}\left[ \left\vert
\sum_{i=1}^{n-k}\tilde{Q}_{i}\right\vert ^{p}\right] }{n^{p\alpha }}.
\end{equation*}%
Thus, with $\alpha =-r_{k}(\xi )+1/2-\xi $, $p_{0}=(k+1)/(2\xi )-\epsilon /2$
and $p=p_{0}-\epsilon /2$, we obtain from some algebra that%
\begin{eqnarray}
\mathbb{P}\left( \left\vert \sum_{i=1}^{n-k}Q_{i}\right\vert \left. >\right. 
\tfrac{1}{2}n^{-r_{k}(\xi )+1/2-\xi }\right) &\leq &Cn^{-r_{k}(\xi
)+\epsilon (3/2-2\xi -r_{k}(\xi ))}(1+(w/n^{\xi })^{k/\xi -\epsilon })
\label{ineq_sq} \\
&\leq &Cn^{-r_{k}(\xi )+\epsilon }(1+(w/n^{\xi })^{k/\xi -\epsilon })  \notag
\end{eqnarray}%
since $3/2-2\xi -r_{k}(\xi )\leq 1$ uniformly in $\xi \in \lbrack 1/3;1/2]$.
Furthermore, by Markov's inequality%
\begin{eqnarray}
\mathbb{P}\left( \left\vert (n-k)^{-1}\sum_{i=1}^{n-k}\tilde{W}%
_{i}\right\vert ^{2}\left. >\right. \tfrac{1}{2}n^{\alpha }\right) &\leq &2%
\frac{(n-k)^{-1}V(w)}{n^{\alpha }}  \label{ineq_sw} \\
&\leq &Cn^{\alpha -1}\leq Cn^{-r_{k}(\xi )+\epsilon }.  \notag
\end{eqnarray}

Now note that conditional on $\mathbf{W}^{R}$, $\{W_{i}^{s}+m(W_{k}^{R})%
\}_{i=1}^{n-k}$ has the same distribution as $\{\tilde{W}_{i}\}_{i=1}^{n-k}$
with $w=W_{k}^{R}$. Thus, conditional on $\mathbf{W}^{R}$, the distribution
of%
\begin{equation*}
s_{n}^{2}-V(W_{k}^{R})=(n-k)^{-1}%
\sum_{i=1}^{n-k}((W_{i}^{s}+m(W_{k}^{R}))^{2}-V(W_{k}^{R}))-\left(
(n-k)^{-1}\sum_{i=1}^{n-k}(W_{i}^{s}+m(W_{k}^{R}))\right) ^{2}
\end{equation*}%
is equal to the distribution of $\sum_{i=1}^{n-k}Q_{i}-\left(
(n-k)^{-1}\sum_{i=1}^{n-k}\tilde{W}_{i}\right) ^{2}$ for $w=W_{k}^{R}$, so (%
\ref{eq:lmsig_ini1}) follows from (\ref{ineq_sq}) and (\ref{ineq_sw}).

To conclude the proof of the lemma, note that $0<V(w)<\infty $ uniformly in $%
w\geq w_{0}$, so for a large enough finite $C$, $|1/s_{n}-1/\sqrt{%
V(W_{k}^{R})}|>Cn^{-r_{k}(\xi )+1/2-\xi }$ implies $%
|s_{n}^{2}-V(W_{k}^{R})|<n^{-r_{k}(\xi )+1/2-\xi }$. The result thus follows
from (\ref{eq:lmsig_ini1}) and $%
\sup_{w>w_{0}}|(1-k/n)^{1/2}-1|V(w)=O(n^{-1/2})$.
\end{proof}

\begin{lemma}
\label{lm:Hi}(a)\ $\mathbb{E[}A_{n}^{\ast }|\mathbf{1}[n^{-1/2+\xi }\sigma 
\mathbf{X}/\sqrt{V(n^{\xi }\sigma X_{k})}\left. \in \mathcal{H}_{j}]-\mathbf{%
1}[n^{-1/2+\xi }\sigma \mathbf{X}\in \mathcal{H}_{j}]|\right. ]\leq
Cn^{-r_{k}(\xi )}$;

(b) For all $\epsilon >0$, $|\mathbb{E}[A_{n}(\mathbf{1}[\mathbf{W}^{R}/%
\sqrt{(n-k)s_{n}^{2}}\in \mathcal{H}_{j}]-\mathbf{1}[n^{-1/2}\mathbf{W}^{R}/%
\sqrt{V(W_{k}^{R})}\in \mathcal{H}_{j}])]|\leq Cn^{-r_{k}(\xi )+\epsilon }$.
\end{lemma}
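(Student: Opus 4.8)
The plan is to reduce both parts to a single anti-concentration estimate. The key geometric fact is that if $\mathbf{v}$ and $\mathbf{v}'$ are positive scalar multiples of a common ordered direction $\mathbf{u}=(1,u_2,\dots,u_k)'$, $1\ge u_2\ge\dots\ge u_k$, then $\mathbf{1}[\mathbf{v}\in\mathcal{H}_j]\neq\mathbf{1}[\mathbf{v}'\in\mathcal{H}_j]$ only if some element of $\mathcal{I}_j(\mathbf{u})$ lies strictly between the radii of $\mathbf{v}$ and $\mathbf{v}'$; by the condition imposed on $\varphi$ there are at most $L$ such candidate radii on the range of radii that matters once $n$ is large, so the disagreement event is contained in a union of at most $L$ thin ``shells'', and I will bound its probability by (shell width) $\times$ (supremal conditional density of the leading order statistic), after conditioning on $\mathbf{u}$.

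For part (a), on $A_n^*$ write $\mathbf{X}=X_1\mathbf{u}$ with $\mathbf{u}=\mathbf{X}/X_1$ and set $\rho=1/\sqrt{V(n^\xi\sigma X_k)}$, so the two vectors compared are $\eta_n X_1\mathbf{u}$ and $\eta_n\rho X_1\mathbf{u}$ with $\eta_n=\sigma n^{\xi-1/2}$; choosing $w_0$ large, Lemma \ref{lm:w}(b) gives $|\rho-1|\le C(n^\xi\sigma X_k)^{2-1/\xi}$. Disagreement forces $X_1$ into an interval around some $s^*=r^*/\eta_n$, $r^*\in\mathcal{I}_j(\mathbf{u})$, of length at most $C(n^\xi\sigma u_k)^{2-1/\xi}(s^*)^{3-1/\xi}$ (here $\xi>1/3$, so $3-1/\xi>0$). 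Conditioning on $\mathbf{u}$, the change of variables $\mathbf{X}\mapsto(X_1,\mathbf{u})$ applied to the density $\xi^{-k}e^{-X_k^{-1/\xi}}\prod_jX_j^{-1/\xi-1}$ (which follows from $X_j^{-1/\xi}=\sum_{l\le j}E_l$) shows the conditional density of $X_1$ given $\mathbf{u}$ is at most $Cu_k^{k/\xi}x^{-k/\xi-1}$; since $\eta_n\to0$ the relevant $s^*$ all exceed $\eta_n^{-1}L^{-1}$, there are at most $L$ of them, and $(s^*)^{2-(k+1)/\xi}$ is decreasing, so summing (length)$\times$(supremal density) over the shells bounds the $\mathbf{u}$-conditional expectation of $A_n^*\,|\mathbf{1}[\cdots]-\mathbf{1}[\cdots]|$ by $Cu_k^{2+(k-1)/\xi}n^{-(k+1)(1-2\xi)/(2\xi)}$. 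Taking expectations over $\mathbf{u}$ (with $\mathbb{E}[u_k^{2+(k-1)/\xi}]\le1$) and using $(k+1)(1-2\xi)/(2\xi)\ge r_k(\xi)$ — which reduces to $1+k\ge\xi$ — proves (a); candidate radii in $(0,L^{-1})$ are handled by the same method (and are vacuous for the tests of Section 4, whose $\partial\mathcal{H}_j$ stays bounded away from the origin).

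Part (b) uses the same skeleton with $\mathbf{W}^R=W_1^R\mathbf{u}$, $\mathbf{u}=\mathbf{W}^R/W_1^R$, and the two scalings $c_1=((n-k)s_n^2)^{-1/2}$ and $c_2=n^{-1/2}/\sqrt{V(W_k^R)}$, for which $|c_1/c_2-1|=\sqrt{V(W_k^R)}\,\Delta_n\le C\Delta_n$. I first split off the event $\Delta_n>Cn^{-r_k(\xi)+1/2-\xi}$: on it the contribution is bounded, via Lemma \ref{lm:sig} and then Lemma \ref{lm:expecT}(a) with $p=k/\xi-\epsilon$, by $Cn^{-r_k(\xi)+\epsilon}\mathbb{E}[A_n(1+(n^{-\xi}W_k^R)^{k/\xi-\epsilon})]=O(n^{-r_k(\xi)+\epsilon})$. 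On the complement the indicators can differ only if the $\mathbf{W}^R$-measurable event ``$n^{-1/2}W_1^R/\sqrt{V(W_k^R)}$ lies within relative distance $Cn^{-r_k(\xi)+1/2-\xi}$ of some $r^*\in\mathcal{I}_j(\mathbf{u})$'' occurs; since $\delta\ge r_k(\xi)$, (\ref{tvd_conv}) lets me replace $n^{-\xi}\mathbf{W}^R$ by $\sigma\mathbf{X}$ at cost $O(n^{-r_k(\xi)})$, reducing the estimate to the shell argument of part (a) but with shells of relative half-width $Cn^{-r_k(\xi)+1/2-\xi}$; the same summation gives a bound of order $n^{-r_k(\xi)+1/2-\xi+k-k/(2\xi)}=n^{-r_k(\xi)+(1-2\xi)(\xi-k)/(2\xi)}\le n^{-r_k(\xi)}$, which together with the previous display yields (b).

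The crux is the shell estimate. The naive bound — disagreement forces $X_1$ (resp.\ $n^{-\xi}W_1^R$) to reach size of order $\eta_n^{-1}$, an event of probability of order $n^{1-1/(2\xi)}$ — is useless, because $n^{1-1/(2\xi)}$ exceeds the target $n^{-r_k(\xi)}$ in the relevant range (since $r_k(\xi)>1/(2\xi)-1$); one must exploit that each shell is thin, of width comparable to $(n^\xi\sigma X_k)^{2-1/\xi}$ (resp.\ $\Delta_n$) times its radius, and that conditioning on the direction $\mathbf{u}$ lightens the tail of the leading order statistic from index $1/\xi$ to $k/\xi$. Balancing the powers of $n$ coming from $\eta_n$, from $(n^\xi\sigma)^{2-1/\xi}$, and from the lower bound $s^*\ge\eta_n^{-1}L^{-1}$ so that the exponent lands at or below $-r_k(\xi)$ is exactly where the precise form of $r_k(\xi)$ and the hypothesis $\xi>(1+k)/(1+3k)$ enter. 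A secondary technical point is the reduction ``on $\{\Delta_n\le Cn^{-r_k(\xi)+1/2-\xi}\}$ only an $\mathbf{W}^R$-measurable event can cause disagreement'', which is what decouples Lemma \ref{lm:sig} from (\ref{tvd_conv}) in part (b).
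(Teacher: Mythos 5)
Your strategy is the paper's: decompose $\mathbf{X}=X_{1}\mathbf{U}$ with $\mathbf{U}=\mathbf{X}/X_{1}$, observe that a disagreement between the two indicators forces $X_{1}$ into one of at most $L$ thin shells around the radii in $\mathcal{I}_{j}(\mathbf{U})$, bound each shell's probability by width times the conditional density of $X_{1}$ given $\mathbf{U}$, and in part (b) first peel off $\{\Delta_{n}>Cn^{-r_{k}(\xi)+1/2-\xi}\}$ via Lemma \ref{lm:sig} and Lemma \ref{lm:expecT}(a) and transfer the remaining $\mathbf{W}^{R}$-measurable boundary-proximity event to the extreme-value limit via (\ref{tvd_conv}) using $\delta\geq r_{k}(\xi)$. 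All of that matches the paper's proof step for step.

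There is, however, one step that fails as written. The conditional density of $X_{1}$ given $\mathbf{U}$ is proportional to $u_{k}^{-k/\xi}x^{-k/\xi-1}\exp[-(xu_{k}/\sigma)^{-1/\xi}]$, not bounded by $Cu_{k}^{+k/\xi}x^{-k/\xi-1}$: your claimed bound would require $\exp[-(xu_{k}/\sigma)^{-1/\xi}]\leq Cu_{k}^{2k/\xi}$, which is false once $x$ is of the relevant order $n^{1/2-\xi}$ (take $u_{k}\asymp 1/x$). Consequently the closing step ``take expectations over $\mathbf{u}$ using $\mathbb{E}[u_{k}^{2+(k-1)/\xi}]\leq 1$'' is not available: with the correct density the exponent on $u_{k}$ is $2-(k+1)/\xi<0$, and since the marginal density of $U_{k}$ is only $O(u_{k}^{1/\xi-1})$ near zero, the resulting $u_{k}$-integral diverges if the exponential factor is dropped. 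The repair is to keep $\exp[-(xu_{k}/\sigma)^{-1/\xi}]$ and integrate it against $u_{k}^{-(k-1)/\xi-1}$ over $(0,1]$, which converges and contributes a factor of order $n^{(1/2-\xi)(k-1)/\xi}$ rather than $O(1)$; this is exactly what the paper does, and the resulting exponent is still $\leq -r_{k}(\xi)$ (the paper lands on $-r_{k}(\xi)-(1/2-\xi)/\xi$), so your conclusions for both parts survive. But the quantitative bookkeeping in your shell estimate — the very place you correctly identify as the crux — is not justified as stated, and your intermediate exponents ($-(k+1)(1-2\xi)/(2\xi)$ in part (a)) come out of the erroneous density bound rather than the correct calculation.
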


\begin{proof}
(a) By a first order Taylor expansion $|V(w)^{-1/2}-1|\leq C|1-V(w)|$
uniformly in $w\geq w_{0}$. For $\mathbf{s}\in 
\mathbb{R}
^{k}$ and $\mathcal{H}\subset 
\mathbb{R}
^{k}$, let $d(\mathbf{s},\mathcal{H})$ be the Euclidian distance of the
point $\mathbf{s}$ from the set $\mathcal{H}$. We have%
\begin{eqnarray*}
&&\mathbb{E[}A_{n}^{\ast }|\mathbf{1}[n^{-1/2+\xi }\sigma \mathbf{X}/\sqrt{%
V(n^{\xi }\sigma X_{k})}\left. \in \mathcal{H}_{j}]-\mathbf{1}[n^{-1/2+\xi
}\sigma \mathbf{X}\in \mathcal{H}_{j}]|]\right. \\
&\leq &\mathbb{E[}A_{n}^{\ast }\mathbf{1}[d(n^{-1/2+\xi }\sigma \mathbf{X}%
,\partial \mathcal{H}_{j})\leq Cn^{-1/2+\xi }||\mathbf{X}||\cdot |1-V(n^{\xi
}\sigma X_{k})|]] \\
&\leq &\mathbb{E[}A_{n}^{\ast }\mathbf{1}[d(n^{-1/2+\xi }\sigma \mathbf{X}%
,\partial \mathcal{H}_{j})\leq Cn^{-3/2+3\xi }||\mathbf{X}||\cdot
X_{k}^{2-1/\xi }]] \\
&\leq &\mathbb{E[}A_{n}^{\ast }\mathbf{1}[d(n^{-1/2+\xi }\sigma \mathbf{X}%
,\partial \mathcal{H}_{j})\leq Cn^{-3/2+3\xi }X_{1}^{3-1/\xi }]] \\
&\leq &\mathbb{E[}A_{n}^{\ast }\mathbf{1}[d(n^{-1/2+\xi }\sigma \mathbf{X}%
,\partial \mathcal{H}_{j})\leq Cn^{-r_{k}(\xi )}(1+X_{1})]]
\end{eqnarray*}%
where the second inequality follows from Lemma \ref{lm:w} (b), and the last
inequality holds because $-3/2+3\xi \leq -r_{k}(\xi )$ and $X_{1}^{3-1/\xi
}\leq X_{1}$ for all $X_{1}\geq 1$ for $\xi \in \lbrack 1/3,1/2]$.

Furthermore, with $\mathbf{U}=(U_{1},\ldots ,U_{k})^{\prime }\mathbf{=X/}%
X_{1}$, 
\begin{eqnarray*}
&&\mathbb{E}[\mathbf{1}[d(n^{-1/2+\xi }\sigma \mathbf{X},\partial \mathcal{H}%
_{j})\left. \leq \right. Cn^{-r_{k}(\xi )}X_{1}]] \\
&\leq &\mathbb{E}[\mathbb{E}[d(n^{-1/2+\xi }\sigma (X_{1}\mathbf{U}%
),\partial \mathcal{H}_{j})\leq Cn^{-r_{k}(\xi )}(1+X_{1})|\mathbf{U}]] \\
&=&\mathbb{E}[\sum_{s\in \mathcal{I}_{j}(\mathbf{U})}\mathbb{E}[\mathbf{1}%
[|n^{-1/2+\xi }\sigma X_{1}-s|\leq Cn^{-r_{k}(\xi )}(1+X_{1})|\mathbf{U}]]
\end{eqnarray*}%
A calculation shows that the conditional density of $X_{1}$ given $\mathbf{U}
$ is proportional to 
\begin{equation*}
u_{k}^{-k/\xi }s^{-k/\xi -1}\exp \left[ -(su_{k}/\sigma )^{-1/\xi }\right] 
\text{,}
\end{equation*}%
a decreasing function of $s$ for all large enough $s$, and the density of $%
U_{k}$ is bounded above by $Cu_{k}^{1/\xi -1}$. Thus, for all large enough $%
n $,%
\begin{eqnarray*}
&&\mathbb{E}[\sum_{s\in \mathcal{I}_{j}(\mathbf{U})}\mathbb{E}[\mathbf{1}%
[|n^{-1/2+\xi }\sigma X_{1}-s|\left. \leq Cn^{-r_{k}(\xi )}(1+X_{1})|\mathbf{%
U}]]\right. \\
&\leq &\mathbb{E}[L\mathbb{E}[\mathbf{1}[|n^{-1/2+\xi }\sigma
X_{1}-L^{-1}|\leq Cn^{-r_{k}(\xi )}(1+X_{1})|\mathbf{U}]] \\
&\leq &\mathbb{E}[L\mathbb{E}[\mathbf{1}[|n^{-1/2+\xi }\sigma
X_{1}-L^{-1}|\leq Cn^{-r_{k}(\xi )}X_{1}|\mathbf{U}]] \\
&\leq &C(n^{1/2-\xi })^{-k/\xi -1}n^{-r_{k}(\xi )}n^{1/2-\xi
}\int_{0}^{1}\exp \left[ -(L^{-1}n^{1/2-\xi }u_{k}/\sigma )^{-1/\xi }\right]
u_{k}^{-(k-1)/\xi -1}du_{k} \\
&\leq &Cn^{-r_{k}(\xi )-(1/2-\xi )k/\xi }\cdot n^{(1/2-\xi )(k-1)/\xi } \\
&=&Cn^{-r_{k}(\xi )-(1/2-\xi )/\xi }
\end{eqnarray*}%
and the result follows.

(b)\ Let $D_{n}=\mathbf{1}[\Delta _{n}\leq Cn^{-r_{k}(\xi )+1/2-\xi }]$.
Using Lemmas \ref{lm:sig} and \ref{lm:expecT} (a), we have for all $\epsilon
>0$%
\begin{eqnarray*}
\mathbb{E}[A_{n}(1-D_{n})] &=&\mathbb{E}[\mathbb{E}[A_{n}\mathbf{1}[\tilde{%
\Delta}_{n}>Cn^{-r_{k}(\xi )+1/2-\xi }]|\mathbf{W}^{R}]] \\
&\leq &Cn^{-r_{k}(\xi )+\epsilon }(1+\mathbb{E}[(n^{-\xi }W_{k}^{R})^{k/\xi
-\epsilon }]) \\
&\leq &Cn^{-r_{k}(\xi )+\epsilon }
\end{eqnarray*}%
so it suffices to show the claim with $A_{n}$ replaced by $A_{n}D_{n}$.

In the notation of the proof of part (a), we have 
\begin{eqnarray*}
&&\mathbb{E[}A_{n}D_{n}|\mathbf{1}[\mathbf{W}^{R}/\sqrt{(n-k)s_{n}^{2}}%
\left. \in \mathcal{H}_{j}]-\mathbf{1}[n^{-1/2}\mathbf{W}^{R}/\sqrt{%
V(W_{k}^{R})}\in \mathcal{H}_{j}]|]\right. \\
&\leq &\mathbb{E[}A_{n}D_{n}\mathbf{1}[d(n^{-1/2}\mathbf{W}^{R}/\sqrt{%
V(W_{k}^{R})},\partial \mathcal{H}_{j})\leq Cn^{-1/2}\Delta _{n}||\mathbf{W}%
^{R}||]] \\
&\leq &\mathbb{E[}A_{n}\mathbf{1}[d(n^{-1/2}\mathbf{W}^{R}/\sqrt{V(W_{k}^{R})%
},\partial \mathcal{H}_{j})\leq Cn^{-r_{k}(\xi )-\xi }W_{1}^{R}]] \\
&\leq &\mathbb{E[}A_{n}^{\ast }\mathbf{1}[d(n^{-1/2+\xi }\sigma \mathbf{X}/%
\sqrt{V(n^{\xi }\sigma X_{k})},\partial \mathcal{H}_{j})\leq Cn^{-r_{k}(\xi
)}\sigma X_{1}]]+n^{-\delta } \\
&\leq &\mathbb{E[}A_{n}^{\ast }\mathbf{1}[d(n^{-1/2+\xi }\sigma \mathbf{X}%
,\partial \mathcal{H}_{j})\leq Cn^{-r_{k}(\xi )}\sigma X_{1}]]+n^{-\delta
}+Cn^{-r_{k}(\xi )}
\end{eqnarray*}%
where the penultimate inequality follows from (\ref{tvd_conv}), and the last
inequality applies the result form part (a). The desired inequality now
follows from the same arguments as in the proof of part (a).
\end{proof}

\bigskip

\noindent \textbf{Proof of Theorem 2:}

Let $B_{n}=\mathbf{1}[W_{k}^{R}>w_{0}]\mathbf{1}[n^{-1/2}\mathbf{W}^{R}/%
\sqrt{V(W_{k}^{R})}\in \mathcal{H}_{j}]$ and $B_{n}^{\ast }=\mathbf{1}%
[n^{\xi }\sigma X_{k}>w_{0}]\mathbf{1}[n^{\xi -1/2}\sigma \mathbf{X}\in 
\mathcal{H}_{j}]$. Given the results in Lemmas \ref{lm:A} and \ref{lm:Hi}
(a), it suffices to show that%
\begin{multline*}
\left\vert \mathbb{E}B_{n}\mathbf{1}\left[ \frac{\sum_{i=1}^{n-k}W_{i}^{s}}{%
\sqrt{(n-k)s_{n}^{2}}}\leq b_{j}\left( \frac{\mathbf{W}^{R}}{\sqrt{%
(n-k)s_{n}^{2}}}\right) \right] \right. \\
\left. -\mathbb{E}B_{n}^{\ast }\mathbf{1}\left[ Z-n^{1/2}m^{\ast }(n^{\xi
}\sigma X_{k})\leq b_{j}\left( n^{\xi -1/2}\sigma \mathbf{X}\right) \right]
\right\vert \leq Cn^{-r_{k}(\xi )+\epsilon }
\end{multline*}%
for all $j=1,\ldots ,m_{\varphi }$.

Notice that conditional on $\mathbf{W}^{R}$,%
\begin{equation*}
\varsigma _{n}=\frac{\sum_{i=1}^{n-k}(W_{i}^{s}+m(W_{k}^{R}))}{\sqrt{%
(n-k)s_{n}^{2}}}
\end{equation*}%
has the same distribution as the t-statistic computed from the zero-mean
i.i.d.~sample $\tilde{W}_{1},\tilde{W}_{2},\ldots ,\tilde{W}_{n}$ with $%
\tilde{W}_{i}\sim \tilde{W}$ and $\tilde{W}$ defined in Lemma \ref{lm:w}
with $w=W_{k}^{R}$. Thus, by (\ref{BG_bound}), 
\begin{multline*}
A_{n}\sup_{s}|\mathbb{P}(\varsigma _{n}\leq s|\mathbf{W}^{R})-\Phi (x)|\leq
A_{n}CV(W_{k}^{R})^{-1}\mathbb{E}[\tilde{W}^{2}\mathbf{1}[\tilde{W}%
^{2}>V(W_{k}^{R})n] \\
+A_{n}Cn^{-1/2}V(W_{k}^{R})^{-3/2}\mathbb{E}[|\tilde{W}|^{3}\mathbf{1}[%
\tilde{W}^{2}\leq V(W_{k}^{R})n]].
\end{multline*}

Using $V(W_{k}^{R})\geq V(w_{0})>0$ if $A_{n}=1$ and applying Lemma \ref%
{lm:w} (e) and (f), the right-hand side is bounded above by 
\begin{eqnarray}
&&A_{n}C(n^{-\xi }W_{k}^{R})^{2}n^{2\xi -1/(2\xi )}+A_{n}C(n^{-\xi
}W_{k}^{R})^{(1/2-r_{k}(\xi ))/\xi }n^{-r_{k}(\xi )}\ \ \ \ \ \ \ \ \ \ \ \
\ \ \ \ \   \label{B1n} \\
&&\ \ \ \ \ \ \ \ \ \ \ \ \ \ \ \ \ \ \ \ \ \ \ \ \ \ \ \ \ \ \ \ \ \left.
\leq A_{n}Cn^{-r_{k}(\xi )}(1+(n^{-\xi
}W_{k}^{R})^{2}):=L_{1,n}(W_{k}^{R})\right.  \notag
\end{eqnarray}%
since $2\xi -1/(2\xi )\leq -r_{k}(\xi )$ and $0<(1/2-r_{k}(\xi ))/\xi \leq 1$
for all $\xi \in \lbrack 1/3,1/2]$ and $k>1.$

By the Lipschitz continuity of $b_{j}$ 
\begin{equation*}
\left\vert b_{j}\left( \frac{\mathbf{W}^{R}}{\sqrt{(n-k)s_{n}^{2}}}\right)
-b_{j}\left( \frac{n^{-1/2}\mathbf{W}^{R}}{\sqrt{V(W_{k}^{R})}}\right)
\right\vert \leq C\Delta _{n}n^{-1/2}||\mathbf{W}^{R}||\text{,}
\end{equation*}%
and defining 
\begin{eqnarray*}
\hat{M} &=&\frac{(n-k)m(W_{k}^{R})}{\sqrt{(n-k)s_{n}^{2}}}\text{, }\tilde{M}=%
\frac{(n-k)m(W_{k}^{R})}{n^{1/2}\sqrt{V(W_{k}^{R})}} \\
R_{n} &=&b_{j}(\mathbf{W}^{R}/\sqrt{(n-k)s_{n}^{2}})-b_{j}(n^{-1/2}\mathbf{W}%
^{R}/\sqrt{V(W_{k}^{R})})+\hat{M}-\tilde{M}
\end{eqnarray*}%
we have 
\begin{eqnarray*}
|A_{n}R_{n}| &\leq &A_{n}\Delta _{n}\left( n^{1/2}|m(W_{k}^{R})|+Cn^{-1/2}||%
\mathbf{W}^{R}||\right) \\
&\leq &\Delta _{n}A_{n}Cn^{\xi -1/2}((n^{-\xi }W_{k}^{R})^{1-1/\xi }+n^{-\xi
}||\mathbf{W}^{R}||):=\Delta _{n}L_{2,n}(\mathbf{W}^{R})
\end{eqnarray*}%
where the second inequality invoked Lemma \ref{lm:w} (a). In this notation%
\begin{equation*}
\mathbf{1}\left[ \frac{\sum_{i=1}^{n-k}W_{i}^{s}}{\sqrt{(n-k)s_{n}^{2}}}\leq
b_{j}\left( \frac{\mathbf{W}^{R}}{\sqrt{(n-k)s_{n}^{2}}}\right) \right] =%
\mathbf{\mathbf{1}}\left[ \varsigma _{n}\leq R_{n}+b_{j}\left( \frac{n^{-1/2}%
\mathbf{W}^{R}}{\sqrt{V(W_{k}^{R})}}\right) +\tilde{M}\right] .
\end{equation*}%
From Lemma \ref{lm:sig}, 
\begin{equation*}
\mathbb{E}[A_{n}\mathbf{1}[\Delta _{n}>Cn^{-r_{k}(\xi )+1/2-\xi }]|\mathbf{W}%
^{R}]\leq CA_{n}n^{-r_{k}(\xi )+\epsilon }(1+(n^{-\xi }W_{k}^{R})^{k/\xi
-\epsilon }):=L_{3,n}(W_{k}^{R}).
\end{equation*}%
Thus, uniformly in $s\in 
\mathbb{R}
$,%
\begin{eqnarray*}
&&\mathbb{E}[B_{n}\mathbf{1}[\left. \varsigma _{n}\leq \right. s+R_{n}|%
\mathbf{W}^{R}]] \\
&\leq &\mathbb{E}[B_{n}\mathbf{1}[\varsigma _{n}\leq s+R_{n}]\mathbf{1}%
[\Delta _{n}\leq Cn^{-r_{k}(\xi )+1/2-\xi }]|\mathbf{W}^{R}]+\mathbb{E}[A_{n}%
\mathbf{1}[\Delta _{n}>Cn^{-r_{k}(\xi )+1/2-\xi }]|\mathbf{W}^{R}] \\
&\leq &\mathbb{E}[B_{n}\mathbf{1}[\varsigma _{n}\leq s+Cn^{-r_{k}(\xi
)+1/2-\xi }L_{2,n}(\mathbf{W}^{R})|\mathbf{W}^{R}]+L_{3,n}(W_{k}^{R}) \\
&\leq &B_{n}\Phi (s+Cn^{-r_{k}(\xi )+1/2-\xi }L_{2,n}(\mathbf{W}%
^{R}))+L_{1,n}(W_{k}^{R})+L_{3,n}(W_{k}^{R}) \\
&\leq &B_{n}\Phi (s)+L_{1,n}(W_{k}^{R})+Cn^{-r_{k}(\xi )+1/2-\xi }L_{2,n}(%
\mathbf{W}^{R})+L_{3,n}(W_{k}^{R})
\end{eqnarray*}%
where the third inequality follows from (\ref{B1n}), and the fourth
inequality follows from an exact first order Taylor expansion and the fact
that the derivative of $\Phi $ is uniformly bounded. Thus, letting $s=b_{j}(%
\mathbf{W}^{R}/\sqrt{nV(W_{k}^{R})})+\tilde{M}$ and taking expectations, we
obtain%
\begin{eqnarray}
&&\mathbb{E}\left[ B_{n}\mathbf{1}\left[ \frac{\sum_{i=1}^{n-k}W_{i}^{s}}{%
\sqrt{(n-k)s_{n}^{2}}}\leq b_{j}\left( \frac{\mathbf{W}^{R}}{\sqrt{%
(n-k)s_{n}^{2}}}\right) \right] \right] -\mathbb{E}\left[ B_{n}\mathbf{1}%
\left[ Z\leq b_{j}\left( \frac{\mathbf{W}^{R}}{n^{1/2}\sqrt{V(W_{k}^{R})}}%
\right) +\tilde{M}\right] \right]  \notag  \label{ineq_expec} \\
&\leq &\mathbb{E}[L_{1,n}(W_{k}^{R})]+Cn^{-r_{k}(\xi )+1/2-\xi }\mathbb{E}%
[L_{2,n}(\mathbf{W}^{R})]+\mathbb{E}[L_{3,n}(W_{k}^{R})].
\label{ineq_expecx}
\end{eqnarray}%
Similarly, uniformly in $s\in 
\mathbb{R}
$,%
\begin{eqnarray*}
&&\mathbb{E}[B_{n}\mathbf{1}[\left. \varsigma _{n}\leq \right. s+R_{n}|%
\mathbf{W}^{R}]] \\
&\geq &\mathbb{E}[B_{n}\mathbf{1}[\varsigma _{n}\leq s+R_{n}]\mathbf{1}%
[\Delta _{n}\leq Cn^{-r_{k}(\xi )+1/2-\xi }]|\mathbf{W}^{R}]-\mathbb{E}[A_{n}%
\mathbf{1}[\Delta _{n}>Cn^{-r_{k}(\xi )+1/2-\xi }]|\mathbf{W}^{R}] \\
&\geq &B_{n}\Phi (s)-L_{1,n}(W_{k}^{R})-Cn^{-r_{k}(\xi )+1/2-\xi }L_{2,n}(%
\mathbf{W}^{R})-L_{3,n}(W_{k}^{R})
\end{eqnarray*}%
so that (\ref{ineq_expecx})\ holds with the left hand side replaced by its
absolute value. By an application of Lemma \ref{lm:expecT} (a) and (b), the
right hand side of (\ref{ineq_expecx}) is $O(n^{-r_{k}(\xi )+\epsilon })$.

Furthermore, with $B_{n}^{\ast \ast }=A_{n}^{\ast }\mathbf{1}[n^{\xi
-1/2}\sigma \mathbf{X}/\sqrt{V(n^{\xi }\sigma X_{k})}\in \mathcal{H}_{j}]$
and $\tilde{M}^{\ast }=n^{-1/2}(n-k)m(n^{\xi }\sigma X_{k})/\sqrt{V(n^{\xi
}\sigma X_{k})}$, by (\ref{tvd_conv}), 
\begin{eqnarray*}
&&\left\vert \mathbb{E}\left[ B_{n}\mathbf{1}[Z\leq b_{j}\left( \frac{%
\mathbf{W}^{R}}{n^{1/2}\sqrt{V(W_{k}^{R})}}\right) +\tilde{M}]\right] -%
\mathbb{E}\left[ B_{n}^{\ast \ast }\mathbf{1}[Z\leq b_{j}\left( \frac{n^{\xi
}\sigma \mathbf{X}}{n^{1/2}\sqrt{V(n^{\xi }\sigma X_{k})}}\right) +\tilde{M}%
^{\ast }\right] \right\vert \\
&\leq &Cn^{-\delta }.
\end{eqnarray*}%
By Lemma \ref{lm:Hi} (b), replacing $B_{n}^{\ast \ast }$ by $B_{n}^{\ast }$
in this expression yields an additional approximation error of order at most 
$O(n^{-r_{k}(\xi )+\epsilon })$.

By a first order Taylor expansion $|1-V(w)^{-1/2}|\leq C|1-V(w)|$ uniformly
in $w\geq w_{0}.$ Thus, by the assumption about $b_{j}$, and using again the
fact that the derivative of $\Phi $ is uniformly bounded, 
\begin{eqnarray*}
&&\left\vert \mathbb{E}\left[ B_{n}^{\ast }\Phi \left( b_{j}\left( \frac{%
n^{\xi }\sigma \mathbf{X}}{n^{1/2}\sqrt{V(n^{\xi }\sigma X_{k})}}\right) +%
\tilde{M}^{\ast }\right) -B_{n}^{\ast }\Phi \left( b_{j}\left( n^{\xi
-1/2}\sigma \mathbf{X}\right) +n^{1/2}m^{\ast }(n^{\xi }\sigma X_{k})\right) %
\right] \right\vert \\
&\leq &Cn^{-1/2}\mathbb{E}[B_{n}^{\ast }(||n^{\xi }\mathbf{X}%
||+(n-k)|m(n^{\xi }\sigma X_{k})|)|1-V(n^{\xi }\sigma X_{k})|] \\
&&\ \ \ \ \ \ \ \ \ \ \ \ \ \ \ \ \ \ \ \ \ \ \ \ \ \ \ \ \ \ \ \ \ \ \ \ \
\ \ \ \ \ \ \ \ \ \ \ \ \ +Cn^{1/2}\mathbb{E}[B_{n}^{\ast }|m(n^{\xi }\sigma
X_{k})-m^{\ast }(n^{\xi }\sigma X_{k})|].
\end{eqnarray*}%
By the Cauchy-Schwarz inequality and Lemma \ref{lm:w} (b),%
\begin{eqnarray}
n^{-1/2}\mathbb{E}[B_{n}^{\ast }||n^{\xi }\mathbf{X}||\cdot |1-V(n^{\xi
}\sigma X_{k})|] &\leq &n^{-1/2}\mathbb{E}[||n^{\xi }\mathbf{X}||^{2}]^{1/2}%
\mathbb{E}[B_{n}^{\ast }|1-V(n^{\xi }\sigma X_{k})|^{2}]^{1/2}  \notag \\
&\leq &Cn^{3(\xi -1/2)}\mathbb{E}[||\mathbf{X}||^{2}]^{1/2}\mathbb{E}%
[X_{k}^{4-2/\xi }]^{1/2}  \label{fin1}
\end{eqnarray}%
and by the Cauchy-Schwarz inequality and Lemma \ref{lm:w} (a) and (b),%
\begin{eqnarray}
n^{1/2}\mathbb{E}[B_{n}^{\ast }|m(n^{\xi }\sigma X_{k})|\cdot |1-V(n^{\xi
}\sigma X_{k})|] &\leq &n^{1/2}\mathbb{E}[B_{n}^{\ast }|m(n^{\xi }\sigma
X_{k})|^{2}]^{1/2}\mathbb{E}[B_{n}^{\ast }|1-V(n^{\xi }\sigma
X_{k})|^{2}]^{1/2}  \notag \\
&\leq &Cn^{3(\xi -1/2)}\mathbb{E}[X_{k}^{2-2/\xi }]^{1/2}\mathbb{E}%
[X_{k}^{4-2/\xi }]^{1/2}.  \label{fin2}
\end{eqnarray}%
Finally, by Lemma \ref{lm:w} (c),%
\begin{equation}
n^{1/2}\mathbb{E}[B_{n}^{\ast }|m(n^{\xi }\sigma X_{k})-m^{\ast }(n^{\xi
}\sigma X_{k})|]\leq Cn^{-1/2+\xi -\delta }\mathbb{E}[X_{k}^{1-(1+\delta
)/\xi }]+Cn^{\xi -3/2}\mathbb{E}[X_{k}^{1-2/\xi }].  \label{fin3}
\end{equation}%
Note that $3(\xi -1/2)$, $\xi -3/2$ and $-1/2+\xi -\delta $ for $\delta \geq
r_{k}(\xi )$ are weakly smaller than $-r_{k}(\xi )$ for all $\xi \in \lbrack
1/3,1/2]$, so the result follows from applying Lemma \ref{lm:expecT} (c) and
(d) to (\ref{fin1})-(\ref{fin3}).

\subsection{\label{sec:gen2tails}Generalizing Theorem \protect\ref{thm_main}
to Two Potentially Heavy Tails}

\begin{condition}
\label{cnd:2tail}Suppose for some $\xi ^{R},\sigma ^{R},\xi ^{L},\sigma
^{L},\delta ,w_{0}>0$, $F$ admits a density for $w>w_{0}$ of the form%
\begin{equation*}
f^{R}(w)=(\xi ^{R}\sigma ^{R})^{-1}(\frac{w}{\sigma ^{R}})^{-1/\xi
^{R}-1}(1+h^{R}(w))
\end{equation*}%
and a density for $w<-w_{0}$ of the form%
\begin{equation*}
f^{L}(w)=(\xi ^{L}\sigma ^{L})^{-1}(\frac{-w}{\sigma ^{L}})^{-1/\xi
^{L}-1}(1+h^{L}(-w))
\end{equation*}%
with $|h^{J}(w)|$ uniformly bounded by $Cw^{-\delta /\xi ^{J}}$ for $J\in
\{L,R\}$ and some finite $C.$
\end{condition}

\begin{theorem}
\label{thm:2tail}Suppose Condition \ref{cnd:2tail} holds, and for $k>1$, $%
r_{k}(\xi )=\frac{3(1+k)(1-2\xi )}{2(1+k+2\xi )}\leq \delta $ where $\xi
=\max (\xi ^{L},\xi ^{R})$. Let $\varphi :%
\mathbb{R}
^{2k+1}\mapsto \{0,1\}$ be such that for some finite $m_{\varphi }$, $%
\varphi :%
\mathbb{R}
^{2k+1}\mapsto \{0,1\}$ can be written as an affine function of $\{\varphi
_{j}\}_{j=1}^{m_{\varphi }}$, where each $\varphi _{j}$ is of the form 
\begin{equation*}
\varphi _{j}(\mathbf{y}^{R},\mathbf{y}^{L},y_{0})=\mathbf{1}[(\mathbf{y}^{R},%
\mathbf{y}^{L})\in \mathcal{H}_{j}]\mathbf{1}[y_{0}\leq b_{j}(\mathbf{y}^{R},%
\mathbf{y}^{L})]
\end{equation*}%
with $b_{j}:%
\mathbb{R}
^{2k}\mapsto 
\mathbb{R}
$ a Lipschitz continuous function and $\mathcal{H}_{j}$ a Borel measurable
subset of $%
\mathbb{R}
^{2k}$ with boundary $\partial \mathcal{H}_{j}$. For $\mathbf{u}%
^{J}=(1,u_{2},\ldots ,u_{k})^{\prime }\in 
\mathbb{R}
^{k}$ with $1\geq u_{2}\geq u_{3}\geq \ldots \geq u_{k}$, let $\mathcal{I}%
_{j}(\mathbf{u}^{R},\mathbf{u}^{L})=\{s^{R},s^{L}>0:(s^{R}\mathbf{u}%
^{R},s^{L}\mathbf{u}^{L})\in \partial \mathcal{H}_{j}\}$. Assume further
that for some $L>0$, and Lebesgue almost all $\mathbf{u}^{R},\mathbf{u}^{L},$
$\mathcal{I}_{j}(\mathbf{u}^{R},\mathbf{u}^{L})$ contains at most $L$
elements in the interval $[L^{-1},\infty )^{2}$.

Then under $H_{0}:\mu =0$, for $\frac{1+k}{1+3k}<\xi <1/2$ and any $\epsilon
>0$ 
\begin{equation*}
|\mathbb{E}[\varphi (\mathbf{Y}_{n})]-\mathbb{E}[\varphi (\mathbf{Y}%
_{n}^{\ast })]|\leq Cn^{-r_{k}(\xi )+\epsilon }
\end{equation*}%
where $\mathbf{Y}_{n}$ and $\mathbf{Y}_{n}^{\ast }$ are the l.h.s.~and
r.h.s.~of (\ref{Y_twosided}), respectively.
\end{theorem}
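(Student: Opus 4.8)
The plan is to follow the proof of Theorem \ref{thm_main} step by step, replacing the one-sided truncation by a two-sided one and carrying the left and right tail in parallel. First I would condition on the pair $(\mathbf{W}^{R},\mathbf{W}^{L})$ consisting of the $k$ largest and $k$ smallest order statistics. Conditional on this pair, the $n-2k$ middle observations $W_{i}^{m}$ are i.i.d.\ draws from $F$ restricted to $[W_{k}^{L},W_{k}^{R}]$ (renormalized to a probability distribution), which has bounded support and hence all moments. Under $H_{0}:\mathbb{E}[W]=0$ its mean equals a combination of a left-tail and a right-tail correction term, each of exactly the form appearing in (\ref{Y_twosided}), and each well approximated, with error governed by $\delta$ as in Lemma \ref{lm:w}(c), by its parametric counterpart $m^{\ast}$. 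The last coordinate of $\mathbf{Y}_{n}$ in (\ref{Y_twosided}) then has, conditional on $(\mathbf{W}^{R},\mathbf{W}^{L})$, the same distribution as the t-statistic of a recentered i.i.d.\ sample from this doubly-truncated law.

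Next I would re-derive the analogues of Lemmas \ref{lm:A}--\ref{lm:Hi}. Lemmas \ref{lm:A} and \ref{lm:expecT} apply verbatim to each of $\mathbf{W}^{R}$ and $-\mathbf{W}^{L}$, since under Condition \ref{cnd:2tail} each tail individually lies in a $\delta$-neighbourhood of a Pareto law. I would add the joint statement that the normalized pair $(\mathbf{W}^{R}/a_{n}^{R}-\kappa_{n}^{R}\mathbf{e},\ -\mathbf{W}^{L}/a_{n}^{L}-\kappa_{n}^{L}\mathbf{e})$ converges in total variation at rate $O(n^{-\delta})$ to a product of two generalized extreme value laws; this joint analogue of (\ref{tvd_conv}) follows by conditioning on the numbers of sample points falling into the far left and far right of the support, which renders the two groups of extremes conditionally independent, the complementary event having probability $O(n^{-p})$ for every $p>0$. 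The doubly-truncated versions of Lemma \ref{lm:w}(a),(b),(d),(e),(f) then hold with each right-hand side replaced by the sum of a left-tail and a right-tail term of the same shape as before, with $w$ replaced by $|W_{k}^{L}|$ or $W_{k}^{R}$ and $\xi$ by $\xi^{L}$ or $\xi^{R}$; since $\xi=\max(\xi^{L},\xi^{R})$ and $r_{k}$ is decreasing in $\xi$ on $[1/3,1/2]$, each such term is $O(n^{-r_{k}(\xi)+\epsilon})$ after taking expectations, and the moment $\mathbb{E}[|W|^{1/\xi-\beta_{0}+\beta}]$ invoked in Lemma \ref{lm:w}(d) stays finite because its order is below the threshold $1/\xi=\min(1/\xi^{L},1/\xi^{R})$ whenever $\beta<\beta_{0}$.

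The conditional normal approximation to the last coordinate via the bound (\ref{BG_bound}) then goes through as in the body of the proof of Theorem \ref{thm_main}, now with $A_{n}=\mathbf{1}[W_{k}^{R}>w_{0}]\mathbf{1}[-W_{k}^{L}>w_{0}]$ and $V$ the variance of the doubly-truncated law; its error splits into a left- and a right-tail piece, each bounded as in (\ref{B1n}). For the random-norming step I would prove the two-tailed analogue of Lemma \ref{lm:sig} by the same Rosenthal-inequality argument applied to the doubly-truncated moments---two-sided truncation only makes these moments smaller, so the bound is, if anything, tighter, and the factor $k$ in $r_{k}(\xi)$ enters for the same reason as before. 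For the boundary-crossing step, the analogue of Lemma \ref{lm:Hi}, I would decompose the perturbation of $(\mathbf{W}^{R},-\mathbf{W}^{L})/\sqrt{(n-2k)s_{n}^{2}}$ relative to $n^{-1/2}(\mathbf{W}^{R},-\mathbf{W}^{L})/\sqrt{V}$ into a part from $\Delta_{n}$ and a part from $|V^{-1/2}-1|$; writing $X_{1}^{J}=s^{J}X_{1}^{J}(\mathbf{U}^{J})$ for $J\in\{L,R\}$, using that conditional on $(\mathbf{U}^{R},\mathbf{U}^{L})$ the pair $(X_{1}^{R},X_{1}^{L})$ has a product density each factor of which is eventually decreasing, and invoking the hypothesis that $\mathcal{I}_{j}(\mathbf{u}^{R},\mathbf{u}^{L})$ has at most $L$ points in $[L^{-1},\infty)^{2}$, I obtain a bound of order $Cn^{-r_{k}(\xi)}$ on the probability of landing within $Cn^{-r_{k}(\xi)}$ of $\partial\mathcal{H}_{j}$, exactly as in Lemma \ref{lm:Hi}(a).

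The hard part, as in the one-tailed case, is making the convergence of the random norming $s_{n}^{2}$ fast enough, now from a distribution truncated on both sides; the genuinely new bookkeeping is ensuring that the two tails decouple cleanly, which the joint total-variation bound above and the negligibility of the ``both extremes small'' event accomplish, reducing every estimate to a sum over $J\in\{L,R\}$ of the corresponding one-tailed estimate with index $\xi^{J}\le\xi$. Collecting the pieces exactly as in the final paragraph of the proof of Theorem \ref{thm_main}---combining Lemma \ref{lm:A} and the two-tailed version of Lemma \ref{lm:Hi}(a) with the conditional normal approximation, replacing the $B_{n}^{\ast\ast}$ term by $B_{n}^{\ast}$ via the two-tailed version of Lemma \ref{lm:Hi}(b), and bounding the residual terms through Lemma \ref{lm:expecT}(c),(d)---yields $|\mathbb{E}[\varphi(\mathbf{Y}_{n})]-\mathbb{E}[\varphi(\mathbf{Y}_{n}^{\ast})]|\le Cn^{-r_{k}(\xi)+\epsilon}$.
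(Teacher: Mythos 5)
The paper omits the proof of this theorem entirely, stating only that it ``follows from the same steps as Theorem \ref{thm_main},'' and your proposal is a correct, faithful elaboration of exactly that adaptation: conditioning on both sets of extremes, the doubly-truncated analogues of Lemmas \ref{lm:w}--\ref{lm:Hi} with each bound splitting into left- and right-tail pieces governed by $\xi=\max(\xi^{L},\xi^{R})$, and the same final assembly. No discrepancy with the paper's intended argument.
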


The proof of Theorem \ref{thm:2tail} follows from the same steps as Theorem %
\ref{thm_main} and is omitted for brevity.

Tests of the form described in Section 3 reject if all of the following four
conditions hold 
\begin{eqnarray*}
h_{1}(\mathbf{y}^{\ast }) &=&\sum_{i=1}^{M}\lambda _{i}\frac{f(\mathbf{y}%
^{\ast }|\theta _{i},0)}{f_{a}^{S}(\mathbf{y}^{R})f_{a}^{S}(\mathbf{y}^{L})}%
\leq 1 \\
h_{2}(\mathbf{y}^{\ast }) &=&\exp [-5\chi (\mathbf{y}^{L})]%
\sum_{i=1}^{M^{S}}\lambda _{i}^{S}\frac{f^{S}(\mathbf{y}^{R}\mathbf{,y}%
^{L},y_{0}|\theta _{i}^{S})}{f_{a}^{S}(\mathbf{y}^{R})}\leq 1 \\
h_{3}(\mathbf{y}^{\ast }) &=&\exp [-5\chi (\mathbf{y}^{R})]%
\sum_{i=1}^{M^{S}}\lambda _{i}^{S}\frac{f^{S}(\mathbf{y}^{L}\mathbf{,y}%
^{R},y_{0}|\theta _{i}^{S})}{f_{a}^{S}(\mathbf{y}^{L})}\leq 1 \\
h_{4}(\mathbf{y}^{\ast }) &=&\frac{(y_{0}+\sum_{i=1}^{k}y_{i}^{R}-%
\sum_{i=1}^{k}y_{i}^{L})^{2}}{\func{cv}_{T}(\mathbf{y}^{\ast
})^{2}(1+\sum_{i=1}^{k}(y_{i}^{R})^{2}+\sum_{i=1}^{k}(y_{i}^{L})^{2})}\leq 1.
\end{eqnarray*}%
Inspection of the densities $f$ in (\ref{dens_joint}) and $f^{S}$ in (\ref%
{densSapprox}) shows that $h_{j}(\mathbf{y}^{\ast })$, $j=1,2,3$, viewed as
function of $y_{0}$, are a linear combination of normal densities, with
weights, means and variances a function of $(\mathbf{y}^{R},\mathbf{y}^{L})$%
. For any given $(\mathbf{y}^{R},\mathbf{y}^{L})$, the acceptance region of $%
\varphi (\mathbf{y}^{\ast })$, $\{y_{0}:\varphi (\mathbf{y}^{\ast })=0$ for $%
\mathbf{y}^{\ast }=(\mathbf{y}^{R},\mathbf{y}^{L},y_{0})\}$, is thus a
finite union of intervals, which an be written as an affine function of
appropriately defined $\varphi _{j}$. By the inverse function theorem, the
endpoints of these intervals are a function of $(\mathbf{y}^{R},\mathbf{y}%
^{L})$ with derivative equal to $-1/(\partial h_{j}(\mathbf{y}^{\ast
})/\partial y_{0})=0$ at $\mathbf{y}^{\ast }$ with $h_{j}(\mathbf{y}^{\ast
})=1$. Define the sets 
\begin{equation*}
\mathcal{E}_{j}(\varepsilon )=\{(\mathbf{y}^{R},\mathbf{y}^{L}):\exists
y_{0}\in 
\mathbb{R}
\text{ such that }h_{j}(\mathbf{y}^{\ast })=1\text{ and }|\partial h_{j}(%
\mathbf{y}^{\ast })/\partial y_{0}|<\varepsilon \}\in 
\mathbb{R}
^{2k}
\end{equation*}%
for $\varepsilon >0$, $j=1,\ldots ,4$, which have Lebesgue measure that can
be made arbitrarily small for small enough $\varepsilon $. Furthermore, for
any $\theta ^{S}=(\mu ,\sigma ,\xi )\in 
\mathbb{R}
^{3}$ and $c>0$, the equation $f_{T}(s\mathbf{u}^{J}|\theta ^{S})/f_{a}^{S}(s%
\mathbf{u}^{J})=c$ for a given $\mathbf{u}^{J}$ has at most two roots in $%
s>0 $. The modified test $\varphi _{\varepsilon }(\mathbf{y}^{\ast
})=\varphi (\mathbf{y}^{\ast })\mathbf{1}[(\mathbf{y}^{R},\mathbf{y}%
^{L})\notin \bigcup_{j=1}^{3}\mathcal{E}_{j}(\varepsilon )]$ is hence seen
to satisfy the assumptions of Theorem \ref{thm:2tail}, yet for small enough $%
\varepsilon $, $\varphi _{\varepsilon }$ and $\varphi $ have nearly
indistinguishable rejection probabilities.

\subsection{~Small Sample Results for $\protect\alpha =0.01$}

\renewcommand\floatpagefraction{1.0}
\renewcommand\topfraction{1}
\renewcommand\textfraction{0}%
\begin{table}[h!] 
\begin{center}
\small%
\caption{Small Sample Results in Inference for the Mean}\bigskip

$%
\begin{tabular}{lrrrrrrr}
\hline\hline
& N(0,1) & LogN & F(4,5) & \ \ t(3) & P(0.4) & Mix 1 & Mix 2 \\ \hline
\multicolumn{8}{c}{$n=50$} \\ \hline
\textsc{t-stat} & 0.9\TEXTsymbol{\vert}\textbf{1.00} & 5.1\TEXTsymbol{\vert}%
0.62 & 6.8\TEXTsymbol{\vert}0.58 & 0.7\TEXTsymbol{\vert}\textbf{1.04} & 7.8%
\TEXTsymbol{\vert}0.53 & 2.7\TEXTsymbol{\vert}0.82 & 10.8\TEXTsymbol{\vert}%
0.60 \\ 
\textsc{sym-boot} & 0.9\TEXTsymbol{\vert}\textbf{1.02} & 2.9\TEXTsymbol{\vert%
}1.04 & 4.0\TEXTsymbol{\vert}1.31 & 0.5\TEXTsymbol{\vert}\textbf{1.19} & 4.1%
\TEXTsymbol{\vert}1.29 & 2.6\TEXTsymbol{\vert}1.12 & 10.3\TEXTsymbol{\vert}%
1.58 \\ 
\textsc{asym-boot} & 1.0\TEXTsymbol{\vert}\textbf{1.02} & 2.1\TEXTsymbol{%
\vert}0.85 & 2.6\TEXTsymbol{\vert}0.97 & 1.9\TEXTsymbol{\vert}\textbf{1.10}
& 2.7\TEXTsymbol{\vert}0.95 & 2.6\TEXTsymbol{\vert}0.96 & 9.4\TEXTsymbol{%
\vert}1.08 \\ 
\textsc{new default} & 0.3\TEXTsymbol{\vert}\textbf{1.39} & 0.6\TEXTsymbol{%
\vert}\textbf{0.79} & 1.1\TEXTsymbol{\vert}\textbf{0.69} & 0.2\TEXTsymbol{%
\vert}\textbf{1.57} & 1.8\TEXTsymbol{\vert}\textbf{0.63} & 0.5\TEXTsymbol{%
\vert}\textbf{1.10} & 3.9\TEXTsymbol{\vert}0.70 \\ \hline
\multicolumn{8}{c}{$n=100$} \\ \hline
\textsc{t-stat} & 1.1\TEXTsymbol{\vert}\textbf{0.99} & 3.6\TEXTsymbol{\vert}%
0.74 & 5.3\TEXTsymbol{\vert}0.65 & 0.8\TEXTsymbol{\vert}\textbf{1.02} & 5.9%
\TEXTsymbol{\vert}0.62 & 2.6\TEXTsymbol{\vert}0.82 & 9.3\TEXTsymbol{\vert}%
0.57 \\ 
\textsc{sym-boot} & 1.1\TEXTsymbol{\vert}\textbf{0.99} & 2.2\TEXTsymbol{\vert%
}1.05 & 3.5\TEXTsymbol{\vert}1.22 & 0.7\TEXTsymbol{\vert}\textbf{1.09} & 3.5%
\TEXTsymbol{\vert}1.25 & 2.4\TEXTsymbol{\vert}1.03 & 8.5\TEXTsymbol{\vert}%
1.23 \\ 
\textsc{asym-boot} & 1.1\TEXTsymbol{\vert}\textbf{0.99} & 1.6\TEXTsymbol{%
\vert}\textbf{0.90} & 2.4\TEXTsymbol{\vert}0.94 & 1.9\TEXTsymbol{\vert}%
\textbf{1.04} & 2.5\TEXTsymbol{\vert}0.95 & 2.4\TEXTsymbol{\vert}0.92 & 7.3%
\TEXTsymbol{\vert}0.90 \\ 
\textsc{new default} & 0.7\TEXTsymbol{\vert}\textbf{1.08} & 0.2\TEXTsymbol{%
\vert}\textbf{1.22} & 0.5\TEXTsymbol{\vert}\textbf{0.97} & 0.4\TEXTsymbol{%
\vert}\textbf{1.67} & 0.5\TEXTsymbol{\vert}\textbf{0.92} & 0.5\TEXTsymbol{%
\vert}\textbf{1.31} & 3.7\TEXTsymbol{\vert}0.73 \\ \hline
\multicolumn{8}{c}{$n=500$} \\ \hline
\textsc{t-stat} & 1.0\TEXTsymbol{\vert}\textbf{1.00} & 1.9\TEXTsymbol{\vert}%
\textbf{0.88} & 3.0\TEXTsymbol{\vert}0.80 & 0.8\TEXTsymbol{\vert}\textbf{1.02%
} & 3.5\TEXTsymbol{\vert}0.78 & 1.5\TEXTsymbol{\vert}\textbf{0.93} & 4.2%
\TEXTsymbol{\vert}0.69 \\ 
\textsc{sym-boot} & 1.1\TEXTsymbol{\vert}\textbf{1.00} & 1.5\TEXTsymbol{\vert%
}\textbf{0.99} & 2.1\TEXTsymbol{\vert}1.09 & 0.8\TEXTsymbol{\vert}\textbf{%
1.04} & 2.6\TEXTsymbol{\vert}1.13 & 1.4\TEXTsymbol{\vert}\textbf{1.01} & 3.0%
\TEXTsymbol{\vert}1.04 \\ 
\textsc{asym-boot} & 1.1\TEXTsymbol{\vert}\textbf{1.00} & 1.3\TEXTsymbol{%
\vert}\textbf{0.93} & 1.7\TEXTsymbol{\vert}\textbf{0.94} & 1.6\TEXTsymbol{%
\vert}\textbf{1.02} & 1.8\TEXTsymbol{\vert}\textbf{0.95} & 1.4\TEXTsymbol{%
\vert}\textbf{0.97} & 2.3\TEXTsymbol{\vert}0.86 \\ 
\textsc{new default} & 1.0\TEXTsymbol{\vert}\textbf{1.01} & 0.6\TEXTsymbol{%
\vert}\textbf{1.33} & 0.7\TEXTsymbol{\vert}\textbf{1.34} & 0.5\TEXTsymbol{%
\vert}\textbf{1.32} & 0.7\TEXTsymbol{\vert}\textbf{1.30} & 0.5\TEXTsymbol{%
\vert}\textbf{1.37} & 0.4\TEXTsymbol{\vert}\textbf{1.14}%
\end{tabular}%
$

\end{center}
\linespread{1.00}\selectfont
\begin{small}
\linespread{1.00}\selectfont%

Notes: Entries are the null rejection probability in percent, and the
average length of confidence intervals relative to average length of
confidence intervals based on size corrected t-statistic (bold if null
rejection probability is smaller than 2\%) of nominal 1\% level tests. Based
on 20,000 replications.%
\linespread{1.00}\selectfont
\end{small}
\linespread{1.00}\selectfont%
\end{table}%
\begin{table}[tbp] 
\begin{center}
\small%

\caption{Small Sample Results of New Methods for Inference for the  Mean}

$%
\begin{tabular}{lrrrrrrr}
\hline\hline
& N(0,1) & LogN & F(4,5) & \ \ t(3) & P(0.4) & Mix 1 & Mix 2 \\ \hline
\multicolumn{8}{c}{$n=25$} \\ \hline
\textsc{def}: $k=8,n_{0}=50$ & 0.6\TEXTsymbol{\vert}\textbf{1.07} & 6.7%
\TEXTsymbol{\vert}0.57 & 8.9\TEXTsymbol{\vert}0.50 & 0.4\TEXTsymbol{\vert}%
\textbf{1.10} & 10.3\TEXTsymbol{\vert}0.46 & 2.1\TEXTsymbol{\vert}0.84 & 6.6%
\TEXTsymbol{\vert}0.68 \\ 
$k=4,n_{0}=50$ & 0.3\TEXTsymbol{\vert}\textbf{1.34} & 3.4\TEXTsymbol{\vert}%
0.60 & 4.9\TEXTsymbol{\vert}0.53 & 0.3\TEXTsymbol{\vert}\textbf{1.22} & 6.4%
\TEXTsymbol{\vert}0.47 & 1.2\TEXTsymbol{\vert}\textbf{0.88} & 3.0\TEXTsymbol{%
\vert}0.70 \\ 
$k=12,n_{0}=50$ & 0.0\TEXTsymbol{\vert}\textbf{0.00} & 0.0\TEXTsymbol{\vert}%
\textbf{0.00} & 0.0\TEXTsymbol{\vert}\textbf{0.00} & 0.0\TEXTsymbol{\vert}%
\textbf{0.00} & 0.0\TEXTsymbol{\vert}\textbf{0.00} & 0.0\TEXTsymbol{\vert}%
\textbf{0.00} & 0.0\TEXTsymbol{\vert}\textbf{0.00} \\ 
$k=4,n_{0}=25$ & 0.1\TEXTsymbol{\vert}\textbf{1.71} & 1.8\TEXTsymbol{\vert}%
\textbf{0.65} & 2.3\TEXTsymbol{\vert}0.58 & 0.1\TEXTsymbol{\vert}\textbf{1.54%
} & 3.0\TEXTsymbol{\vert}0.53 & 0.5\TEXTsymbol{\vert}\textbf{1.04} & 1.4%
\TEXTsymbol{\vert}\textbf{0.77} \\ \hline
\multicolumn{8}{c}{$n=50$} \\ \hline
\textsc{def}: $k=8,n_{0}=50$ & 0.3\TEXTsymbol{\vert}\textbf{1.39} & 0.6%
\TEXTsymbol{\vert}\textbf{0.79} & 1.1\TEXTsymbol{\vert}\textbf{0.69} & 0.2%
\TEXTsymbol{\vert}\textbf{1.57} & 1.8\TEXTsymbol{\vert}\textbf{0.63} & 0.5%
\TEXTsymbol{\vert}\textbf{1.10} & 3.9\TEXTsymbol{\vert}0.70 \\ 
$k=4,n_{0}=50$ & 0.3\TEXTsymbol{\vert}\textbf{1.50} & 0.5\TEXTsymbol{\vert}%
\textbf{0.83} & 0.9\TEXTsymbol{\vert}\textbf{0.69} & 0.2\TEXTsymbol{\vert}%
\textbf{1.65} & 1.3\TEXTsymbol{\vert}\textbf{0.64} & 0.5\TEXTsymbol{\vert}%
\textbf{1.15} & 3.4\TEXTsymbol{\vert}0.70 \\ 
$k=12,n_{0}=50$ & 0.1\TEXTsymbol{\vert}\textbf{1.50} & 1.4\TEXTsymbol{\vert}%
\textbf{0.75} & 2.2\TEXTsymbol{\vert}0.65 & 0.2\TEXTsymbol{\vert}\textbf{1.54%
} & 3.3\TEXTsymbol{\vert}0.59 & 0.4\TEXTsymbol{\vert}\textbf{1.03} & 2.6%
\TEXTsymbol{\vert}0.70 \\ 
$k=4,n_{0}=25$ & 0.4\TEXTsymbol{\vert}\textbf{1.59} & 0.3\TEXTsymbol{\vert}%
\textbf{0.97} & 0.5\TEXTsymbol{\vert}\textbf{0.78} & 0.2\TEXTsymbol{\vert}%
\textbf{2.02} & 0.6\TEXTsymbol{\vert}\textbf{0.73} & 0.5\TEXTsymbol{\vert}%
\textbf{1.34} & 3.5\TEXTsymbol{\vert}0.79 \\ \hline
\multicolumn{8}{c}{$n=100$} \\ \hline
\textsc{def}: $k=8,n_{0}=50$ & 0.7\TEXTsymbol{\vert}\textbf{1.08} & 0.2%
\TEXTsymbol{\vert}\textbf{1.22} & 0.5\TEXTsymbol{\vert}\textbf{0.97} & 0.4%
\TEXTsymbol{\vert}\textbf{1.67} & 0.5\TEXTsymbol{\vert}\textbf{0.92} & 0.5%
\TEXTsymbol{\vert}\textbf{1.31} & 3.7\TEXTsymbol{\vert}0.73 \\ 
$k=4,n_{0}=50$ & 0.6\TEXTsymbol{\vert}\textbf{1.22} & 0.3\TEXTsymbol{\vert}%
\textbf{1.06} & 0.6\TEXTsymbol{\vert}\textbf{0.85} & 0.3\TEXTsymbol{\vert}%
\textbf{1.85} & 0.7\TEXTsymbol{\vert}\textbf{0.81} & 0.5\TEXTsymbol{\vert}%
\textbf{1.27} & 2.2\TEXTsymbol{\vert}0.69 \\ 
$k=12,n_{0}=50$ & 0.6\TEXTsymbol{\vert}\textbf{1.14} & 0.3\TEXTsymbol{\vert}%
\textbf{1.22} & 0.4\TEXTsymbol{\vert}\textbf{1.00} & 0.5\TEXTsymbol{\vert}%
\textbf{1.56} & 0.5\TEXTsymbol{\vert}\textbf{0.92} & 0.3\TEXTsymbol{\vert}%
\textbf{1.35} & 3.3\TEXTsymbol{\vert}0.75 \\ 
$k=4,n_{0}=25$ & 0.7\TEXTsymbol{\vert}\textbf{1.20} & 0.4\TEXTsymbol{\vert}%
\textbf{1.21} & 0.5\TEXTsymbol{\vert}\textbf{0.98} & 0.3\TEXTsymbol{\vert}%
\textbf{2.13} & 0.6\TEXTsymbol{\vert}\textbf{0.95} & 0.4\TEXTsymbol{\vert}%
\textbf{1.47} & 2.5\TEXTsymbol{\vert}0.76 \\ \hline
\multicolumn{8}{c}{$n=500$} \\ \hline
\textsc{def}: $k=8,n_{0}=50$ & 1.0\TEXTsymbol{\vert}\textbf{1.01} & 0.6%
\TEXTsymbol{\vert}\textbf{1.33} & 0.7\TEXTsymbol{\vert}\textbf{1.34} & 0.5%
\TEXTsymbol{\vert}\textbf{1.32} & 0.7\TEXTsymbol{\vert}\textbf{1.30} & 0.5%
\TEXTsymbol{\vert}\textbf{1.37} & 0.4\TEXTsymbol{\vert}\textbf{1.14} \\ 
$k=4,n_{0}=50$ & 1.1\TEXTsymbol{\vert}\textbf{0.99} & 0.6\TEXTsymbol{\vert}%
\textbf{1.45} & 0.7\TEXTsymbol{\vert}\textbf{1.20} & 0.5\TEXTsymbol{\vert}%
\textbf{1.72} & 0.8\TEXTsymbol{\vert}\textbf{1.20} & 0.6\TEXTsymbol{\vert}%
\textbf{1.49} & 0.5\TEXTsymbol{\vert}\textbf{0.95} \\ 
$k=12,n_{0}=50$ & 1.1\TEXTsymbol{\vert}\textbf{0.99} & 0.5\TEXTsymbol{\vert}%
\textbf{1.29} & 0.7\TEXTsymbol{\vert}\textbf{1.20} & 0.5\TEXTsymbol{\vert}%
\textbf{1.30} & 0.6\TEXTsymbol{\vert}\textbf{1.22} & 0.5\TEXTsymbol{\vert}%
\textbf{1.31} & 0.4\TEXTsymbol{\vert}\textbf{1.23} \\ 
$k=4,n_{0}=25$ & 0.9\TEXTsymbol{\vert}\textbf{1.01} & 0.6\TEXTsymbol{\vert}%
\textbf{1.54} & 0.7\TEXTsymbol{\vert}\textbf{1.34} & 0.5\TEXTsymbol{\vert}%
\textbf{1.84} & 0.7\TEXTsymbol{\vert}\textbf{1.34} & 0.6\TEXTsymbol{\vert}%
\textbf{1.64} & 0.5\TEXTsymbol{\vert}\textbf{1.08}%
\end{tabular}%
$

\end{center}
\linespread{1.00}\selectfont
\begin{small}
\linespread{1.00}\selectfont%

Notes: See Table 5.%
\linespread{1.00}\selectfont
\end{small}
\linespread{1.00}\selectfont%
\end{table}%

\begin{figure}[h] 
\begin{center}
\small%
\caption{Small Sample Results for HMDA Populations}\bigskip
\includegraphics[width=5.8356in,keepaspectratio]{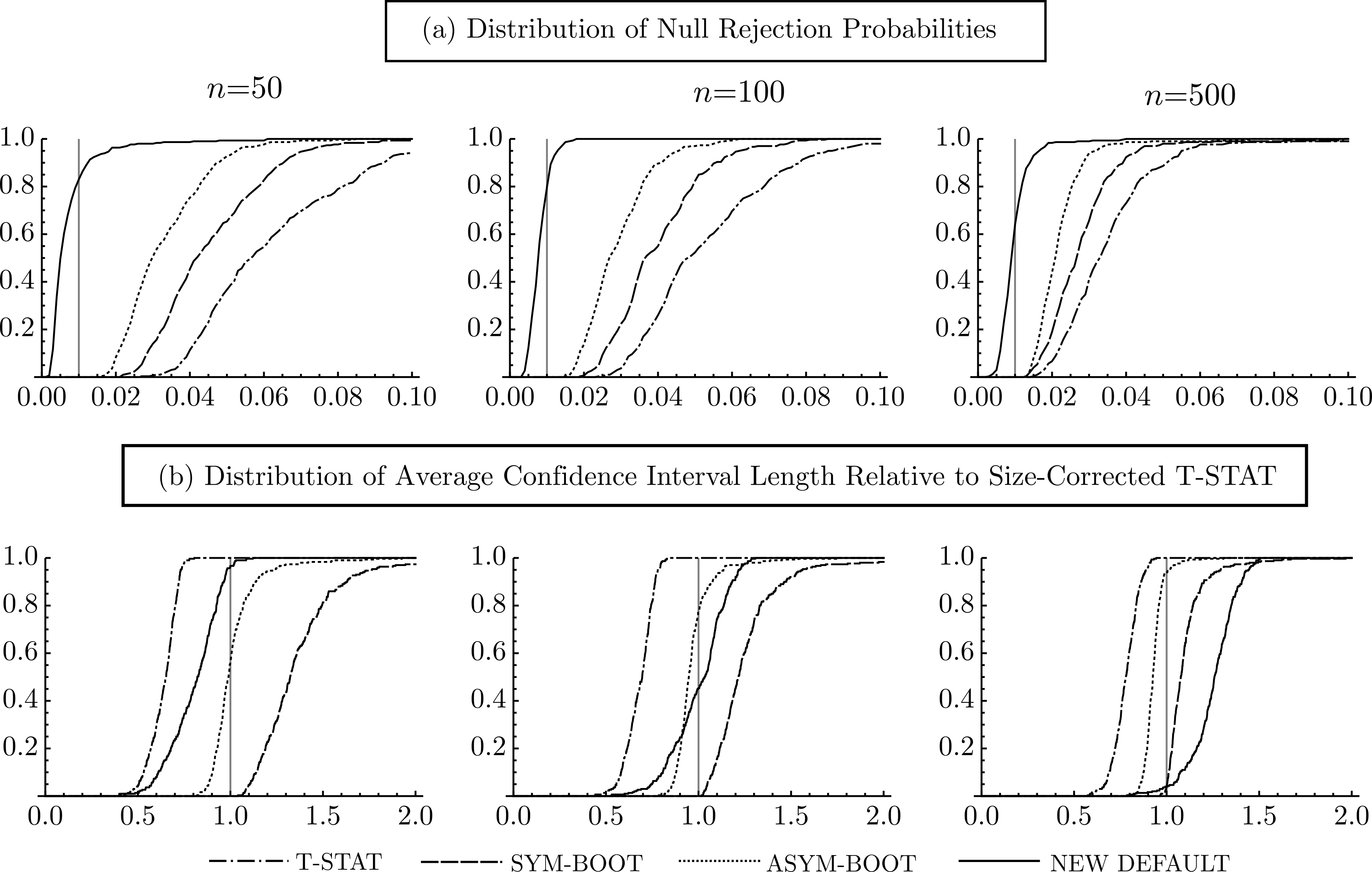}

\end{center}
\begin{small}%
\end{small}%
\end{figure}%

\begin{table}[h!] 
\begin{center}
\small%
\caption{Small Sample Results for Difference of Population Means}\bigskip

$%
\begin{tabular}{lrrrrrrr}
\hline\hline
& N(0,1) & LogN & F(4,5) & \ \ t(3) & P(0.4) & Mix 1 & Mix 2 \\ \hline
\multicolumn{8}{c}{$n=50$} \\ \hline
\textsc{t-stat} & 1.4\TEXTsymbol{\vert}\textbf{0.94} & 3.9\TEXTsymbol{\vert}%
0.75 & 3.3\TEXTsymbol{\vert}0.78 & 1.0\TEXTsymbol{\vert}\textbf{0.99} & 3.4%
\TEXTsymbol{\vert}0.78 & 2.1\TEXTsymbol{\vert}0.87 & 3.1\TEXTsymbol{\vert}%
0.81 \\ 
\textsc{sym-boot} & 1.4\TEXTsymbol{\vert}\textbf{0.95} & 3.6\TEXTsymbol{\vert%
}0.97 & 3.1\TEXTsymbol{\vert}1.02 & 0.9\TEXTsymbol{\vert}\textbf{1.08} & 3.3%
\TEXTsymbol{\vert}1.05 & 2.1\TEXTsymbol{\vert}1.02 & 3.1\TEXTsymbol{\vert}%
1.12 \\ 
\textsc{asym-boot} & 1.6\TEXTsymbol{\vert}\textbf{0.94} & 3.2\TEXTsymbol{%
\vert}0.83 & 3.1\TEXTsymbol{\vert}0.87 & 2.3\TEXTsymbol{\vert}1.01 & 3.2%
\TEXTsymbol{\vert}0.89 & 2.7\TEXTsymbol{\vert}0.92 & 3.2\TEXTsymbol{\vert}%
0.93 \\ 
\textsc{new default} & 0.1\TEXTsymbol{\vert}\textbf{1.46} & 0.7\TEXTsymbol{%
\vert}\textbf{1.02} & 0.7\TEXTsymbol{\vert}\textbf{1.07} & 0.1\TEXTsymbol{%
\vert}\textbf{1.49} & 0.7\TEXTsymbol{\vert}\textbf{1.07} & 0.5\TEXTsymbol{%
\vert}\textbf{1.25} & 0.9\TEXTsymbol{\vert}\textbf{1.06} \\ \hline
\multicolumn{8}{c}{$n=100$} \\ \hline
\textsc{t-stat} & 1.0\TEXTsymbol{\vert}\textbf{0.99} & 3.1\TEXTsymbol{\vert}%
0.78 & 3.1\TEXTsymbol{\vert}0.80 & 1.0\TEXTsymbol{\vert}\textbf{1.00} & 3.0%
\TEXTsymbol{\vert}0.81 & 1.8\TEXTsymbol{\vert}\textbf{0.89} & 3.6\TEXTsymbol{%
\vert}0.80 \\ 
\textsc{sym-boot} & 1.1\TEXTsymbol{\vert}\textbf{0.99} & 2.8\TEXTsymbol{\vert%
}0.99 & 3.0\TEXTsymbol{\vert}1.08 & 0.9\TEXTsymbol{\vert}\textbf{1.05} & 2.9%
\TEXTsymbol{\vert}1.10 & 1.8\TEXTsymbol{\vert}\textbf{1.03} & 3.6\TEXTsymbol{%
\vert}1.15 \\ 
\textsc{asym-boot} & 1.1\TEXTsymbol{\vert}\textbf{0.99} & 2.5\TEXTsymbol{%
\vert}0.87 & 2.6\TEXTsymbol{\vert}0.92 & 2.0\TEXTsymbol{\vert}1.01 & 2.5%
\TEXTsymbol{\vert}0.93 & 2.1\TEXTsymbol{\vert}0.94 & 3.6\TEXTsymbol{\vert}%
0.94 \\ 
\textsc{new default} & 0.3\TEXTsymbol{\vert}\textbf{1.37} & 0.5\TEXTsymbol{%
\vert}\textbf{1.23} & 0.9\TEXTsymbol{\vert}\textbf{1.21} & 0.4\TEXTsymbol{%
\vert}\textbf{1.74} & 0.9\TEXTsymbol{\vert}\textbf{1.21} & 0.5\TEXTsymbol{%
\vert}\textbf{1.41} & 1.9\TEXTsymbol{\vert}\textbf{1.06} \\ \hline
\multicolumn{8}{c}{$n=500$} \\ \hline
\textsc{t-stat} & 1.0\TEXTsymbol{\vert}\textbf{1.00} & 1.7\TEXTsymbol{\vert}%
\textbf{0.92} & 2.0\TEXTsymbol{\vert}0.87 & 0.9\TEXTsymbol{\vert}\textbf{1.01%
} & 2.3\TEXTsymbol{\vert}0.85 & 1.4\TEXTsymbol{\vert}\textbf{0.94} & 3.3%
\TEXTsymbol{\vert}0.78 \\ 
\textsc{sym-boot} & 1.1\TEXTsymbol{\vert}\textbf{0.99} & 1.5\TEXTsymbol{\vert%
}\textbf{1.02} & 1.9\TEXTsymbol{\vert}\textbf{1.09} & 0.9\TEXTsymbol{\vert}%
\textbf{1.03} & 2.2\TEXTsymbol{\vert}1.08 & 1.3\TEXTsymbol{\vert}\textbf{1.02%
} & 3.1\TEXTsymbol{\vert}1.06 \\ 
\textsc{asym-boot} & 1.1\TEXTsymbol{\vert}\textbf{1.00} & 1.5\TEXTsymbol{%
\vert}\textbf{0.96} & 1.9\TEXTsymbol{\vert}\textbf{0.96} & 1.5\TEXTsymbol{%
\vert}\textbf{1.01} & 1.9\TEXTsymbol{\vert}\textbf{0.95} & 1.5\TEXTsymbol{%
\vert}\textbf{0.97} & 2.9\TEXTsymbol{\vert}0.90 \\ 
\textsc{new default} & 0.9\TEXTsymbol{\vert}\textbf{1.01} & 0.5\TEXTsymbol{%
\vert}\textbf{1.44} & 0.7\TEXTsymbol{\vert}\textbf{1.40} & 0.7\TEXTsymbol{%
\vert}\textbf{1.32} & 0.7\TEXTsymbol{\vert}\textbf{1.37} & 0.6\TEXTsymbol{%
\vert}\textbf{1.43} & 1.2\TEXTsymbol{\vert}\textbf{1.38}%
\end{tabular}%
$

\end{center}
\linespread{1.00}\selectfont
\begin{small}
\linespread{1.00}\selectfont%
Notes: See Table 5.%
\linespread{1.00}\selectfont
\end{small}
\linespread{1.00}\selectfont%
\end{table}%

\begin{figure}[h] 
\begin{center}
\small%
\caption{Small Sample Results for Two Samples from HDMA Populations}\bigskip
\includegraphics[width=5.8356in,keepaspectratio]{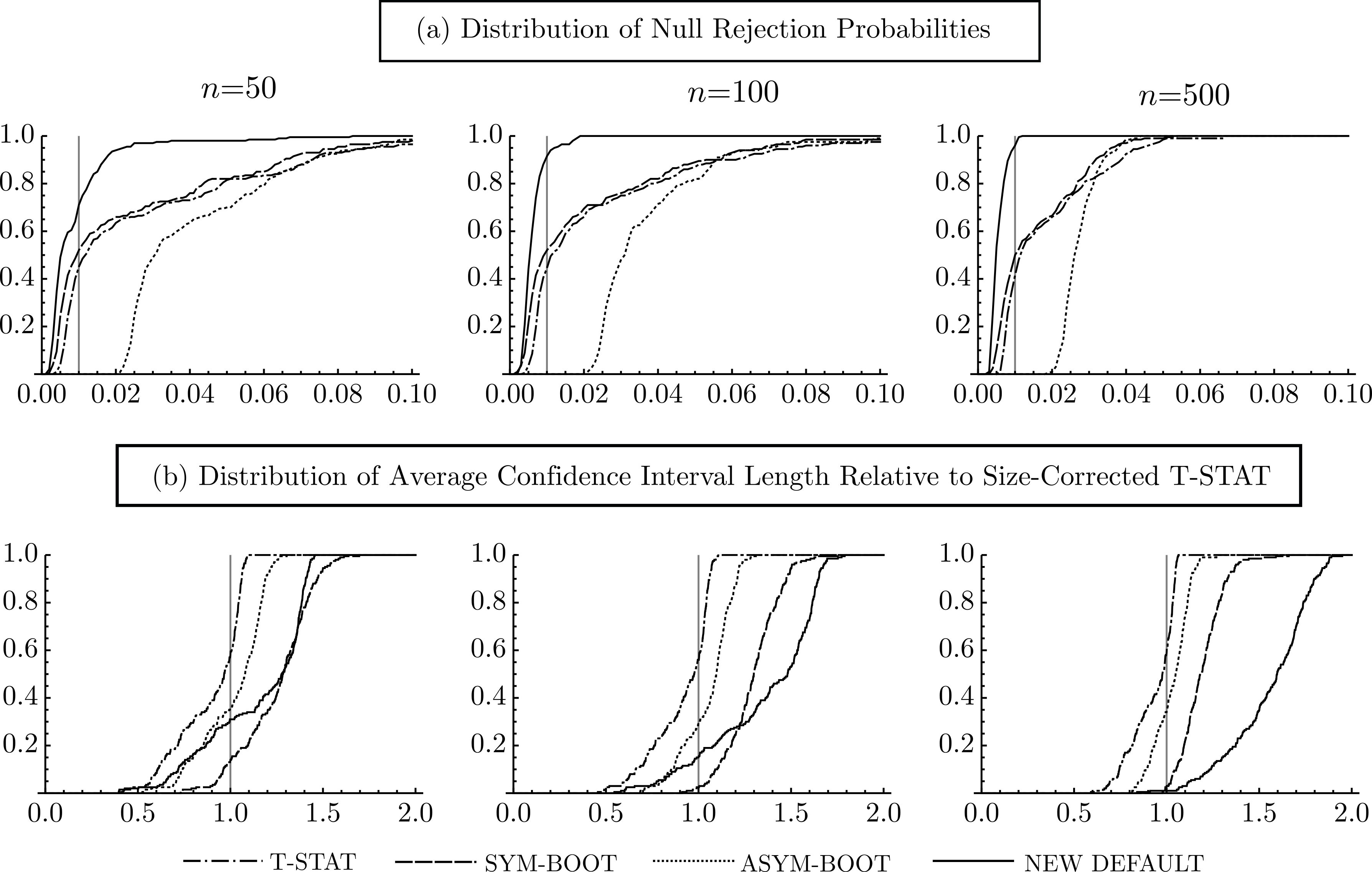}

\end{center}
\begin{small}%
\end{small}%
\end{figure}%

\begin{table}[h!] 
\begin{center}
\small%
\caption{Small Sample Results in Clustered Regression Design}\bigskip

$%
\begin{tabular}{lrrrrrrr}
\hline\hline
& N(0,1) & LogN & F(4,5) & \ \ t(3) & P(0.4) & Mix 1 & Mix 2 \\ \hline
\multicolumn{8}{c}{$n=50$} \\ \hline
\textsc{STATA} & 1.0\TEXTsymbol{\vert}\textbf{1.00} & 3.8\TEXTsymbol{\vert}%
0.72 & 4.3\TEXTsymbol{\vert}0.72 & 0.9\TEXTsymbol{\vert}\textbf{1.02} & 4.9%
\TEXTsymbol{\vert}0.71 & 2.1\TEXTsymbol{\vert}0.86 & 4.8\TEXTsymbol{\vert}%
0.74 \\ 
\textsc{Im-Ko} & 1.0\TEXTsymbol{\vert}\textbf{1.01} & 3.7\TEXTsymbol{\vert}%
0.73 & 4.1\TEXTsymbol{\vert}0.73 & 0.8\TEXTsymbol{\vert}\textbf{1.03} & 4.7%
\TEXTsymbol{\vert}0.72 & 2.0\TEXTsymbol{\vert}\textbf{0.88} & 4.6\TEXTsymbol{%
\vert}0.75 \\ 
\textsc{CGM} & 0.9\TEXTsymbol{\vert}\textbf{1.02} & 3.7\TEXTsymbol{\vert}0.68
& 4.1\TEXTsymbol{\vert}0.66 & 1.0\TEXTsymbol{\vert}\textbf{0.99} & 4.7%
\TEXTsymbol{\vert}0.65 & 2.0\TEXTsymbol{\vert}0.83 & 4.6\TEXTsymbol{\vert}%
0.65 \\ 
\textsc{new default} & 0.2\TEXTsymbol{\vert}\textbf{1.55} & 0.5\TEXTsymbol{%
\vert}\textbf{0.91} & 0.7\TEXTsymbol{\vert}\textbf{0.91} & 0.3\TEXTsymbol{%
\vert}\textbf{1.49} & 0.7\TEXTsymbol{\vert}\textbf{0.89} & 0.5\TEXTsymbol{%
\vert}\textbf{1.18} & 1.4\TEXTsymbol{\vert}\textbf{0.92} \\ \hline
\multicolumn{8}{c}{$n=100$} \\ \hline
\textsc{STATA} & 1.1\TEXTsymbol{\vert}\textbf{0.99} & 3.3\TEXTsymbol{\vert}%
0.78 & 3.9\TEXTsymbol{\vert}0.76 & 0.8\TEXTsymbol{\vert}\textbf{1.03} & 4.2%
\TEXTsymbol{\vert}0.73 & 2.1\TEXTsymbol{\vert}0.88 & 5.7\TEXTsymbol{\vert}%
0.70 \\ 
\textsc{Im-Ko} & 1.1\TEXTsymbol{\vert}\textbf{0.99} & 3.2\TEXTsymbol{\vert}%
0.78 & 3.8\TEXTsymbol{\vert}0.76 & 0.8\TEXTsymbol{\vert}\textbf{1.03} & 4.1%
\TEXTsymbol{\vert}0.74 & 2.0\TEXTsymbol{\vert}\textbf{0.88} & 5.6\TEXTsymbol{%
\vert}0.71 \\ 
\textsc{CGM} & 1.1\TEXTsymbol{\vert}\textbf{1.00} & 3.3\TEXTsymbol{\vert}0.73
& 3.9\TEXTsymbol{\vert}0.69 & 1.0\TEXTsymbol{\vert}\textbf{1.00} & 4.2%
\TEXTsymbol{\vert}0.66 & 2.2\TEXTsymbol{\vert}0.84 & 5.7\TEXTsymbol{\vert}%
0.61 \\ 
\textsc{new default} & 0.5\TEXTsymbol{\vert}\textbf{1.33} & 0.4\TEXTsymbol{%
\vert}\textbf{1.22} & 0.7\TEXTsymbol{\vert}\textbf{1.13} & 0.4\TEXTsymbol{%
\vert}\textbf{1.74} & 0.7\TEXTsymbol{\vert}\textbf{1.09} & 0.6\TEXTsymbol{%
\vert}\textbf{1.39} & 2.2\TEXTsymbol{\vert}0.95 \\ \hline
\multicolumn{8}{c}{$n=500$} \\ \hline
\textsc{STATA} & 1.1\TEXTsymbol{\vert}\textbf{0.99} & 1.9\TEXTsymbol{\vert}%
\textbf{0.90} & 2.7\TEXTsymbol{\vert}0.82 & 0.9\TEXTsymbol{\vert}\textbf{1.02%
} & 2.7\TEXTsymbol{\vert}0.82 & 1.6\TEXTsymbol{\vert}\textbf{0.92} & 4.0%
\TEXTsymbol{\vert}0.74 \\ 
\textsc{Im-Ko} & 1.1\TEXTsymbol{\vert}\textbf{0.99} & 1.9\TEXTsymbol{\vert}%
\textbf{0.90} & 2.7\TEXTsymbol{\vert}0.82 & 0.9\TEXTsymbol{\vert}\textbf{1.02%
} & 2.6\TEXTsymbol{\vert}0.82 & 1.6\TEXTsymbol{\vert}\textbf{0.92} & 4.0%
\TEXTsymbol{\vert}0.74 \\ 
\textsc{CGM} & 1.1\TEXTsymbol{\vert}\textbf{1.00} & 2.0\TEXTsymbol{\vert}%
\textbf{0.87} & 2.8\TEXTsymbol{\vert}0.77 & 0.9\TEXTsymbol{\vert}\textbf{1.00%
} & 2.8\TEXTsymbol{\vert}0.77 & 1.6\TEXTsymbol{\vert}\textbf{0.90} & 4.2%
\TEXTsymbol{\vert}0.68 \\ 
\textsc{new default} & 1.0\TEXTsymbol{\vert}\textbf{1.01} & 0.6\TEXTsymbol{%
\vert}\textbf{1.39} & 0.7\TEXTsymbol{\vert}\textbf{1.34} & 0.6\TEXTsymbol{%
\vert}\textbf{1.36} & 0.7\TEXTsymbol{\vert}\textbf{1.35} & 0.6\TEXTsymbol{%
\vert}\textbf{1.38} & 0.7\TEXTsymbol{\vert}\textbf{1.28}%
\end{tabular}%
$

\end{center}
\linespread{1.00}\selectfont
\begin{small}
\linespread{1.00}\selectfont%
Notes: Entries are the null rejection probability in percent, and the
average length of confidence intervals relative to average length of
confidence intervals based on size corrected \textsc{STATA} (bold if null
rejection probability is smaller than 2\%) of nominal 1\% level tests.%
\linespread{1.00}\selectfont
\end{small}
\linespread{1.00}\selectfont%
\end{table}%

\bigskip

\begin{figure}[h!] 
\begin{center}
\small%
\caption{Small Sample Results for CPS Clustered Regressions}

\bigskip
\includegraphics[width=5.8356in,keepaspectratio]{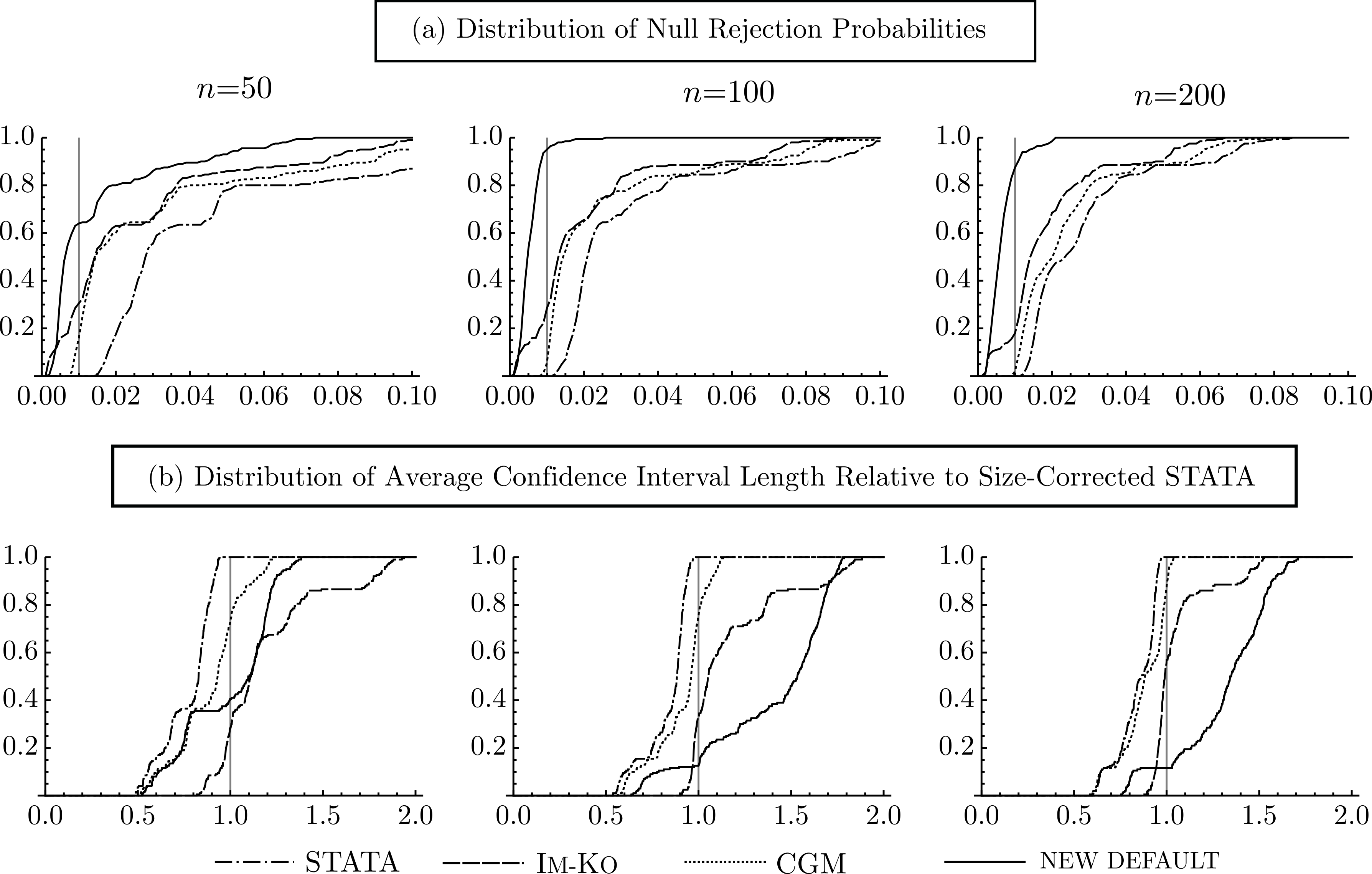}

\end{center}
\begin{small}%
\end{small}%
\end{figure}%
\bigskip

\end{small}%
\clearpage%

\newpage

\bibliographystyle{econometrica}
\bibliography{diss}

\end{document}